\setlist[enumerate]{leftmargin=*}
\DeclareRobustCommand{\qed}
{%
  \ifmmode \square%
  \else%
    \leavevmode\unskip\penalty9999 \hbox{}\nobreak\hfill%
    \quad\hbox{$\square$}%
  \fi%
}%
\let\oldcite\cite
\renewcommand{\cite}[1]{\textup{\oldcite{#1}}}
\newcommand{\calA}{\mathcal{A}}
\newcommand{\calB}{\mathcal{B}}
\newcommand{\calC}{\mathcal{C}}
\newcommand{\calE}{\mathcal{E}}
\newcommand{\calF}{\mathcal{F}}
\newcommand{\calI}{\mathcal{I}}
\newcommand{\calO}{\mathcal{O}}
\newcommand{\calS}{\mathcal{S}}
\newcommand{\calX}{\mathcal{X}}
\newcommand{\bbE}{\mathbb{E}}
\newcommand{\bbS}{\mathbb{S}}
\newcommand{\bbV}{\mathbb{V}}
\newcommand{\bbY}{\mathbb{Y}}
\newcommand{\bbZ}{\mathbb{Z}}
\newcommand{\frkX}{\mathfrak{X}}
\newcommand{\frkY}{\mathfrak{Y}}
\newcommand{\frkZ}{\mathfrak{Z}}
\newcommand{\ie}{i.\,e.\@ifnextchar{,}{}{~}}
\let\spGamma\Gamma
\renewcommand{\Gamma}{\mathrm{\spGamma}}
\newcommand{\uparr}{^{\raisebox{0.1ex}{$\scriptscriptstyle \uparrow$}}}
\title
{%
    Computing the Union Join and Subset Graph of Acyclic Hypergraphs in Subquadratic Time%
}
\titlerunning
{%
    Union Join and Subset Graph for Acyclic Hypergraphs%
}
\author
{%
    Arne Leitert
}
\institute{}
\DeclareMathOperator{\2Sec}{2Sec}
\begin{document}

\maketitle

\begin{abstract}
We investigate the two problems of computing the union join graph as well as computing the subset graph for acyclic hypergraphs and their subclasses.
In the \emph{union join graph}~$G$ of an acyclic hypergraph~$H$, each vertex of~$G$ represents a hyperedge of~$H$ and two vertices of~$G$ are adjacent if there exits a join tree~$T$ for~$H$ such that the corresponding hyperedges are adjacent in~$T$.
The \emph{subset graph} of a hypergraph~$H$ is a directed graph where each vertex represents a hyperedge of~$H$ and there is a directed edge from a vertex~$u$ to a vertex~$v$ if the hyperedge corresponding to~$u$ is a subset of the hyperedge corresponding to~$v$.

For a given hypergraph~$H = (V, \calE)$, let $n = |V|$, $m = |\calE|$, and $N = \sum_{E \in \calE} |E|$.
We show that, if the Strong Exponential Time Hypothesis is true, both problems cannot be solved in $\calO \bigl( N^{2 - \varepsilon} \bigr)$ time for $\alpha$-acyclic hypergraphs and any constant $\varepsilon > 0$, even if the created graph is sparse.
Additionally, we present algorithms that solve both problems in $\calO \bigl( N^2 / \log N + |G| \bigr)$ time for $\alpha$-acyclic hypergraphs, in $\calO \bigl( N \log (n + m) + |G| \bigr)$ time for $\beta$-acyclic hypergaphs, and in $\calO \bigl( N + |G| \bigr)$ time for $\gamma$-acyclic hypergraphs as well as for interval hypergraphs, where $|G|$ is the size of the computed graph.
\end{abstract}

\section{Introduction}

A \emph{hypergraph~\( H = (V, \calE) \)} is a generalisation of a graph in which each edge~$E \in \calE$, called \emph{hyperedge}, can contain an arbitrary positive number of vertices from~$V$.
One may also see a hypergraph~$H$ as a family~$\calE$ of subsets of some set~$V$.
Indeed, we say that the family~$\calF$ of sets \emph{forms} the hypergraph~$H = (V, \calE)$ if $V = \bigcup_{S \in \calF} S$ and $\calE = \calF$.
We use $n = |V|$, $m = |\calE|$, and $N = \sum_{E \in \calE} |E|$ to respectively denote the cardinality of the vertex set, the cardinality of the hyperedge set, and the total size of all hyperedges of~$H$.

\subsection{Acyclic Hypergraphs}

A tree~$T$ is called a \emph{join tree} for~$H$ if the hyperedges of~$H$ are the nodes of~$T$ and, for each vertex~$v \in V$, the hyperedges containing~$v$ induce a subtree of~$T$.
That is, if $v \in E_i \cap E_j$, then $v$ is contained in each hyperedge (\ie, node of~$T$) on the path from $E_i$ to~$E_j$ in~$T$.
A hypergraph is \emph{acyclic} if it admits a join tree.
There is a linear-time algorithm which determines if a given hypergraph is acyclic and, in that case, constructs a corresponding join tree for it~\cite{TarjanYannak1984}.

Acyclic hypergraphs have various applications.
They are, for example, a desired structure when designing relational databases~\cite{BeeFagMaiYan1983}.
There is also a close relation between acyclic hypergraphs and chordal as well as dually chordal graphs.
Namely, a graph is chordal if and only if its maximal cliques form an acyclic hypergraph~\cite{Gavril1974}, and a graph is dually chordal if and only if its closed neighbourhoods form an acyclic hypergraph~\cite{BraDraCheVol1998}.

Tree-decompositions are another application.
The idea is to decompose a graph~$G = (V, E)$ into multiple induced subgraphs, usually called \emph{bags}, where each vertex can be in multiple bags.
The set of bags~$\calB$ forms a tree~$T$ in such a way that the following requirements are fulfilled:
Each vertex is in at least one bag, each edge is in at least one bag, and $T$ is a join tree for the hypergraph~$(V, \calB)$.
Usually tree-decompositions are considered with additional restrictions.
The most known is called \emph{tree-width}; it limits the maximum cardinality of each bag.
For a graph class with bounded tree-width, many NP-complete problems can be solved in polynomial or even linear time.
Alternatively, one may limit the distances between vertices inside a bag.
Such a tree-decomposition can be used, for example, for constructing tree-spanners~\cite{DouDraGavYan2007,DraganKohler2014} and efficient routing schemes~\cite{Dourisboure2005}.

An inclusion-maximal subset of vertices of a graph~$G$ is called an \emph{atom} if it induces a connected subgraph of~$G$ without a clique separator.
It is known that the atoms of a graph form an acyclic hypergraph~\cite{Leimer1993}.
The corresponding join tree is then called \emph{atom tree}.

The most general acyclic hypergraphs are called \emph{\( \alpha \)\nobreakdash-acyclic} (\ie., each acyclic hypergraph is $\alpha$-acyclic).
They are closely related to chordal graphs and to dually chordal graphs.
Subclasses of $\alpha$-acyclic hypergraphs are \emph{\( \beta \)-acyclic} hypergraphs which are closely related to strongly chordal graphs and \emph{\( \gamma \)\nobreakdash-acyclic} hypergraphs which are closely related to ptolemaic graphs (graphs that are chordal and distance-hereditary).
We also consider \emph{interval} hypergraphs.
These are acyclic hypergrpahs for which one of their join trees forms a path.
As the name suggests, they are closely related to interval graphs.
We give formal definitions and more information about each subclass later in their respective sections.

A class of hypergraphs closely related to acyclic hypergraphs are so-called \emph{hypertrees}.
These hypergraphs are defined in the same way as acyclic hypergraphs, except that the roles of vertices and hyperedges are exchanged.
That is, a hypergraph is a hypertee if its vertices admit a tree~$T$ such that each hyperedge induces a subtree of~$T$.
The hypergraph resulting from exchanging the roles of vertices and hyperedges is called the \emph{dual} hypergraph.
(See Section~\ref{sec:preliminaries} for a more formal definition.)
Subsequently, a hypergraph is a hypertree if and only if it is the dual of an acyclic hypergraph.

Figure~\ref{fig:hierarchy} shows the hierarchy of acyclic hypergraphs.
See \textsc{Brandstädt} and \textsc{Dragan}~\cite{BrandsDragan2015} for a summary of known properties of acyclic hypergraphs as well as their relations to various graph classes.

\begin{figure}
    \centering
    \begin{tikzpicture}
[
    every node/.style =
    {
        anchor = base,
        fill = white,
        inner sep = 5pt,
        draw,
    },
    yscale = 1.33,
    xscale = 1.25,
]

\node (aacy) at (-1,  0) {$\alpha$-acyclic};
\node (hypt) at ( 1,  0) {hypertree};
\node (acht) at ( 0, -1) {$\alpha$-acyclic $\cap$ hypertree};
\node (bacy) at ( 0, -2) {$\beta$-acyclic};
\node (gacy) at (-1, -3) {$\gamma$-acyclic};
\node (intv) at ( 1, -3) {interval};

\begin{pgfonlayer}{background}
\begin{scope}
[
    -latex,
    shorten > = 1pt
]

\draw (gacy.center) -- (bacy);
\draw (intv.center) -- (bacy);
\draw (bacy.center) -- (acht.south);
\draw (acht.center) -- (aacy.south);
\draw (acht.center) -- (hypt.south);

\end{scope}
\end{pgfonlayer}
\end{tikzpicture}
    \caption
    {%
        Hierarchy of acyclic hypergraphs.
        An edge from class~$X$ to class~$Y$ states that $X$ is a proper subset of~$Y$.
    }
    \label{fig:hierarchy}
\end{figure}
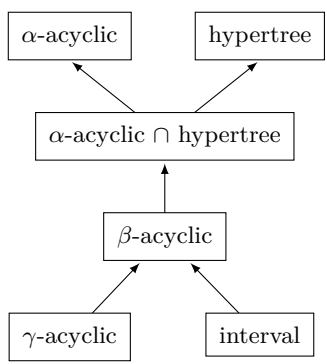

\subsection{Union Join Graph}

Note that the join tree of an acyclic hypergraph is not always unique.
For example, each tree with $n$~nodes is a valid join tree for the hypergraph formed by~$\bigl \{ \{ 0, 1 \}, \{ 0, 2 \}, \ldots, \{ 0, n \}  \bigr \}$.
The \emph{union join graph}~$G$ of a given acyclic hypergraph~$H$ is the union of all its join trees.
That is, each vertex of~$G$ represents a hyperedge of~$H$ and two vertices of~$G$ are adjacent if there exits a join tree~$T$ for~$H$ such that the corresponding hyperedges are adjacent in~$T$.
The union join graph of a hypergraph~$H$ may also be called \emph{clique graph} if $H$ represents the maximal cliques of a chordal graph~\cite{GaliHabiPaul1995,HabibStacho2012}, or \emph{atom graph} if $H$ represents the atoms of some graph~\cite{KabPinLelBer2007}.
In~\cite{BerrySimone2016}, \textsc{Berry} and \textsc{Simonet} present algorithms which compute the union join graph of an acyclic hypergraph in $\calO(Nm)$ time.

\subsection{Subset Graph}

The \emph{subset graph} of a hypergraph~$H$ is a directed graph~$G$ where each vertex represents a hyperedge of~$H$ and there is a directed edge from a vertex~$u$ to a vertex~$v$ if the hyperedge corresponding to~$u$ is a subset of the hyperedge corresponding to~$v$.
\textsc{Pritchard} presents an algorithm in~\cite{Pritchard1999} that computes the subset graph for a given hypergraph in $\calO \bigl( N^2 / \log N \bigr)$ time.
They also show that any subset graph has at most $\calO \bigl( N^2 / \log^2 N  \bigr)$ many edges.
There are various publications that present algorithms for special cases and different computational models; see for example \cite{Elmasry2009,Pritchard1991} and the work cited therein.

The \emph{Strong Exponential Time Hypothesis}, \emph{SETH} for short, states that there is no algorithm that solves the Boolean satisfiability problem (without limitation on clause size) for some constant $\varepsilon > 0$ in $\calO \bigl( (2 - \varepsilon)^n \bigr)$ time where $n$ is the number of variables in the given instance.
A function~$f(n)$ is called \emph{truly subquadratic} if $f(n) \in \calO \bigl( n^{2 - \varepsilon} \bigr)$ for some constant~$\varepsilon > 0$.
\textsc{Borassi}~et~al.\,\cite{BoraCresHabi2016} show that, if SETH holds, then there is no algorithm to compute the subset graph of an arbitrary hypergraph in truly subquadratic time, even if the output is sparse.
Note that the results in \cite{BoraCresHabi2016} and~\cite{Pritchard1999} are not conflicting, since $N^{2 - \varepsilon} \in o \bigl( N^2 / \log N \bigr)$.

\subsection{Our Contribution}

In this paper, we investigate the two problems of computing the union join graph as well as computing the subset graph for acyclic hypergraphs and their subclasses.
We show in Section~\ref{sec:alphaAcyclic} that there is a close relation between both problems by presenting reductions in both directions.
It then follows that the result by \textsc{Borassi}~et~al.\ still holds when restricted to $\alpha$-acyclic hypergraphs and also applies to computing a union join graph.
We then develop efficient algorithms to solve both problems for acyclic hypergraphs and their subclasses.
In particular, we show that, if $|G|$ denotes the size of the computed graph~$G$, then both problems can be solved in $\calO \bigl( N^2 / \log N + |G| \bigr)$ time for $\alpha$-acyclic hypergraphs (Section~\ref{sec:alphaAcyclic}), in $\calO \bigl( N \log (n + m) + |G| \bigr)$ time for $\beta$-acyclic hypergaphs (Section~\ref{sec:betaAcyclic}), and in $\calO \bigl( N + |G| \bigr)$ time for $\gamma$-acyclic hypergraphs (Section~\ref{sec:gammaAcyclic}) as well as for interval hypergraphs (Section~\ref{sec:interval}).

\section{Preliminaries}
\label{sec:preliminaries}

Two graphs $G = (V, E)$ and $G' = (V', E')$ are \emph{isomorphic} if there is a bijective function~$f \colon V \rightarrow V'$ such that $uv \in E$ if and only if $f(u)f(v) \in E'$.
For simplicity, we write $G = G'$ if they are isomorphic.

Let $H = (V, \calE)$ be a hypergraph.
The \emph{incidence graph~\( \calI(H) = \bigl( U_V \cup U_\calE, E_\calI \bigr) \)} of~$H$ is a bipartite graph were $U_V$ represents the vertices of~$H$, $U_\calE$ represents the hyperedges of~$H$, and there is an edge between two vertices $u_v \in U_V$ and $u_E \in U_\calE$ if the corresponding vertex~$v$ (of~$H$) is in the corresponding hyperedge~$E$.
That is, $U_V = \{ \, u_v \mid v \in V \, \}$, $U_\calE = \{ \, u_E \mid E \in \calE \, \}$, and $E_\calI = \{ \, u_vu_E \mid v \in E \, \}$.
Note that $\bigl| E_\calI \bigr| = N$.
If not stated or constructed otherwise, the incidence graphs of all hypergraphs occurring in this paper are connected, finite, undirected, and without multiple edges.
Additionally, whenever a hypergraph is given, it is given as its incidence graph; hence, the input size is in~$\Theta(N)$.
We say two hyperedges of~$H$ are \emph{distinct} if they are represented by two different vertices in~$\calI(H)$, even if both hyperedges contain the same vertices.

Let $\bigl( U_V \cup U_\calE, E_\calI \bigr)$ be the incidence graph of some hypergraph~$H = (V, \calE)$.
One can then exchange the roles of $U_V$ and $U_\calE$ to interpret $U_\calE$ as vertices and $U_V$ as hyperedges.
We call the resulting hypergraph the \emph{dual} hypergraph of~$H$ and denote it as~$H^*$.
Observe that, by definition, $(H^*)^* = H$.

The \emph{2-section graph~\( \2Sec(H) \)} of~$H$ is the graph with the vertex set~$V$ where two vertices $u$ and~$v$ are adjacent if there is a hyperedge~$E \in \calE$ with $u, v \in E$.
The \emph{line graph~\( L(H) \)} of~$H$ is the intersection graph of its hyperedges.
That is, $L(H) = (\calE, \calE_L)$ with $\calE_L = \{ \, E_iE_j \mid E_i, E_j \in \calE; E_i \cap E_j \neq \emptyset \, \}$.
It directly follows from these definitions that $\2Sec(H) = L(H^*)$.

A sequence $\langle v_1, v_2, \ldots, v_k \rangle$ of vertices of~$H$ \emph{forms a path} in~$H$ if, for each~$i$ with $1 \leq i < k$, $H$ contains a hyperedge~$E$ with~$v_i, v_{i + 1} \in E$.
Let $X$, $Y$, and~$Z$ be sets of vertices of~$H$.
$X$ \emph{separates} $Y$ form~$Z$ if $X \neq \emptyset$ and each sequence of vertices that forms a path from $Y$ to~$Z$ in~$H$ contains a vertex from~$X$.

Let $T$ be the join tree of some acyclic hypergraph~$H$ and let $E_i$ and~$E_j$ be two hyperedges of~$H$ which are adjacent in~$T$.
We then call the set~$S = E_i \cap E_j$ a \emph{separator} of~$H$ with respect to~$T$.
If $T$ is rooted and $E_i$ is the parent of~$E_j$, we call $S\uparr(E_j) := E_i \cap E_j$ the \emph{up-separator} of~$E_j$.
Note that each separator corresponds to an edge of~$T$ and vice versa.
We call the hypergraph formed by the set of all separators of~$H$ the \emph{separator hypergraph~\( \calS(H) \)} for~$H$ with respect to~$T$.
It follows from properties~\ref{item:joinTreeEdge} and~\ref{item:intersectionOnPath} of Lemma~\ref{lem:unionJoinProperties} (see Section~\ref{sec:alphaAcyclic}) that $\calS(H)$ is always the same for a given~$H$, independent of the used join tree.

\section{$\alpha$-Acyclic Hypergraphs}
\label{sec:alphaAcyclic}

In this section, we investigate the problems of computing a union join graph and computing a subset graph for the most general case of acyclic hypergraphs.
We first show that computing these graphs cannot be done in truly subqadratic time if the SETH is true.
For that, we use a problem called \emph{Sperner Family} problem.
It asks whether a family of sets contains two sets $S$ and~$S'$ such that $S \subseteq S'$.
If the SETH is true, then there is no algorithm that solves it truly subquadratic time~\cite{BoraCresHabi2016}.
Afterwards, we give an algorithm that allows to quickly compute the union join graph if a fast algorithm for the subset graph problem is given.
Lastly, we give some additional notes on the Sperner Family problem and its generalisation.

\subsection{Hardness Results}

Let $\calF = \{ S_1, S_2, \ldots, S_m \}$ be a family of sets.
We create an acyclic hypergraph~$H$ from~$\calF$ as follows.
Create a new vertex~$u$ (\ie, $u$ is not contained in any set~$S_i$) and, for each set~$S_i$, create a hyperedge~$E_i = S_i \cup \{ u \}$.
Additionally, create a hyperedge~$\calS$ which is the union of all hyperedges~$E_i$.
Formally, we have that $H = (V, \calE)$ with $V = \calS$ and $\calE = \big \{ \, E_i \bigm| S_i \in \calF \, \big \} \cup \{ \calS \}$.
One can create a join tree~$T$ for~$H$ by starting with $\calS$ and then making each hyperedge~$E_i$ adjacent to it.
Thus, $H$ is acyclic.
Note that one can create $H$ and~$T$ from~$\calF$ in linear time.

For the remainder of this subsection, assume that we are given a family~$\calF$, a hypergraph~$H$, and a corresponding join tree~$T$ for~$H$ as defined above.
Our results in this subsection are based on the following observation.

\begin{lemma}
\label{lem:subsetIffTreeEdge}
\( \calF \) contains two distinct sets \( S_i \) and~\( S_j \) with \( S_i \subseteq S_j \) if and only if there is a join tree for \( H \) that contains the edge \( E_iE_j \).
\end{lemma}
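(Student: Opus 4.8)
The plan is to prove the two implications directly, exploiting the very simple structure of~$H$. Since the union hyperedge~$\calS$ contains every vertex and the auxiliary vertex~$u$ lies in every hyperedge, the \emph{canonical} join tree is the star with centre~$\calS$ and leaves $E_1, \ldots, E_m$ already exhibited above. I would use this star both as the starting point for constructing a join tree and as the reference object for reasoning about an arbitrary one.

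For the forward direction, assume $S_i \subseteq S_j$, so that also $E_i = S_i \cup \{u\} \subseteq E_j$. I would build a join tree~$T'$ from the canonical star by detaching the leaf~$E_i$ from~$\calS$ and reattaching it as a child of~$E_j$. It then remains to verify the defining property, \ie that for every vertex~$w$ the hyperedges containing~$w$ induce a subtree of~$T'$. For $w = u$ this is the whole tree; for $w \neq u$ the hyperedges containing~$w$ are~$\calS$ together with those~$E_k$ with $w \in S_k$, all of which are children of~$\calS$ except possibly~$E_i$. The only nontrivial case is $w \in S_i$, and here $S_i \subseteq S_j$ guarantees $w \in S_j$, so the whole path $E_i, E_j, \calS$ lies among hyperedges containing~$w$ and connectivity is preserved. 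Hence $T'$ is a join tree containing the edge~$E_iE_j$.

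For the converse, suppose some join tree~$T''$ contains the edge~$E_iE_j$. Deleting this edge splits~$T''$ into two subtrees, one containing~$E_i$ and one containing~$E_j$, and the union hyperedge~$\calS$ lies in exactly one of them. If $\calS$ is on the $E_j$ side, I would argue as follows: for any $v \in S_i$ we have $v \in E_i$ and $v \in \calS$, so the set of hyperedges containing~$v$, which must be connected in~$T''$, has members on both sides of the deleted edge. Consequently the path from~$E_i$ to~$\calS$ in~$T''$, which necessarily uses the edge~$E_iE_j$, consists only of hyperedges containing~$v$; in particular $E_j$ lies on this path, so $v \in E_j$ and, as $v \neq u$, $v \in S_j$. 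This yields $S_i \subseteq S_j$, and the symmetric case with~$\calS$ on the $E_i$ side yields $S_j \subseteq S_i$. Since the edge~$E_iE_j$ is undirected, relabelling produces two distinct sets with the stated containment.

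The forward construction and the accompanying subtree checks are routine. The crux of the argument is the converse: the key step is to locate~$\calS$ relative to the tree edge~$E_iE_j$ and then invoke the path-connectivity property of join trees to force one of~$S_i, S_j$ to contain the other.
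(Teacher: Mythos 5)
Your proposal is correct and follows essentially the same route as the paper: the forward direction rewires the star join tree by reattaching $E_i$ under $E_j$, and the converse locates $\calS$ relative to the edge $E_iE_j$ and uses the path-connectivity property of join trees (the paper phrases this as ``let $E_j$ be closer to~$\calS$'' rather than deleting the edge, but the argument is the same). Your write-up is somewhat more explicit in verifying the subtree condition and in handling the relabelling, but no new ideas are involved.
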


\begin{proof}
First, assume that $\calF$ contains two distinct sets $S_i$ and~$S_j$ with $S_i \subseteq S_j$.
In that case, we can create a new join tree~$T'$ as follows.
Remove the edge~$E_i\calS$ from~$T$ and make $E_i$ adjacent to~$E_j$ instead.
Since $S_i \subseteq S_j$, each element~$x \in E_i \cap \calS$ is also contained in~$E_j$.
Thus, $T'$ is a join tree for $H$ and contains the edge~$E_iE_j$.

Next, assume that there is a join tree~$T'$ for $H$ with the edge~$E_iE_j$.
Without loss of generality, let $E_j$ be closer to~$\calS$ in~$T'$ than $E_i$.
Recall that $E_i \subseteq \calS$.
Therefore, by properties of join trees, each vertex in~$E_i$ is also in~$E_j$.
It then directly follows from the construction of~$H$ that $S_i \subseteq S_j$.
\qed
\end{proof}

We use the Sperner Family problem to show that there is no truly subquadratic-time algorithm to compute the union join graph of a given acyclic hypergraph.
To do so, we first show the following.

\begin{lemma}
\label{lem:hardnessUniqueJoinTree}
If the SETH is true, then there is no algorithm which decides in \( \calO \big( N^{2 - \varepsilon} \big) \) time whether or not a given acyclic hypergraph has a unique join tree.
\end{lemma}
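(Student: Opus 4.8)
The plan is to reduce the Sperner Family problem to deciding whether $H$ has a unique join tree, reusing the construction of $H$ and its canonical star join tree~$T$ (the tree with centre~$\calS$ and every $E_i$ as a leaf) introduced above. The heart of the argument is the equivalence: $H$ has a unique join tree if and only if $\calF$ is a Sperner family, i.e.\ $\calF$ contains no two distinct sets $S_i$ and~$S_j$ with $S_i \subseteq S_j$. Since the construction takes linear time and, as I will verify, satisfies $N = \Theta \bigl( \sum_{S \in \calF} |S| \bigr)$, an algorithm deciding uniqueness in $\calO \bigl( N^{2 - \varepsilon} \bigr)$ time would solve the Sperner Family problem (by simply negating its answer) within the same bound, contradicting the SETH-based hardness of the Sperner Family problem from~\cite{BoraCresHabi2016}.

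To establish the equivalence I would argue both directions through Lemma~\ref{lem:subsetIffTreeEdge}. Suppose first that $\calF$ is a Sperner family. The nodes of any join tree are the $m + 1$ hyperedges $E_1, \ldots, E_m, \calS$, so its edges are drawn from the candidate pairs $E_iE_j$ and $E_i\calS$. Applying Lemma~\ref{lem:subsetIffTreeEdge} in both orientations (using that $E_iE_j$ is an undirected edge), no edge $E_iE_j$ with $i \neq j$ can occur in any join tree, since its presence would force $S_i \subseteq S_j$ or $S_j \subseteq S_i$. Hence every edge of a join tree must have the form $E_i\calS$; as there are exactly $m$ such candidate edges and a tree on $m + 1$ nodes has exactly $m$ edges, all of them are forced to be used, and the join tree must be the star~$T$. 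Thus the join tree is unique.

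Conversely, if $\calF$ is not a Sperner family, then there exist distinct $S_i \subseteq S_j$, and Lemma~\ref{lem:subsetIffTreeEdge} yields a join tree containing the edge~$E_iE_j$; since the star~$T$ contains no edge of that form, $H$ admits at least two distinct join trees and uniqueness fails. The routine remaining step is the size accounting: $|E_i| = |S_i| + 1$ and $|\calS| \le 1 + \sum_i |S_i|$, so $N$ stays linear in the size of the Sperner instance, making the reduction size-preserving. I expect the only subtlety to lie in the forward direction, namely in justifying that forbidding all $E_iE_j$ edges leaves no freedom whatsoever in the join tree. This does not rely on any structural property of join trees beyond the edge bound in Lemma~\ref{lem:subsetIffTreeEdge}; it rests on the exact edge count of a spanning tree, and once that counting observation is pinned down, the uniqueness claim follows immediately.
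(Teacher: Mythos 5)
Your proposal is correct and follows essentially the same route as the paper: reduce from the Sperner Family problem via the construction of $H$ and its star join tree~$T$, and use Lemma~\ref{lem:subsetIffTreeEdge} to show that $T$ is the unique join tree if and only if $\calF$ contains no two distinct sets $S_i \subseteq S_j$. The only cosmetic difference is that you make the forward direction explicit via the spanning-tree edge count, whereas the paper argues the contrapositive by noting that any join tree other than~$T$ must have some $E_i$ not adjacent to~$\calS$; both rest on the same observation.
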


\begin{proof}
Recall that we can create a join tree~$T$ for~$H$ by making each hyperedge~$E_i$ adjacent to the hyperedge~$\calS$.
To prove Lemma~\ref{lem:hardnessUniqueJoinTree}, we show that $\calF$ contains two distinct sets $S_i$ and~$S_j$ with $S_i \subseteq S_j$ if and only if $T$ is not a unique join tree for~$H$.

First, assume that $\calF$ contains two such sets $S_i$ and~$S_j$.
In that case, Lemma~\ref{lem:subsetIffTreeEdge} implies that there is a join tree~$T'$ for $H$ with the edge~$E_iE_j$.
Since $E_iE_j$ is not an edge in~$T$, $T$ is not unique.
Next, assume that $T$ is not unique.
Then, there is a join tree~$T'$ and a hyperedge~$E_i$ such that $E_i$ is not adjacent to~$\calS$ in~$T'$.
Hence, $E_i$ is adjacent to some hyperedge~$E_j$ that is closer to~$\calS$ in~$T'$ than~$E_i$.
Since $E_i \subseteq \calS$, properties of join trees imply that $E_i \subseteq E_j$.
Subsequently, due to Lemma~\ref{lem:subsetIffTreeEdge}, $S_i \subseteq S_j$.

It follows that a truly subquadratic-time algorithm which determines if an acyclic hypergraph has a unique join tree would imply an equally fast algorithm to solve the Sperner Family problem for any family of sets.
\qed
\end{proof}

Note that, by definition of a union join graph, $H$ has a unique join tree if and only if the union join graph of~$H$ is a tree.
Therefore, we get the following.

\begin{theorem}
\label{theo:hardnessUnionJoinGraph}
If the SETH is true, then there is no algorithm which constructs the union join graph of a given acyclic hypergraph in \( \calO \big( N^{2 - \varepsilon} \big) \) time, even if that graph is sparse.
\end{theorem}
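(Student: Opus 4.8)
The plan is to reduce directly from the Sperner Family problem by reusing the construction and the two lemmas already established. The key observation to exploit is the remark made immediately before the theorem statement: by definition of the union join graph as the union of all join trees, the hypergraph~$H$ has a unique join tree if and only if its union join graph is a tree. I would therefore argue that an algorithm producing the union join graph quickly would also decide uniqueness quickly, contradicting Lemma~\ref{lem:hardnessUniqueJoinTree}.

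Concretely, suppose for contradiction that there were an algorithm~$\mathcal{A}$ that constructs the union join graph~$G$ of any acyclic hypergraph in $\calO \bigl( N^{2 - \varepsilon} \bigr)$ time for some constant $\varepsilon > 0$. Given a family~$\calF$, I would first build the hypergraph~$H$ and its join tree~$T$ in linear time, exactly as in the construction preceding Lemma~\ref{lem:subsetIffTreeEdge}; note that $N$ for this $H$ is linear in the total size of~$\calF$, so an $\calO(N^{2-\varepsilon})$ bound transfers to a truly subquadratic bound in the input size of the Sperner instance. Then I would run~$\mathcal{A}$ on~$H$ to obtain~$G$, and simply test whether $G$ is a tree (equivalently, whether $G$ has exactly $m$ edges, since $H$ has $m + 1$ hyperedges; this check takes time linear in $|G|$). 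By the preceding remark and Lemma~\ref{lem:hardnessUniqueJoinTree}, $G$ is a tree if and only if $T$ is the unique join tree for~$H$, which in turn holds if and only if $\calF$ is a Sperner family. This would solve the Sperner Family problem in truly subquadratic time, contradicting the SETH via the result of \textsc{Borassi}~et~al.

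The one point that genuinely needs care is the ``even if that graph is sparse'' clause, which is what makes the statement nontrivial: a priori, the union join graph of the constructed~$H$ could be dense, in which case the time to merely write down~$G$ would already be quadratic and the reduction would say nothing. The hard part, then, is to verify that on every instance arising from this construction the union join graph~$G$ has only $\calO(N)$ edges, independently of whether $\calF$ is Sperner. I expect this to follow from the structure forced by the universal vertex~$u$: every hyperedge~$E_i$ contains~$u$ and is a subset of~$\calS$, so by Lemma~\ref{lem:subsetIffTreeEdge} an edge~$E_iE_j$ can appear in some join tree only when $S_i \subseteq S_j$, and the containment relation among the $S_i$ is itself sparse enough (one can always route each $E_i$ through~$\calS$) that the total number of union-join edges stays linear; I would make this quantitative to conclude that $|G| = \calO(N)$ and hence that the uniqueness test runs within the claimed budget, completing the contradiction.
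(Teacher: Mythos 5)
Your core reduction is exactly the paper's: build $H$ from the Sperner instance, invoke the equivalence ``$H$ has a unique join tree iff its union join graph is a tree,'' and conclude via Lemma~\ref{lem:hardnessUniqueJoinTree}. That part is fine, and without the sparsity clause the theorem already follows.

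The genuine gap is in the step you yourself flag as ``the hard part.'' The claim that every instance arising from the construction has $|G| = \calO(N)$ is false. Take $\calF$ with $S_1 = S_2 = \cdots = S_m = \{1\}$ (or any nested chain of small total size): then every pair $E_i, E_j$ satisfies $S_i \subseteq S_j$, so by Lemma~\ref{lem:subsetIffTreeEdge} every pair is adjacent in some join tree, and $G$ is a complete graph on $m+1$ vertices with $\Theta(m^2) = \Theta(N^2)$ edges while $N = \Theta(m)$. (The paper makes essentially this observation at the end of Section~\ref{sec:alphaSubsetG} to show that union join graphs, unlike subset graphs, can have $\Theta(N^2)$ edges.) So no quantitative version of your sparsity claim can be made to work, and the containment relation among the $S_i$ is not ``sparse enough'' in general. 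The correct way to handle the ``even if that graph is sparse'' clause is different: on a \emph{no}-instance of Sperner Family the output $G$ \emph{is} a tree, hence sparse, so an algorithm that is guaranteed fast on sparse outputs must terminate within $c \cdot N^{2-\varepsilon}$ steps on every no-instance. Run the hypothetical algorithm with that time budget; if it halts, test whether the output is a tree, and if it exceeds the budget, the output cannot be sparse, hence cannot be a tree, and you answer \emph{yes}. Either way you decide Sperner Family in truly subquadratic time. Replacing your false sparsity lemma with this timeout argument repairs the proof; the rest stands.
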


We now show that computing the subset graph of an acyclic hypergraph is as hard as computing the subset graph for a general family of sets.

\begin{theorem}
\label{theo:hardnessSubsetGraph}
If the SETH is true, then there is no algorithm which constructs the subset graph of a given acyclic hypergraph in truly subquadratic time.
\end{theorem}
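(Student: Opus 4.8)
The plan is to reduce the Sperner Family problem to the subset-graph problem on the acyclic hypergraph~$H$ constructed at the start of this subsection, so that a truly subquadratic algorithm for the latter would yield one for the former, contradicting the SETH-based hardness of Sperner Family established in~\cite{BoraCresHabi2016}. Most of the reduction is already in place: from a family~$\calF$ we can build~$H$ (and a join tree for it) in linear time, so it only remains to read off the answer to the Sperner Family instance from the subset graph of~$H$.

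The key observation I would exploit is that the auxiliary vertex~$u$ makes the subset structure of~$\calF$ directly visible in the subset graph, without any need to pass through the join-tree characterisation of Lemma~\ref{lem:subsetIffTreeEdge}. For any two distinct indices $i$ and~$j$, since $E_i = S_i \cup \{ u \}$ and $E_j = S_j \cup \{ u \}$ both contain~$u$, we have $E_i \subseteq E_j$ if and only if $S_i \subseteq S_j$. Hence the subgraph of the subset graph of~$H$ induced by the original hyperedges $E_1, \ldots, E_m$ is exactly the subset graph of~$\calF$. In particular, $\calF$ contains two distinct sets with $S_i \subseteq S_j$ if and only if the subset graph of~$H$ contains at least one directed edge between two of the hyperedges~$E_i$.

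To complete the argument I would proceed as follows. Suppose there were an algorithm computing the subset graph of any acyclic hypergraph in $\calO \bigl( N^{2 - \varepsilon} \bigr)$ time for some constant $\varepsilon > 0$. Given an instance~$\calF$ of Sperner Family, construct~$H$ in linear time; since $u$ adds only one element per hyperedge and the single extra hyperedge~$\calS$ has size at most $1 + \sum_i |S_i|$, the total size~$N$ of~$H$ is linear in the input size $\sum_i |S_i|$ of the Sperner instance. Running the assumed algorithm therefore costs $\calO \bigl( (\sum_i |S_i|)^{2 - \varepsilon} \bigr)$ time, and scanning the returned graph for a directed edge $E_i E_j$ with $i \neq j$ stays within the same bound. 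By the observation above, answering \textbf{yes} exactly when such an edge exists correctly decides the Sperner Family instance, giving a truly subquadratic algorithm for it and contradicting the SETH.

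The only points needing care are bookkeeping ones rather than a genuine obstacle. The added hyperedge~$\calS$ is a superset of every~$E_i$, so the subset graph also contains the edges $E_i \calS$ (and possibly $\calS E_i$, when some~$S_i$ equals the union of all sets); these carry no information about~$\calF$ and must be filtered out. This is immediate because we know which vertex represents~$\calS$, so we simply restrict attention to edges between original hyperedges. The other point is confirming that the construction is genuinely linear-time, \ie~that $N \in \calO \bigl( \sum_i |S_i| \bigr)$, so that the quadratic lower bound transfers without loss; this follows directly from the construction of~$H$.
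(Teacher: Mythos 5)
Your proposal is correct and takes essentially the same route as the paper: both arguments rest on the observation that, in the linear-time-constructible hypergraph~$H$, $E_i \subseteq E_j$ if and only if $S_i \subseteq S_j$, so the subset graph of~$H$ restricted to the hyperedges $E_1, \ldots, E_m$ is exactly the subset graph of~$\calF$. The only cosmetic difference is the source problem: the paper reduces from constructing the subset graph of a general family (recovering $G_\calF$ from~$G$ by deleting the vertex for~$\calS$), whereas you reduce from the Sperner Family decision problem by scanning for a single edge among the~$E_i$; both sources are SETH-hard by~\cite{BoraCresHabi2016}, so the conclusion is the same.
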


\begin{proof}
Let $G$ be the subset graph for~$H$ and $G_\calF$ be the subset graph for~$\calF$.
Since, by construction of~$H$, $E_i \subseteq E_j$ if and only if $S_i \subseteq S_j$, $G$ contains the edge~$(E_i, E_j)$ if and only if $G_\calF$ contains the edge~$(S_i, S_j)$.
We can therefore construct $G_\calF$ from~$G$ by simply removing the vertex representing~$\calS$ from~$G$ (and its incident edges).

Recall that we can construct~$H$ from~$\calF$ in linear time.
Therefore, a truly subquadratic-time algorithm to construct the subset graph of a given acyclic hypergraph would imply an equally fast algorithm to construct a subset graph of a given family of sets.
\qed
\end{proof}

\subsubsection{Note on Hypertrees.}

Observe that, in the hypergraph~$H$ as constructed above, each hyperedge contains the vertex~$u$.
We can therefore create a tree~$T$ by making each other vertex a leaf adjacent to~$u$.
Each hyperedge of~$H$ now induces a subtree of~$T$, \ie, $H$ is a hypertree.

It follows that Lemma~\ref{lem:hardnessUniqueJoinTree} and Theorem~\ref{theo:hardnessUnionJoinGraph} still hold if the given hypergraph is both acyclic and a hypertree.
Therefore, there is no truly subquadratic-time algorithm which, in general, computes the union join graph of such a hypergraph or determines if has a unique join tree.

\subsection{Union Join Graph via Subset Graph}
\label{sec:alphaSubsetG}

In the previous subsection, we show how to compute the subset graph using the union join graph of an acyclic hypergraph.
We now present an algorithm that computes the union join graph of a given acyclic hypergraph with the help of a subset graph.
The runtime of our algorithm then depends on the runtime required to compute that subset graph.

For the remainder of this subsection, assume that we are given an acyclic hypergraph~$H = (V, \calE)$ and let $G$ be the union join graph of~$H$ (with for us unknown edges).
Lemma~\ref{lem:unionJoinProperties} below gives various characterisations for~$G$.

\begin{lemma}
\label{lem:unionJoinProperties}
For any distinct \( E_i, E_j \in \calE \), the following are equivalent.
\begin{enumerate}[(i)]
    \item
    \label{item:unionJoinGrpahEdge}
        \( E_iE_j \) is an edge of~\( G \).
    \item
    \label{item:joinTreeEdge}
        \( H \) has a join tree with the edge~\( E_iE_j \).
    \item
    \label{item:intersectionOnPath}
        Each join tree~\( T \) of~\( H \) has an edge~\( E_i'E_j' \) on the path from \( E_i \) to~\( E_j \) in~\( T \) such that \( E_i \cap E_j = E_i' \cap E_j' \).
    \item
    \label{item:edgesOnBothSideOfSep}
        Each join tree~\( T \) of~\( H \) has a separator~\( S \) on the path~\( P_{ij} \) from \( E_i \) to~\( E_j \) in~\( T \) with \( S \subseteq S_i \) and \( S \subseteq S_j \) where \( S_i \) and~\( S_i \) are the separators in~\( P_{ij} \) which are respectively closest to \( E_i \) and~\( E_j \).
    \item
    \label{item:edgeSeparates}
        \( E_i \cap E_j \) separates \( E_i \setminus E_j \) from~\( E_j \setminus E_i \).
\end{enumerate}
\end{lemma}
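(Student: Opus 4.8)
The equivalence of (i) and (ii) is immediate from the definition of the union join graph, so the real content is connecting (ii)--(v). The plan is to prove the cycle (ii) $\Rightarrow$ (v) $\Rightarrow$ (iii) $\Rightarrow$ (ii) and, separately, the equivalence (iii) $\Leftrightarrow$ (iv). Throughout I would lean on one elementary consequence of the subtree (running-intersection) property: if $F_0 = E_i, F_1, \ldots, F_k = E_j$ is the path from $E_i$ to $E_j$ in a join tree $T$, then every vertex of $E_i \cap E_j$ lies in all of $F_0, \ldots, F_k$, so each separator $S_\ell = F_{\ell-1} \cap F_\ell$ on the path satisfies $E_i \cap E_j \subseteq S_\ell$. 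I would also dispose of the degenerate case $E_i \subseteq E_j$ first: there (v) holds vacuously since $E_i \setminus E_j = \emptyset$, and (iii) holds because the separator $S_1 = E_i \cap F_1$ closest to $E_i$ is squeezed between $E_i \cap E_j$ and $E_i$, forcing $S_1 = E_i \cap E_j$.

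For (ii) $\Rightarrow$ (v), I would fix a join tree $T_0$ containing the edge $E_iE_j$ and delete that edge, splitting the hyperedges into the side $A$ containing $E_i$ and the side $B$ containing $E_j$. Writing $V_A$ and $V_B$ for the unions of the hyperedges on each side, the subtree property gives $V_A \cap V_B = E_i \cap E_j$: any vertex appearing on both sides has a connected subtree of occurrences spanning the deleted edge, hence lies in both $E_i$ and $E_j$. Consequently $E_i \setminus E_j \subseteq V_A \setminus V_B$ and $E_j \setminus E_i \subseteq V_B \setminus V_A$, and since every hyperedge lies wholly in $A$ or wholly in $B$, any path in $H$ from the first set to the second must contain a vertex of $V_A \cap V_B = E_i \cap E_j$. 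That is exactly (v).

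The crux is (v) $\Rightarrow$ (iii), which I would prove by contradiction on an arbitrary join tree $T$ with path $F_0 = E_i, \ldots, F_k = E_j$. If no separator on this path equalled $E_i \cap E_j$, then---since each already contains $E_i \cap E_j$---every $S_\ell$ would contain an extra vertex $w_\ell \notin E_i \cap E_j$. Picking such a $w_\ell$ for each $\ell$ together with one vertex $x \in E_i \setminus E_j$ and one $y \in E_j \setminus E_i$, the sequence $x, w_1, w_2, \ldots, w_k, y$ is a path in $H$: consecutive entries share the hyperedges $F_0, F_1, \ldots, F_k$ respectively, because $w_\ell, w_{\ell+1} \in F_\ell$. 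This path avoids $E_i \cap E_j$ entirely, contradicting (v); hence some path-separator equals $E_i \cap E_j$. For (iii) $\Rightarrow$ (ii), I would take any join tree, locate the edge $E_i'E_j'$ on the $E_i$--$E_j$ path with $E_i' \cap E_j' = E_i \cap E_j$, cut it, and reattach the two parts by the new edge $E_iE_j$; the subtree property survives because every vertex that previously used the cut edge lay in $E_i' \cap E_j' = E_i \cap E_j$, hence in both $E_i$ and $E_j$. Finally (iii) $\Leftrightarrow$ (iv) is cheap: a separator $S$ on the path with $S \subseteq S_i$ and $S \subseteq S_j$ satisfies $S \subseteq E_i \cap E_j$ (as $S_i \subseteq E_i$ and $S_j \subseteq E_j$), while every path-separator contains $E_i \cap E_j$, so such an $S$ exists if and only if some path-separator equals $E_i \cap E_j$.

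I expect the main obstacle to be the book-keeping in (v) $\Rightarrow$ (iii): one must verify that the glued sequence really is an $H$-path, that each adjacent pair genuinely shares one of the $F_\ell$ (including the boundary indices $\ell = 1$ and $\ell = k$ where $x$ and $y$ enter through $F_0$ and $F_k$), and that all of its vertices, endpoints included, avoid $E_i \cap E_j$. The remaining implications are short once the running-intersection fact and the degenerate $E_i \subseteq E_j$ case are handled up front.
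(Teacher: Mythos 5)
Your proposal is correct and follows essentially the same route as the paper: the cycle (ii) $\Rightarrow$ (v) $\Rightarrow$ (iii) $\Rightarrow$ (ii) via the running-intersection fact and the chained-separator-vertices path, plus the separate containment argument for (iii) $\Leftrightarrow$ (iv). The only notable difference is that you spell out (ii) $\Rightarrow$ (v) (which the paper dismisses as a standard join-tree property) and isolate the degenerate case $E_i \subseteq E_j$, which the paper's choice of $v_1, v_{k-1}$ as path endpoints lets it avoid.
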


Most of the properties in Lemma~\ref{lem:unionJoinProperties} repeat, generalise, or paraphrase existing results (see~\cite{BerrySimone2016,GaliHabiPaul1995,HabibStacho2012}).
Property~\ref{item:edgesOnBothSideOfSep} is, to the best of our knowledge, a new observation.
For completeness, however, we prove all of them.

\begin{proof}
By definition of~$G$, properties \ref{item:unionJoinGrpahEdge} and~\ref{item:joinTreeEdge} are equivalent.
It follows from properties of join trees that \ref{item:joinTreeEdge} implies~\ref{item:edgeSeparates}.

We next show that \ref{item:edgeSeparates} implies~\ref{item:intersectionOnPath}.
Assume that $E_i$ and~$E_j$ are not adjacent in a join tree~$T$.
Then there is a path $\langle E_i = X_1, X_2, \ldots, X_k = E_j \rangle$ of hyperedges from $E_i$ to~$E_j$ in~$T$.
For each~$p$ with $1 \leq p < k$, let $S_p = X_p \cap X_{p + 1}$ be the separator corresponding to the edge~$X_pX_{p + 1}$ of~$T$.
By properties of join trees, $E_i \cap E_j \subseteq S_p$ for each~$S_p$.
Now assume that each~$S_p$ contains a vertex~$v_p \notin E_i \cap E_j$.
Then, $\langle v_1, v_2, \ldots, v_{k - 1} \rangle$ would form a path in~$H$ from~$v_1 \in E_i \setminus E_j$ to~$v_{k - 1} \in E_j \setminus E_i$.
That contradicts with property~\ref{item:edgeSeparates}.
Therefore, there is at least one separator~$S_p$ with $S_p \subseteq E_i \cap E_j$, \ie, there is an edge~$X_pX_{p + 1}$ in~$T$ with~$E_i \cap E_j = X_p \cap X_{p + 1}$.

To show that \ref{item:intersectionOnPath} implies~\ref{item:joinTreeEdge}, consider a join tree~$T$ where $E_i$ and~$E_j$ are not adjacent.
We can create a join tree~$T'$ by removing the edge~$E_i'E_j'$ and adding the edge~$E_iE_j$ instead.
Since $E_i$ and~$E_j$ are on different sides of~$E_i'E_j'$ in~$T$, $T'$ is also a tree.
Additionally, because $E_i \cap E_j = E_i' \cap E_j'$, $T'$ is a valid join tree for~$H$.

It remains to show that \ref{item:edgesOnBothSideOfSep} is equivalent to~\ref{item:intersectionOnPath}.
We first assume property~\ref{item:intersectionOnPath}.
Let $S = E_i \cap E_j$ be a separator on the path from $E_i$ to~$E_j$ in some join tree~$T$.
Since, by properties of join trees, each vertex in~$S = E_i \cap E_j$ is also in $S_i$ and~$S_j$, it follows that $S \subseteq S_i$ and~$S \subseteq S_j$.
Now assume property~\ref{item:edgesOnBothSideOfSep}.
Because $S \subseteq S_i \subseteq E_i$ and~$S \subseteq S_j \subseteq E_j$, it is also the case that $S \subseteq E_i \cap E_j$.
Since $S$ is on the path from $E_i$ to~$E_j$ in~$T$, each vertex that is in both $E_i$ and~$E_j$ also has to be in~$S$, \ie, $S \supseteq E_i \cap E_j$.
Therefore, $S = E_i \cap E_j$.
\qed
\end{proof}

Based on Lemma~\ref{lem:unionJoinProperties}, we can construct $G$ as follows.
Compute a join tree~$T$ for~$H$, the separator hypergraph~$\calS(H)$ (with respect to~$T$), and its subset graph~$G_\calS$.
Next, use $G_\calS$ to find all triples $S_i, S_j, S$ of separators which satisfy property~\ref{item:edgesOnBothSideOfSep} of Lemma~\ref{lem:unionJoinProperties}.
Since their corresponding hyperedges are then adjacent in some join tree of~$H$, make the corresponding vertices adjacent in~$G$.

Before analysing our approach further, we address some needed preprocessing.
Assume that $H$ contains two hyperedges $E_i$ and~$E_j$ which are not adjacent in~$T$, but are adjacent in some other join tree.
There might then be multiple separators~$S$ on the path from $E_i$ to~$E_j$ in~$T$ which satisfy property~\ref{item:edgesOnBothSideOfSep} of Lemma~\ref{lem:unionJoinProperties}.
Our algorithm would, therefore, add the edge~$E_iE_j$ to~$G$ multiple times, once for each such~$S$.
While it is easy to remove redundant edges from~$G$ afterwards, we still want to ensure that the time needed to create and remove these edges does not become too much.
To achieve that, Algorithm~\ref{algo:modifyT} modifies $T$ such that each hyperedge becomes adjacent to its highest possible ancestor in~$T$.
As by-product, Algorithm~\ref{algo:modifyT} also computes the up-separator of each hyperedge (and, thus, the separator hypergraph~$\calS(H)$).

\begin{algorithm}
\caption
{%
    Modifies the join tree of a given acyclic hypergraph such that each hyperedge becomes adjacent to its highest possible ancestor.
}
\label{algo:modifyT}

\KwIn
{%
    An acyclic hypergraph~$H = (V, \calE)$ and a join tree~$T$ for~$H$.
}

\KwOut
{%
    A modified join tree~$T'$ for~$H$ and the separator hypergraph~$\calS(H)$.
}

Root $T$ in an arbitrary hyperedge~$R$ and then run a pre-order on~$T$.
Let $\sigma = \langle R = E_1, E_2, \ldots, E_m \rangle$ be the resulting order.
\label{line:modT_preOrder}

For each vertex~$v$, set $\lambda(v) := \min \{ \, i \mid v \in E_i \, \}$.
\label{line:modT_vertexLbl}

\For
{%
    \( i := 2 \) \KwTo \( m \)
}
{%
    Set $S\uparr(E_i) := \bigl \{ \, v \in E_i \bigm| \lambda(v) < i \, \bigr \}$.
    \label{line:modT_upSep}

    Let $j = \max \bigl \{ \, \lambda(v) \bigm| v \in S\uparr(E_i) \, \bigr \}$ and make $E_j$ the parent of~$E_i$.
    \label{line:modT_newParent}
}

Let $\calS(H)$ be the hypergraph formed by the family~$\bigl \{ \, S\uparr(E_i) \bigm| E_i \in \calE, E_i \neq R \, \bigr \}$.
\end{algorithm}

\begin{lemma}
\label{lem:modifyTRuntime}
Algorithm~\ref{algo:modifyT} runs in linear time.
\end{lemma}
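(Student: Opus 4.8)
The plan is to verify that each line of Algorithm~\ref{algo:modifyT} runs in time linear in~$N$, relying on the fact that the input is given as the incidence graph~$\calI(H)$, so both the hypergraph and the join tree~$T$ together have size~$\Theta(N)$. First I would handle line~\ref{line:modT_preOrder}: rooting~$T$ and computing a pre-order traversal is a standard depth-first search on a tree with $m$ nodes and $m - 1$ edges, hence $\calO(m) \subseteq \calO(N)$ time. The resulting order~$\sigma$ also gives us the pre-order index~$i$ of each hyperedge~$E_i$, which we store so that later lookups are $\calO(1)$.

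Next I would treat line~\ref{line:modT_vertexLbl}, the computation of the labels~$\lambda(v)$. For each vertex~$v$, $\lambda(v)$ is the smallest pre-order index among the hyperedges containing~$v$. Iterating over all incidences $u_v u_E \in E_\calI$ of the incidence graph and, for each, updating a running minimum for the endpoint~$v$, computes every $\lambda(v)$ in a single pass. Since $|E_\calI| = N$, this step is $\calO(N)$.

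The remaining work is the loop over $i = 2, \ldots, m$, and here the key observation is that lines~\ref{line:modT_upSep} and~\ref{line:modT_newParent} touch, over the whole execution, only a bounded amount of work per vertex–hyperedge incidence. For a fixed~$i$, computing the up-separator $S\uparr(E_i) = \{\, v \in E_i \mid \lambda(v) < i \,\}$ requires scanning the vertices of~$E_i$ (available from the adjacency list of~$u_{E_i}$ in~$\calI(H)$) and testing the precomputed value~$\lambda(v)$ against~$i$; this costs $\calO(|E_i|)$. In the same scan we maintain the maximum of $\lambda(v)$ over $v \in S\uparr(E_i)$ to obtain~$j$ and reassign the parent of~$E_i$ to~$E_j$ in $\calO(1)$. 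Summing over all~$i$ gives $\sum_{i} \calO(|E_i|) = \calO(N)$. Finally, collecting the family $\{\, S\uparr(E_i) \mid E_i \neq R \,\}$ into the separator hypergraph~$\calS(H)$ reuses the up-separators already written out, again within $\calO(N)$. Adding the pieces yields the claimed linear bound.

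I do not expect a genuine obstacle, as every step is an amortised argument over the incidences of~$\calI(H)$; the one point that deserves a careful sentence is why the per-iteration cost of the loop telescopes into a clean $\calO(N)$ rather than something like $\calO(Nm)$. The crux is that $\lambda(v)$ is computed once, before the loop, so that each up-separator computation is a simple linear scan of a single hyperedge with $\calO(1)$ membership tests, and no hyperedge is scanned more than once. Establishing that these scans partition the incidences of~$\calI(H)$ is what pins down the linear running time.
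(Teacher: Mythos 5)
Your proposal is correct and follows essentially the same approach as the paper: both arguments charge the work of lines~\ref{line:modT_vertexLbl}--\ref{line:modT_newParent} to the incidences of~$\calI(H)$ (of which there are exactly~$N$) and bound line~\ref{line:modT_preOrder} by~$\calO(m)$. You simply spell out the amortisation per incidence more explicitly than the paper does.
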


\begin{proof}
Line~\ref{line:modT_preOrder} runs in $\calO(m)$ time, since the nodes of~$T$ are the hyperedges of~$H$.
Recall that $H$ is given as an incidence graph~$\calI(H)$.
Hence, the following are equivalent (with respect to runtime):
(i)~for each vertex, iterating over all hyperedges containing it;
(ii)~for each hyperedge, iterating over all vertices it contains; and
(iii)~iterating over all edges of~$\calI(H)$.
Therefore, line~\ref{line:modT_vertexLbl}, line~\ref{line:modT_upSep}, and line~\ref{line:modT_newParent} (and subsequently Algorithm~\ref{algo:modifyT}) run in $\calO(N)$ total time.
\qed
\end{proof}

\begin{lemma}
The tree~\( T' \) created by Algorithm~\ref{algo:modifyT} is a valid join tree for~\( H \).
\end{lemma}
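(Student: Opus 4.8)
The plan is to verify the two defining conditions of a join tree for $T'$ separately: that $T'$ is a tree whose nodes are the hyperedges of $H$, and that for every vertex $v$ the hyperedges containing $v$ induce a connected subtree of $T'$. The tree condition I would settle first and quickly. For each $i \ge 2$, line~\ref{line:modT_newParent} assigns $E_i$ a single parent $E_j$ with $j = \max \{\, \lambda(w) \mid w \in S\uparr(E_i) \,\}$; since every $w \in S\uparr(E_i)$ satisfies $\lambda(w) < i$, we get $j < i$. Before this, I would check that $S\uparr(E_i)$ is nonempty so that $j$ is well defined: the $T$-parent $E_p$ of $E_i$ has $p < i$, and because $\calI(H)$ is connected the adjacent hyperedges $E_i, E_p$ share a vertex $w$, which then lies in $S\uparr(E_i)$. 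As $E_1 = R$ is the only node without an assigned parent and every other node points to a parent of strictly smaller index, the parent pointers are acyclic and connect every node to $R$; hence $T'$ is a tree rooted at $R$.

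The heart of the argument is a single structural claim, which I would isolate as follows: for each $i \ge 2$, the parent $E_j$ chosen in line~\ref{line:modT_newParent} is an ancestor of $E_i$ in the original tree $T$ and satisfies $S\uparr(E_i) \subseteq E_j$. Two facts about $T$ drive this. First, since $T$ is a join tree, for any fixed vertex $w$ the hyperedges containing $w$ form a connected subtree of $T$; and because $\sigma$ is a pre-order of $T$ rooted at $R$, the first-visited, i.e. smallest-index, such hyperedge $E_{\lambda(w)}$ is exactly the topmost node (closest to $R$) of that subtree, so every hyperedge containing $w$ is a descendant of $E_{\lambda(w)}$ in $T$. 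Second, along the root-to-$E_i$ path in $T$ the pre-order indices increase with depth, so among ancestors of $E_i$ a smaller index means a node closer to $R$.

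Using these, I would prove the claim in two steps. Let $w^\ast \in S\uparr(E_i)$ attain the maximum, so $j = \lambda(w^\ast)$ and $w^\ast \in E_i \cap E_j$. By the first fact, $E_i$ lies in the $w^\ast$-subtree rooted at $E_{\lambda(w^\ast)} = E_j$, so $E_j$ is an ancestor of $E_i$. For the containment, take any $w \in S\uparr(E_i)$; then $\lambda(w) \le j$, so $E_{\lambda(w)}$ and $E_j$ are both ancestors of $E_i$ with $E_{\lambda(w)}$ at least as high as $E_j$ by the second fact, and consequently $E_j$ lies on the $T$-path between $E_{\lambda(w)}$ and $E_i$. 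Since both endpoints of this path contain $w$ and the hyperedges containing $w$ form a connected subtree, $E_j$ contains $w$ as well; hence $S\uparr(E_i) \subseteq E_j$.

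Finally I would deduce the running intersection property from the claim. Fix a vertex $v$ and write $\calE_v = \{\, E_i \mid v \in E_i \,\}$. For every $E_i \in \calE_v$ with $i > \lambda(v)$ we have $v \in S\uparr(E_i)$, so by the claim the $T'$-parent of $E_i$ also contains $v$. Thus, starting from any $E_i \in \calE_v$ and repeatedly moving to the $T'$-parent, we stay inside $\calE_v$ while the index strictly decreases, so the walk terminates at the unique smallest-index member $E_{\lambda(v)}$; this exhibits $\calE_v$ as a connected subtree of $T'$. Combined with the tree condition, this shows $T'$ is a valid join tree. The main obstacle is the containment half of the structural claim, $S\uparr(E_i) \subseteq E_j$, since it is precisely here that the interaction between the pre-order numbering (which determines $\lambda$, and hence $j$) and the subtree structure of the join tree $T$ must be used carefully; the tree condition and the final deduction are comparatively routine once the claim is in hand.
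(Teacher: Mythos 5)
Your proof is correct, and it rests on the same central observation as the paper's: for each $w \in S\uparr(E_i)$, the hyperedges containing $w$ form a subtree of $T$ whose root is $E_{\lambda(w)}$, these roots are ancestors of $E_i$ forming a chain, and the node $E_j$ chosen in line~\ref{line:modT_newParent} is the lowest of them, whence $S\uparr(E_i) \subseteq E_j$. Where you differ is in how this observation is turned into the conclusion. The paper runs an induction over the intermediate trees $T_1, \ldots, T_m$ obtained after each reparenting, arguing that each single edge swap preserves the join-tree property (the subtrees of vertices in $S\uparr(E_i)$ survive because the new parent contains them, and the subtrees of other vertices do not use the removed edge). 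You instead verify the running-intersection property directly on the final tree $T'$: for a fixed vertex $v$, every $E_i \ni v$ with $i > \lambda(v)$ has $v \in S\uparr(E_i)$ and hence $v$ in its $T'$-parent, so following parent pointers from any hyperedge containing $v$ stays among hyperedges containing $v$ and descends to $E_{\lambda(v)}$. This avoids reasoning about partially modified trees altogether, and you additionally discharge two points the paper leaves implicit: that $S\uparr(E_i) \neq \emptyset$ (so $j$ is well defined), and that the parent assignment, having $j < i$, produces a tree at all. The only cost is that the paper's step-by-step induction localizes each change to a single edge, which some readers may find easier to visualize; your version is the cleaner and more self-contained of the two.
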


\begin{proof}
Let $T_i$ be the tree after processing~$E_i$, \ie, $T = T_1$ and $T_m = T'$.
Thus, $T_1$ is a valid join tree for~$H$.
Assume, by induction, that $T_{i - 1}$ (with $i \geq 2$) is a valid join tree for~$H$ too.
Recall that, by definition of join trees, the set of hyperedges containing a vertex~$v$ form a subtree~$T_v$ of~$T$.
The roots of all such $T_v$ where $v \in S\uparr(E_i)$ are ancestors of~$E_i$ in~$T$ and, thus, form a path.
By definition of~$j$ (line~\ref{line:modT_newParent}), $E_j$ is the lowest of such roots in~$T$.
It therefore follows that $S\uparr(E_i) \subseteq E_j$.
Subsequently, for each $v \in S\uparr(E_i)$, the hyperedges containing $v$ still form a subtree of~$T_i$ after changing the parent of~$E_i$ if they did so in~$T_{i - 1}$.
Note that each subtree~$T_u$ of a vertex~$u \notin S\uparr(E_i)$ remains unchanged, since it does not contain the edge~$E_iE_k$.
Therefore, for each vertex, the hyperedges containing it form a subtree of~$T_i$ and, thus, $T_i$ is a join tree for~$H$.
\qed
\end{proof}

\begin{lemma}
\label{lem:atMost2Sep}
Let \( E_i \) and~\( E_j \) be two hyperedges of~\( H \), \( T' \) be the tree computed by Algorithm~\ref{algo:modifyT}, and \( P_{ij} \) be the path from \( E_i \) to~\( E_j \) in~\( T' \).
Additionally, let \( S_i \) and~\( S_j \) be the separators on~\( P_{ij} \) which are closest to \( E_i \) and~\( E_j \), respectively.
There are at most two separators~\( S \) on~\( P_{ij} \) such that \( S \subseteq S_i \) and \( S \subseteq S_j \).
\end{lemma}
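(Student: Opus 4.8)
The plan is to reduce the statement to a short counting claim and then exploit the ``highest ancestor'' structure enforced by Algorithm~\ref{algo:modifyT}. First I would observe that a separator $S$ on $P_{ij}$ satisfies $S \subseteq S_i$ and $S \subseteq S_j$ if and only if $S = C$, where $C := E_i \cap E_j$. Indeed, $S_i \subseteq E_i$ and $S_j \subseteq E_j$ give $S \subseteq S_i \cap S_j \subseteq C$; conversely, by the properties of join trees every separator on the path between $E_i$ and $E_j$ contains $C$ (each $v \in C$ lies in the connected subtree $T_v$, hence in every node on $P_{ij}$), so $C \subseteq S$ as well as $C \subseteq S_i$ and $C \subseteq S_j$. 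Thus the separators to be counted are exactly the edges of $P_{ij}$ whose intersection equals $C$, and it suffices to show that at most two edges of $P_{ij}$ take this value.

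Next I would split $P_{ij}$ at the lowest common ancestor of $E_i$ and $E_j$ in the rooted tree $T'$, obtaining an up-part $E_i = Y_0, Y_1, \dots, Y_a$ (with $Y_{t+1}$ the parent of $Y_t$) and a symmetric down-part ending at $E_j$; the separators on the up-part are the up-separators $U_t = Y_t \cap Y_{t+1} = S\uparr(Y_t)$. I claim that at most one $U_t$ equals $C$, and likewise on the down-part, which yields the bound of two. The degenerate cases, where one of $E_i, E_j$ is an ancestor of the other or the two are adjacent, only make the count smaller and need just a remark.

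For the key claim I would use two facts about Algorithm~\ref{algo:modifyT}: (a)~it attaches each hyperedge to a parent of strictly smaller index in the pre-order $\sigma$, so in $T'$ every proper ancestor precedes its descendants in $\sigma$; and (b)~the parent $Y_{t+1}$ of $Y_t$ is chosen so that its $\sigma$-index equals $\max\{\lambda(v) \mid v \in U_t\}$. Now suppose $U_t = C$. Fact~(b) supplies a vertex $v^* \in U_t = C$ whose $\lambda(v^*)$ equals the $\sigma$-index of $Y_{t+1}$, so $v^*$ lies in no hyperedge preceding $Y_{t+1}$ in $\sigma$. If some higher up-separator $U_{t'}$ with $t' > t$ also lay on $P_{ij}$, then $v^* \in C \subseteq U_{t'} \subseteq Y_{t'+1}$; but $Y_{t'+1}$ is a proper ancestor of $Y_{t+1}$ and hence, by Fact~(a), precedes $Y_{t+1}$ in $\sigma$, contradicting that $v^*$ appears in no earlier hyperedge. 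Hence at most one up-separator equals $C$ (in fact only the one incident to the common ancestor can), and by symmetry the same argument applies to the down-part, giving at most two in total.

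The main obstacle I anticipate is the bookkeeping between two different structures: the pre-order $\sigma$ is computed on the original join tree, whereas the ancestry used above is taken in the modified tree $T'$. I therefore need to verify carefully that Fact~(a) genuinely transfers ``proper ancestor $\Rightarrow$ earlier in $\sigma$'' to $T'$, and that the quantity $S\uparr(Y_t)$ produced by the algorithm really is the separator $Y_t \cap Y_{t+1}$ of $T'$. Both follow because each parent is selected among ancestors of strictly smaller $\sigma$-index and satisfies $S\uparr(Y_t) \subseteq Y_{t+1}$, so that $S\uparr(Y_t) = Y_t \cap Y_{t+1}$. Once these are in place, the counting argument above is short and the remaining steps are routine.
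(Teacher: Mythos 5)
Your proof is correct and follows essentially the same route as the paper's: both split $P_{ij}$ at the lowest common ancestor in~$T'$ and show that each half carries at most one qualifying separator, using the fact that Algorithm~\ref{algo:modifyT} attaches each hyperedge to the ancestor indexed by $\max\{\lambda(v) \mid v \in S\uparr(\cdot)\}$. Your preliminary reduction to ``separators equal to $E_i \cap E_j$'' is just property~(iii)$\Leftrightarrow$(iv) of Lemma~\ref{lem:unionJoinProperties} restated, and your witness-vertex contradiction ($v^*$ would appear in an earlier hyperedge than $\lambda(v^*)$) is the same mechanism as the paper's observation that $X$ would have been made adjacent to $E_k$ or one of its ancestors.
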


\begin{proof}
Let $E_k$ be the lowest common ancestor of $E_i$ and~$E_j$ in~$T'$.
Although $T'$ has a potentially different structure than~$T$, it is still the case that the parent of a hyperedge in~$T'$ was an ancestor of it in~$T$.
Thus, $k \leq i, j$.
Note that $P_{ij}$ goes through~$E_k$ and let $P_{ik}$ and~$P_{kj}$ be the respective subpaths of~$P_{ij}$.
If $P_{ij}$ contains more than two separators~$S$ as defined in Lemma~\ref{lem:atMost2Sep}, at least two of them are either part of $P_{ik}$ or~$P_{kj}$.
Without loss of generality, let them be on~$P_{kj}$ and let $S$ be the lowest such separator.
Additionally, let $X$ be the hyperedge directly below~$S$, \ie, $S\uparr(X) = S$.
It follows that $X$ is not adjacent to~$E_k$ in~$T'$.

Since $S \subseteq S_i$, each vertex in~$S$ is in all hyperedges on the path from $X$ to~$E_i$ in~$T'$, including~$E_k$.
Therefore, $S \subseteq E_k$ and $\max \bigl \{ \, \lambda(v) \bigm| v \in S \, \bigr \} \leq k$.
That is a contradiction, since Algorithm~\ref{algo:modifyT} would have made $X$ adjacent to~$E_k$ or one of its ancestors.
\qed
\end{proof}

Algorithm~\ref{algo:alphaUnionJoin} now implements the approach described above.
It also uses Algorithm~\ref{algo:modifyT} as preprocessing.
Therefore, due to Lemma~\ref{lem:atMost2Sep}, the algorithm adds each edge $E_iE_j$ at most two times into~$G$.

\begin{algorithm}[!htb]
\caption
{%
    Computes the union join graph of an acyclic hypergraph.
}
\label{algo:alphaUnionJoin}

\KwIn
{%
    An acyclic hypergraph~$H = (V, \calE)$ and an algorithm~$\calA$ that computes the subset graph for a given family of sets.
}

\KwOut
{%
    The union join graph~$G$ of~$H$.
}

Find a join tree for~$H$ (see~\cite{TarjanYannak1984}) and call Algorithm~\ref{algo:modifyT}.
Let $T$ be the resulting join tree and $\calS$ the resulting family of separators (\ie, the hyperedges of~$\calS(H)$).
\label{line:alphaUJ_preprocc}

Use algorithm~$\calA$ to compute the subset graph~$G_\calS$ of~$\calS$.
\label{line:alphaUJ_callA}

Create a new graph~$G = (\calE, E_G)$ with $E_G = \emptyset$.
\label{line:alphaUJ_createG}

\ForEach
{%
    \( S \in \calS \)%
    \label{line:alphaUJ_sepLoop}
}
{
    Use $G_\calS$ to determine all separators~$S'$ with $S \subseteq S'$ (including $S$ itself).
    \label{line:alphaUJ_detSubsets}

    For each such $S'$, let $EE'$ be the edge of~$T$ which $S'$ represents and let $E$ be the hyperedge farther away from~$S$ in~$T$.
    Add $E$ to a set~$\bbE$ of hyperedges.
    If $S$ and~$S'$ represent the same edge of~$T$, also add~$E'$.
    \label{line:alphaUJ_detEdges}

    Partition $\bbE$ into two sets $\bbE_1$ and~$\bbE_2$ based on which side of~$S$ they are in~$T$.
    \label{line:alphaUJ_partition}

    For each pair $E_1, E_2$ with $E_1 \in \bbE_1$ and $E_2 \in \bbE_2$, add $E_1E_2$ into~$E_G$.
    \label{line:alphaUJ_addEdges}
}
\end{algorithm}

\begin{theorem}
Algorithm~\ref{algo:alphaUnionJoin} computes the union join graph~\( G \) of a given acyclic hypergraph~\( H \) in \( \calO \bigl( T_\calA(H) + N + |G| \bigr) \) time where \( T_\calA(H) \) is the runtime of a given algorithm~\( \calA \) with the separator hypergraph of~\( H \) as input.
\end{theorem}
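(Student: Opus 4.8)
The plan is to argue correctness from Lemma~\ref{lem:unionJoinProperties} and then charge the running time to the three terms of the claimed bound. Throughout I work with the rooted, modified join tree~$T$ and the separator hypergraph~$\calS = \calS(H)$ returned by line~\ref{line:alphaUJ_preprocc} via Algorithm~\ref{algo:modifyT}; note that $\calS$ has total size $\sum_{E_i \ne R} |S\uparr(E_i)| \le \sum_i |E_i| = N$ and at most $m$ hyperedges, so passing it to~$\calA$ is legitimate.

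For correctness I would show that, for distinct $E_i, E_j$, the algorithm inserts $E_iE_j$ into $E_G$ if and only if $E_iE_j$ is an edge of the union join graph~$G$, using the equivalence of properties~\ref{item:unionJoinGrpahEdge} and~\ref{item:edgesOnBothSideOfSep} of Lemma~\ref{lem:unionJoinProperties} applied to~$T$. For the ``only if'' direction, fix the separator~$S$ processed in the loop and a cross pair $E_1 \in \bbE_1$, $E_2 \in \bbE_2$. By construction $E_1$ is the endpoint farther from~$S$ of an edge whose separator~$S' \supseteq S$, so the edge of~$T$ representing~$S'$ is the first edge incident to~$E_1$ on the path $P_{E_1E_2}$; hence the separator $S_1$ closest to~$E_1$ on $P_{E_1E_2}$ equals~$S'$, whence $S \subseteq S_1$, and symmetrically $S \subseteq S_2$ for the separator~$S_2$ closest to~$E_2$. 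Since $E_1$ and~$E_2$ lie on opposite sides of~$S$, the separator~$S$ lies on $P_{E_1E_2}$, so property~\ref{item:edgesOnBothSideOfSep} holds and $E_1E_2 \in G$. For the ``if'' direction, let $E_iE_j \in G$ and take the witness separator~$S$ on $P_{ij}$ from property~\ref{item:edgesOnBothSideOfSep}; then $S_i, S_j \supseteq S$, their edges are incident to $E_i$ and~$E_j$, and $E_i, E_j$ are the corresponding far-from-$S$ endpoints, so $E_i, E_j$ enter $\bbE$ on opposite sides when~$S$ is processed and the pair is inserted. The clause adding~$E'$ when $S$ and~$S'$ represent the same edge is exactly what covers the cases $S_i = S$ or $S_j = S$ (in particular when $E_iE_j$ is already an edge of~$T$). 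Finally, Lemma~\ref{lem:atMost2Sep} bounds the number of witness separators per edge by two, so each edge is inserted at most twice, and a deduplication pass (e.g.\ bucket-sorting the endpoint-index pairs) in $\calO(N + |G|)$ time removes the copies.

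For the running time I would account for each step separately. Line~\ref{line:alphaUJ_preprocc} costs $\calO(N)$ by~\cite{TarjanYannak1984} and Lemma~\ref{lem:modifyTRuntime}, line~\ref{line:alphaUJ_callA} costs $T_\calA(H)$ by assumption, and line~\ref{line:alphaUJ_createG} costs $\calO(m)$. The main loop splits into two contributions. Building and partitioning~$\bbE$ visits, for each~$S$, one endpoint per out-neighbour of~$S$ in~$G_\calS$ (plus a constant for the self-subset), so the total is $\calO\bigl(|G_\calS| + m\bigr)$; since~$\calA$ must write~$G_\calS$ we have $|G_\calS| = \calO(T_\calA(H))$, and $m \le N$, so this part is $\calO(T_\calA(H) + N)$. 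Inserting the cross pairs costs $\sum_S |\bbE_1|\cdot|\bbE_2|$, and each such pair is an edge inserted with multiplicity at most two (again by Lemma~\ref{lem:atMost2Sep}), so this sum is $\calO(|G|)$. Summing the steps yields $\calO\bigl(T_\calA(H) + N + |G|\bigr)$.

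The main obstacle I expect is the $\calO(1)$ determination, for each pair $(S, S')$ with $S \subseteq S'$, of the endpoint~$E$ of the $S'$-edge farther from~$S$ and of the side of~$S$ on which it lies, since a naive computation is a path query in~$T$. I would resolve it with a single $\calO(N)$ depth-first traversal of the rooted tree~$T$ recording discovery and finish times, after which ancestor relations are answered in $\calO(1)$. Writing $c_S$ and~$c_{S'}$ for the lower (child) endpoints of the two edges, testing whether $c_{S'}$ lies in the subtree of~$c_S$ decides whether the $S'$-edge is below~$S$ (in which case $E = c_{S'}$, on the lower side) or not; in the latter case one further ancestor test fixes which of the two endpoints of the $S'$-edge is the far one and places it on the upper side. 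Hence every iteration does only $\calO(1)$ work per out-neighbour in~$G_\calS$, keeping the loop within the bounds charged above.
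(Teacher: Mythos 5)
Your proposal is correct and follows essentially the same route as the paper: correctness via property~\ref{item:edgesOnBothSideOfSep} of Lemma~\ref{lem:unionJoinProperties} combined with the at-most-two-witnesses bound of Lemma~\ref{lem:atMost2Sep}, and constant-time ancestor/side tests from a pre-/post-order numbering of~$T$. The only (harmless) deviation is bookkeeping: you charge the construction of~$\bbE$ to $|G_\calS| = \calO(T_\calA(H))$, whereas the paper charges each iteration to $|\bbE_1| \cdot |\bbE_2|$, the number of edges it emits.
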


\begin{proof}[Correctness]
Let $E_i$ and~$E_j$ be two hyperedges of~$H$.
Additionally, let $S_i$ and~$S_j$ be the separators on the path from $E_i$ to~$E_j$ in~$T$ (computed in line~\ref{line:alphaUJ_preprocc}) which are closest to $E_i$ and~$E_j$, respectively.
We show the correctness of Algorithm~\ref{algo:alphaUnionJoin} by showing that $E_iE_j$ is an edge of~$G$ if and only if there is a join tree for~$H$ with the edge~$E_iE_j$.

First, assume that there is a join tree for~$H$ with the edge~$E_iE_j$.
Lemma~\ref{lem:unionJoinProperties} then implies that there is a separator~$S \in \calS$ such that $S \subseteq S_i$, $S \subseteq S_j$, and $E_i$ and~$E_j$ are on different sides of~$S$ in~$T$.
Therefore, when processing $S$, the algorithm finds $S_i$ and~$S_j$ (line~\ref{line:alphaUJ_detSubsets}) and consequently adds $E_i$ and~$E_j$ into~$\bbE$ (line~\ref{line:alphaUJ_detEdges}).
Since both hyperedges are on different sides of~$S$, Algorithm~\ref{algo:alphaUnionJoin} then also adds the edge~$E_iE_j$ to~$G$ (line~\ref{line:alphaUJ_addEdges}).

We now assume that $E_iE_j$ is an edge of~$G$.
Note that Algorithm~\ref{algo:alphaUnionJoin} only adds edges to~$G$ in line~\ref{line:alphaUJ_addEdges}.
Thus, there is a separator~$S \in \calS$ for which the algorithm adds $E_iE_j$ to~$G$.
For that~$S$, one of $E_i$ and~$E_j$ is in~$\bbE_1$ and the other is in~$\bbE_2$ (line~\ref{line:alphaUJ_addEdges}) and, hence, $E_i$ and~$E_j$ are on different sides of~$S$ in~$T$ (line~\ref{line:alphaUJ_partition}).
This implies that $S \subseteq S_i$ and $S \subseteq S_j$ (line~\ref{line:alphaUJ_detSubsets} and line~\ref{line:alphaUJ_detEdges}).
Therefore, by Lemma~\ref{lem:unionJoinProperties}, there is a join tree for~$H$ with the edge~$E_iE_j$.
\qed
\end{proof}

\begin{proof}[Complexity]
Creating a join tree for a given acyclic hypergraph~$H$ can be implemented in $\calO(N)$ time~\cite{TarjanYannak1984}.
Modifying that join tree (thereby computing~$T$) and computing $\calS(H)$ using Algorithm~\ref{algo:modifyT} can also be done in $\calO(N)$ time (Lemma~\ref{lem:modifyTRuntime}).
Thus, line~\ref{line:alphaUJ_preprocc} runs in total $\calO(N)$ time.
Computing the subset graph~$G_\calS$ in line~\ref{line:alphaUJ_callA} requires $\calO \bigl( T_\calA(H) \bigr)$ time.
Since the hyperedges of~$H$ form the vertices of~$G$ and since $G$ is created without edges, line~\ref{line:alphaUJ_createG} runs in $\calO(m)$ time.

We show next that a single iteration of the loop starting in line~\ref{line:alphaUJ_sepLoop} runs in $\calO \bigl( |\bbE_1| \cdot |\bbE_2| \bigr)$ time.
That is, the runtime for a single iteration is (asymptotically) equivalent to the number of edges of~$G$ created.
Note that each iteration creates at least one such edge, namely the edge in~$T$ that $S$ represents.
Additionally, Lemma~\ref{lem:unionJoinProperties} and Lemma~\ref{lem:atMost2Sep} imply that each edge~$E_iE_j$ is added at most twice to~$G$.
Therefore, line~\ref{line:alphaUJ_sepLoop} to line~\ref{line:alphaUJ_addEdges} run in $\calO \bigl( |G| \bigr)$ total time.

For a separator $S \in \calS$, let $\bbS$ denote the set of separators~$S'$ with $S \subseteq S'$.
Since the subset graph~$G_\calS$ is given, one can compute $\bbS$ (line~\ref{line:alphaUJ_detSubsets}) in $\calO \bigl( |\bbS| \bigr)$ time by determining all incoming edges of~$S$ in~$G_\calS$.
For each $S' \in \bbS$, the algorithm adds, in line~\ref{line:alphaUJ_detEdges}, exactly one hyperedge into~$\bbE$ plus one additional hyperedge for~$S$.
Thus, $|\bbE| = |\bbS| + 1$.

One can determine the hyperedges $E$ and~$E'$ that form a separator~$S'$, which one is farther from~$S$, and on which side of~$S$ they are in~$T$ as follows.
When creating~$S'$, add a reference to both hyperedges and include which is the parent and which is the child in~$T$.
Now assume that each $S'$ is also a node of~$T$ adjacent to $E$ and~$E'$.
Root $T$ in an arbitrary hyperedge, run a pre-order and post-order on~$T$, and let $\mathrm{pre}(x)$ and~$\mathrm{post}(x)$ be the indices of a node~$x$ in that respective order.
For two distinct nodes $x$ and~$y$ of~$T$ (representing either separators or hyperedges), $x$ is then a descendant of~$y$ if and only if $\mathrm{pre}(x) > \mathrm{pre}(y)$ and $\mathrm{post}(x) < \mathrm{post}(y)$.
There are four cases when determining which of $E$ and~$E'$ to add into~$\bbE$:
if $S$ and $S'$ represent the same edge of~$T$, add both hyperedges;
if $S'$ is a descendant of~$S$, add the child-hyperedge;
if $S'$ is an ancestor of~$S$, add the parent-hyperedge; and
if $S'$ is neither an ancestor nor a descendant of~$S$, add the child-hyperedge.
Clearly, one side of~$S$ contains all its descendants and the other side all remaining hyperedges and separators.
That allows us, after a $\calO(m)$-time preprocessing, to determine in constant time on which side of~$S$ a give a hyperedge is.
Therefore, line~\ref{line:alphaUJ_detEdges} and line~\ref{line:alphaUJ_partition} run in $\calO \bigl( |\bbE| \bigr)$ time.

Line~\ref{line:alphaUJ_addEdges} clearly runs in $\calO \bigl( |\bbE_1| \cdot |\bbE_2| \bigr)$ time.
Recall that $|\bbS| + 1 = |\bbE| = |\bbE_1| + |\bbE_2|$.
Therefore, a single iteration of the loop starting in line~\ref{line:alphaUJ_sepLoop} also runs in $\calO \bigl( |\bbE_1| \cdot |\bbE_2| \bigr)$ time.
\qed
\end{proof}

Recall that there is an algorithm which computes the subset graph for any given hypergraph in $\calO \bigl( N^2 / \log N \bigr)$ time~\cite{Pritchard1999}.
Thus, we have the following.

\begin{theorem}
There is an algorithm that computes the union join graph~\( G \) of an acyclic hypergraph in $\calO \bigl( N^2 / \log N + |G| \bigr)$ time.
\end{theorem}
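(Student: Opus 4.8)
The plan is to instantiate Algorithm~\ref{algo:alphaUnionJoin} with \textsc{Pritchard}'s subset-graph algorithm~\cite{Pritchard1999} as the subroutine~$\calA$ and then bound the resulting runtime. By the preceding theorem, Algorithm~\ref{algo:alphaUnionJoin} runs in $\calO\bigl(T_\calA(H) + N + |G|\bigr)$ time, where $T_\calA(H)$ is the time \textsc{Pritchard}'s algorithm takes on the separator hypergraph~$\calS(H)$. The only quantity left to control is therefore $T_\calA(H)$, and for this I would bound the total size of~$\calS(H)$ and confirm that it is supplied to~$\calA$ in the expected form.

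The key step is the observation that the separator hypergraph is never larger than the original one. Every hyperedge of~$\calS(H)$ is an up-separator $S\uparr(E_i) = E_i \cap E_{\mathrm{parent}}$, which by definition is a subset of~$E_i$; hence $|S\uparr(E_i)| \le |E_i|$. Summing over all non-root hyperedges gives $\sum_{E_i \ne R} |S\uparr(E_i)| \le \sum_{E_i \in \calE} |E_i| = N$, so the total size~$N'$ of~$\calS(H)$ satisfies $N' \le N$. Since \textsc{Pritchard}'s algorithm computes a subset graph in $\calO\bigl((N')^2 / \log N'\bigr)$ time on an input of total size~$N'$, and since $x \mapsto x^2/\log x$ is nondecreasing for $x \ge 2$, the bound $N' \le N$ yields $T_\calA(H) = \calO\bigl((N')^2/\log N'\bigr) = \calO\bigl(N^2 / \log N\bigr)$. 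Combining this with the generic reduction gives a total runtime of $\calO\bigl(N^2/\log N + N + |G|\bigr) = \calO\bigl(N^2/\log N + |G|\bigr)$, as claimed.

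I expect no genuine obstacle here, as the argument is a direct composition of the generic reduction with a known subroutine. The two points that deserve a sentence of care are, first, the interface: Algorithm~\ref{algo:modifyT} already outputs~$\calS(H)$ as an incidence graph in linear time within line~\ref{line:alphaUJ_preprocc}, so the separator hypergraph is available in exactly the form \textsc{Pritchard}'s algorithm expects. Second, the asymptotic manipulation relies not on the naive inequality $N^2/\log N' \ge N^2/\log N$ (which points the wrong way) but on the monotonicity of $x^2/\log x$; this is the one place where the estimate is not completely mechanical, and I would state it via monotonicity to keep the bound rigorous.
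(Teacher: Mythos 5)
Your proposal is correct and matches the paper's argument: the paper likewise obtains this theorem by plugging Pritchard's $\calO \bigl( N^2 / \log N \bigr)$-time subset-graph algorithm into Algorithm~\ref{algo:alphaUnionJoin} as~$\calA$. Your additional care in bounding the total size of the separator hypergraph by~$N$ (each up-separator being a subset of its hyperedge) and in invoking monotonicity of $x^2/\log x$ is a welcome explicit justification of a step the paper leaves implicit.
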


The upper bound of at most $\Theta\bigl( N^2 / \log^2 N \bigr)$ many edges for any subset graph~\cite{Pritchard1999} does not apply to union join graphs.
Consider a hypergraph~$H = (V, \calE)$ with $V = \{ u, v_1, \ldots, v_n \}$ and $\calE = \{ \, E_i \mid 1 \leq i \leq n \, \}$ where $E_i = \{ u, v_i \}$.
Note that $N = 2n$ and that each tree with $\calE$ as nodes is a valid join tree for~$H$.
Hence, the union join graph of~$H$ is a complete graph with $\Theta \bigl( N^2 \bigr)$ edges.

\subsection{Notes on the Sperner Family Problem and its Generalisation}

An interesting question that remains is the complexity of solving the Sperner Family problem for acyclic hypergraphs and hypertrees.
We first answer this question for hypertrees.

\begin{theorem}
If the SETH is true, then there is no algorithm which solves the Sperner Family problem for a given hypertree in \( \calO \big( N^{2 - \varepsilon} \big) \) time.
\end{theorem}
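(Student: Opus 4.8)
The plan is to reduce from the general Sperner Family problem, whose SETH-hardness we may assume~\cite{BoraCresHabi2016}. Given a family $\calF = \{ S_1, \ldots, S_m \}$ over a universe~$V$, I reuse the universal-vertex trick from the note above, but crucially drop the hyperedge~$\calS$: introduce a fresh vertex~$u \notin V$, set $E_i := S_i \cup \{ u \}$ for each~$i$, and let $H'$ be the hypergraph formed by $\{ E_1, \ldots, E_m \}$. We may assume $\calF$ has no empty set, as otherwise the answer is trivially yes; hence $m \leq N$ and, writing $N' = \sum_i |E_i| = N + m$, we have $N' \leq 2N$. Building the incidence graph of~$H'$ from that of~$\calF$ only adds the single vertex~$u$ together with $m$ new incidences, so $H'$ is constructed in $\calO(N)$ time and $N' \in \calO(N)$.

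First I would verify that $H'$ is a hypertree. Take the star~$T$ whose centre is~$u$ and whose leaves are the vertices of~$V$; this is a tree on the vertex set $V \cup \{ u \}$ of~$H'$. Every hyperedge~$E_i$ consists of~$u$ together with the leaves lying in~$S_i$, which induce a (connected) star and therefore a subtree of~$T$. Thus each hyperedge induces a subtree of~$T$, so $H'$ is a hypertree. I expect the main point worth stressing here to be that, in contrast to the acyclic hypergraph~$H$ built earlier, $H'$ need not be acyclic once $\calS$ is removed (three pairwise-intersecting sets already admit no join tree); but only the hypertree property is required for this theorem.

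Second I would check that the reduction preserves the answer. Since $u$ lies in every hyperedge and in no original set, $E_i \subseteq E_j$ holds if and only if $S_i \cup \{ u \} \subseteq S_j \cup \{ u \}$, which holds if and only if $S_i \subseteq S_j$. Hence $H'$ contains two distinct hyperedges, one a subset of the other, exactly when $\calF$ contains such a pair of sets. Consequently, a hypothetical algorithm solving the Sperner Family problem on hypertrees in $\calO\big( (N')^{2 - \varepsilon} \big)$ time would solve it on arbitrary families in $\calO\big( N^{2 - \varepsilon} \big)$ time, contradicting the SETH. The only real subtlety --- and the reason we cannot simply invoke the earlier acyclic-and-hypertree hypergraph~$H$ --- is that its extra hyperedge~$\calS$ is a superset of every~$E_i$ and would manufacture a trivial Sperner pair; deleting it preserves all subset relations and keeps the star~$T$ a valid host tree, at the harmless cost of losing acyclicity.
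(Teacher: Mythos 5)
Your proof is correct and is essentially the paper's own reduction: the paper likewise adds a fresh universal vertex~$u$ to every set of~$\calF$ and observes that the resulting family forms a hypertree (via the star centred at~$u$) while preserving all subset relations. Your additional remarks on the size bound and on why the superset hyperedge~$\calS$ must be omitted are sound elaborations but do not change the argument.
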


\begin{proof}
We prove the theorem by making a simple linear-time reduction.
Consider a family~$\calF = \{ S_1, S_2, \ldots, S_m \}$ of sets.
Create a new vertex~$u$ and add it to each set~$S_i \in \calF$.
Let $S'_i$ be the resulting set.
The family~$\calF' = \{ S'_1, S'_2, \ldots, S'_m \}$ then forms a hypertree.
Clearly, adding $u$ to each set does not change any subset relations.
Therefore, $\calF$ contains two distinct sets $S_i$ and~$S_j$ with $S_i \subseteq S_j$ if and only if $\calF'$ contains two distingt sets $S'_i$ and~$S'_j$ with $S'_i \subseteq S'_j$.
\qed
\end{proof}

For acyclic hypergraphs we have the following result.

\begin{theorem}
\label{theo:alphaSFlinear}
There is a linear-time algorithm to solve the Sperner Family problem for acyclic hypergraphs.
\end{theorem}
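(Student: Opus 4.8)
The plan is to show that, for an acyclic hypergraph, the existence of a Sperner pair can be decided by inspecting only the edges of a single join tree, of which there are $m - 1$. Concretely, I would first establish the following structural claim: $H$ contains two distinct hyperedges $E_i, E_j$ with $E_i \subseteq E_j$ if and only if some join tree~$T$ of~$H$ contains an edge~$EE'$ with $E \subseteq E'$ (or $E' \subseteq E$). The backward direction is immediate, since the endpoints of a tree edge are distinct nodes and one being contained in the other is exactly a Sperner pair.

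For the forward direction, fix any join tree~$T$ and suppose $E_i \subseteq E_j$ for distinct $i, j$. Consider the path $\langle E_i = X_0, X_1, \ldots, X_k = E_j \rangle$ from $E_i$ to~$E_j$ in~$T$, where $k \geq 1$. Since $E_i \subseteq E_j$, we have $E_i \cap E_j = E_i$, and by the defining property of join trees every vertex of~$E_i \cap E_j$ lies in each hyperedge on this path; in particular $E_i = E_i \cap E_j \subseteq X_1$. Hence the adjacent pair $X_0 = E_i$ and~$X_1$ already satisfies $E_i \subseteq X_1$ (they are distinct nodes even if they coincide as sets), which proves the claim. The upshot is that it suffices to search for a subset relation among tree-adjacent hyperedges in one arbitrary join tree.

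Given the claim, I would implement the algorithm as follows. First compute a join tree~$T$ for~$H$ in linear time~\cite{TarjanYannak1984} and root it in an arbitrary hyperedge. Next, reusing the pre-order labeling~$\lambda$ of Algorithm~\ref{algo:modifyT} (lines~\ref{line:modT_preOrder}--\ref{line:modT_upSep}), compute the up-separator $S\uparr(E_i) = E_i \cap \mathrm{parent}(E_i)$ of every non-root hyperedge; as argued for Lemma~\ref{lem:modifyTRuntime} this takes $\calO(N)$ total time. Also precompute $|E|$ for every hyperedge in $\calO(N)$ time. For each tree edge joining a child~$E_i$ to its parent~$P$, the separator is $S\uparr(E_i)$, and we have $E_i \subseteq P$ exactly when $|S\uparr(E_i)| = |E_i|$, and $P \subseteq E_i$ exactly when $|S\uparr(E_i)| = |P|$. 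The algorithm checks both conditions for each of the $m - 1$ edges and reports a Sperner pair as soon as one holds, and otherwise reports that none exists. Each check is constant time, so the whole procedure runs in $\calO(N)$ time, as desired.

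The only substantive step is the structural claim; once it is in place, everything reduces to comparing separator sizes against hyperedge sizes, which is routine bookkeeping over the incidence graph~$\calI(H)$. I expect no real obstacle beyond being careful with the paper's convention that hyperedges equal as sets but represented by different nodes of~$\calI(H)$ still count as distinct, which the claim handles automatically.
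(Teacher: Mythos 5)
Your proposal is correct and follows essentially the same route as the paper: the same structural claim that a subset pair forces a tree-adjacent subset pair (via the join-tree path argument showing $E_i \subseteq X_1$), and the same linear-time test comparing $\bigl| S\uparr(E_i) \bigr|$ against the sizes of the two endpoint hyperedges using the separator hypergraph from Algorithm~\ref{algo:modifyT}. No gaps.
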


\begin{proof}
Let $H$ be an acyclic hypergraph with a join tree~$T$.
We first show that the following are equivalent:
(i)~$H$ has two distinct hyperedges $E_i$ and~$E_j$ with $E_i \subseteq E_j$, and
(ii)~$T$ has an edge $EE'$ with $E \subseteq E'$.
Clearly, (ii) implies (i).
To show that (i) implies (ii), let $E_i$ and~$E_j$ be two distinct hyperedges of~$H$ with $E_i \subseteq E_j$.
It follows from the definition of join trees that $E_i \subseteq E_k$ for each hyperedge~$E_k$ on the path from $E_i$ to~$E_j$ in~$T$.
Therefore, $T$ contains an edge $E_iE_k$ with $E_i \subseteq E_k$ and, thus, (i) is equivalent to~(ii).

Let $EE'$ be an edge of~$T$ with $E \subseteq E'$, and let $S = E \cap E'$ be the separator corresponding to that edge.
Note that $|S| = |E|$ in such a case.
Hence, it follows that (i) is true if and only if $H$ admits a separator~$S = E \cap E'$ such that $|S| = |E|$ or $|S| = |E'|$.

We can now solve the Sperner Family problem for a given acyclic hypergraph~$H$ in linear time as follows.
Construct the separator hypergraph for~$H$ (see Algorithm~\ref{algo:modifyT}).
For each separator~$S = E \cap E'$, determine if $|S| = |E|$ or $|S| = |E'|$.
In that case, return \emph{True}.
Otherwise, if no such separator is found, return \emph{False}.
\qed
\end{proof}

Theorem~\ref{theo:hardnessSubsetGraph} and Theorem~\ref{theo:alphaSFlinear} together give an interesting observation:
Let $\calC$ he a class of hypergraphs.
Existence of an algorithm that solves the Sperner Family Problem for~$\calC$ in truly subquadratic time does not imply that there is such an algorithm to compute a subset graph for hypergraphs in~$\calC$, even if the resulting graph is sparse.

One can generalise the Sperner Family Problem as follows:
How many pairs of distinct sets $S_i, S_j$ with $S_i \subseteq S_j$ does a given a family~$\calF$ contain?
Let $p$ be that number.
The Sperner Family problem is then equal to the question whether or not $p \geq 1$.
Thus, Theorem~\ref{theo:alphaSFlinear} gives linear-time algorithm to determine if $p \geq 1$ for acyclic hypergraphs.
The reduction for Lemma~\ref{lem:hardnessUniqueJoinTree}, however, implies that there is no truly subquadratic-time algorithm that determines if $p \geq m$.
What remains an open question is the required runtime to determine if $p \geq k$ for any fixed~$k$ with $1 < k < m$.

\section{$\beta$-Acyclic Hypergraphs}
\label{sec:betaAcyclic}

A hypergraph~$H = (V, \calE)$ is \emph{\( \beta \)-acyclic} if each subset of~$\calE$ forms an acyclic hypergraph.
They are also known as \emph{totally balanced} hypergraphs~\cite{DAtriMoscar1988b}.
See~\cite{Fagin1983} for more definitions.
In this section, we present an algorithm to compute the subset graph~$G$ of $\beta$-acyclic hypergraphs in $\calO \bigl( N \log (n + m) + |G| \bigr)$ time.
Afterwards, we show that one can use that algorithm together with Algorithm~\ref{algo:alphaUnionJoin} to compute the union join graph in the same amount of time.

\subsection{Constructing the Subset Graph}

A matrix is \emph{binary} if its entries are either $0$ or~$1$.
The binary matrix~%
$
    \bigl[
    \begin{smallmatrix}
        1 & 1 \\
        1 & 0
    \end{smallmatrix}
    \bigr]
$
is called~\emph{\( \Gamma \)}.
A matrix is \emph{\( \Gamma \)-free} if it contains no $\Gamma$ as submatrix.
Note that the rows and columns which form a~$\Gamma$ submatrix do not need to be adjacent in the original matrix.
One can use a binary $n \times m$ matrix~$M$ to represent a given hypergraph~$H = (V, \calE)$ as follows.
Let each row~$i$ represent a vertex~$v_i \in V$ and each column~$j$ represent a hyperedge~$E_j \in \calE$.
An entry~$M_{i, j}$ is then~$1$ if and only if $v_i \in E_j$.
That matrix is called the \emph{incidence matrix} of~$H$.

A matrix is \emph{doubly lexically ordered} if rows and columns are permuted in such a way that rows vectors and columns are both in non-decreasing lexicographic order (rows from left to right and columns from top to bottom).
Within a row, priorities of entries are decreasing from right to left, and, within a column, priorities of entries are decreasing from bottom to top.
One can compute such an ordering in $\calO \bigl( N \log (n + m) \bigr)$ time~\cite{PaigeTarjan1987}.
Note that the algorithm in~\cite{PaigeTarjan1987} does not compute the actual matrix; it only computes the corresponding ordering of vertices and hyperedges, thereby avoiding a quadratic runtime.

\begin{lemma}
\label{lem:betaGammaFree}
\cite{Berge1989,DAtriMoscar1988b}
A hypergraph is \( \beta \)-acyclic if and only if its doubly lexically ordered incidence matrix is \( \Gamma \)-free.
\end{lemma}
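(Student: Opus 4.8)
The plan is to route the equivalence through the classical notion of \emph{total balancedness}. Since a hypergraph is $\beta$-acyclic exactly when it is totally balanced~\cite{DAtriMoscar1988b}, and since total balancedness is a property of the incidence matrix that is invariant under permuting rows and columns, it suffices to prove the purely matrix-theoretic statement: a $0$-$1$ matrix is totally balanced if and only if every doubly lexical ordering of it is $\Gamma$-free. I would split this into the two implications and treat them separately, as they have quite different flavours.

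For the direction ``$\Gamma$-free $\Rightarrow$ totally balanced'' I would argue by contraposition and use the forbidden-submatrix description of total balancedness: a matrix fails to be totally balanced precisely when it contains, as a submatrix, the incidence matrix of a chordless cycle of length at least~$3$. The key step is the elementary lemma that the incidence matrix of such a cycle contains a $\Gamma$ under \emph{every} ordering of its rows and columns (the smallest case, $J - I$ for a triangle, already exhibits this, and the general case follows by locating the rightmost column and tracing its two $1$-entries). Granting this lemma, if the doubly lexically ordered matrix~$M$ contained a cycle submatrix, that submatrix would inherit an ordering from~$M$ and hence contain a $\Gamma$; since a $\Gamma$ in a submatrix is a $\Gamma$ in~$M$, this contradicts $\Gamma$-freeness. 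Note that this direction does not actually use that the order is doubly lexical: any $\Gamma$-free ordering already certifies total balancedness.

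The reverse direction ``totally balanced $\Rightarrow$ the doubly lexical order is $\Gamma$-free'' is where the doubly lexical hypothesis is essential, and it is the main obstacle. Here I would assume, for contradiction, that the doubly lexically ordered matrix~$M$ has a $\Gamma$ on rows $r_1 < r_2$ and columns $c_1 < c_2$, so that $M_{r_1 c_1} = M_{r_1 c_2} = M_{r_2 c_1} = 1$ and $M_{r_2 c_2} = 0$, and then exploit the defining minimality of the ordering. Concretely, since column~$c_2$ carries a $1$ at the higher row~$r_1$ and a $0$ at the lower row~$r_2$, comparing the two rows in their right-to-left priority order forces a strictly higher-priority column $c' > c_2$ with $M_{r_1 c'} = 0$ and $M_{r_2 c'} = 1$; symmetrically, the column order produces a row $r' > r_2$ with $M_{r' c_1} = 0$ and $M_{r' c_2} = 1$. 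Iterating these two exchange steps, alternately along rows and columns and using that the lexical values are non-decreasing, one threads together an alternating sequence of $1$- and $0$-entries that closes up into a chordless cycle submatrix, contradicting total balancedness. The delicate point, and the reason this is the hard part, is controlling the exchange so that the traced entries remain distinct and the resulting closed walk is genuinely chordless; this is exactly the combinatorial heart of the classical analysis of doubly lexical orderings (in the spirit of Lubiw and of Hoffman, Kolen, and Sakarovitch), and it is what makes the doubly lexical choice, as opposed to an arbitrary $\Gamma$-free-admitting order, do the work.

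Combining the two implications with the equivalence between $\beta$-acyclicity and total balancedness yields the lemma; alternatively, one may simply invoke the cited results of~\cite{Berge1989,DAtriMoscar1988b}.
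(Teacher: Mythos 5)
The paper does not prove this lemma at all: it is stated as an imported classical result, with the citations \cite{Berge1989,DAtriMoscar1988b} serving as the entire justification (the underlying facts are the equivalence of $\beta$-acyclicity with total balancedness together with the Hoffman--Kolen--Sakarovitch/Lubiw theorem that a matrix is totally balanced if and only if its doubly lexical orderings are $\Gamma$-free). So there is no in-paper argument to compare against; your fallback of simply invoking the cited results is exactly what the author does, and for the purposes of this paper that is sufficient.

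Judged as a self-contained proof, your sketch picks the right classical route but leaves the hard direction genuinely open. Two concrete issues. First, in the easy direction, your hint for the sub-lemma that every ordering of a cycle incidence matrix contains a $\Gamma$ --- ``locate the rightmost column and trace its two $1$-entries'' --- does not work as stated: the two $1$s of the rightmost column, combined with the second $1$ of the upper of those two rows, produce the configuration $\bigl[\begin{smallmatrix}1&1\\0&1\end{smallmatrix}\bigr]$, the mirror image of $\Gamma$, not $\Gamma$ itself. The correct extremal choice is the row whose \emph{rightmost} $1$ occurs \emph{earliest}; taking its two columns $c_1<c_2$ and the other row covering $c_1$ does yield a $\Gamma$ whichever of the two rows lies lower (using that two rows of a chordless cycle cannot share two columns). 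Second, and more seriously, in the direction ``totally balanced $\Rightarrow$ $\Gamma$-free'': your two exchange steps (a column $c'>c_2$ separating the rows, a row $r'>r_2$ separating the columns) are correctly derived from the doubly lexical property, but the assertion that iterating them ``closes up into a chordless cycle submatrix'' is precisely the content of Lubiw's theorem and is not established by the iteration as described --- nothing guarantees termination, distinctness of the traced rows and columns, or chordlessness of the resulting bipartite structure. The known proofs obtain this by an extremal choice of the offending $\Gamma$ (e.g., one maximizing $r_2$ and $c_2$) and a careful induction, not by naive alternation. You flag this yourself as ``the delicate point,'' but flagging it does not discharge it; as written the proposal proves only the easy implication and cites the hard one.
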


For the remainder of this subsection, assume that we are given a $\beta$-acyclic hypergraph~$H = (V, \calE)$.
Let $M$ be a doubly lexically ordered (hence, $\Gamma$-free) incidence matrix for~$H$.
We assume that we know the ordering of vertices and hyperedges in~$M$, even though we are not given $M$ itself.
For two hyperedges $E_i$ and~$E_j$ of~$H$, we say $E_i \preceq E_j$ if the column of~$E_i$ is lexicographically smaller than or equal to the column of~$E_j$ with respect to~$M$.
Accordingly, we write $E_i \prec E_j$ to exclude equality.

\begin{lemma}
\label{lem:betaSubsetCondition}
Let \( E_i \) and~\( E_j \) be two hyperedges of~\( H \) and let \( v \) be the vertex in~\( E_i \) which is earliest in the doubly lexical ordering (\ie, highest in~\( M \)).
Then, \( E_i \subseteq E_j \) if and only if \( E_i \preceq E_j \) and \( v \in E_j \).
\end{lemma}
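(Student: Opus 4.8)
The plan is to prove the biconditional of Lemma~\ref{lem:betaSubsetCondition} in two directions, with the forward direction being essentially immediate and the backward direction relying crucially on the $\Gamma$-freeness of~$M$ guaranteed by Lemma~\ref{lem:betaGammaFree}.

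\textbf{The easy direction.} First I would establish that $E_i \subseteq E_j$ implies both $E_i \preceq E_j$ and $v \in E_j$. The second conclusion is trivial, since $v \in E_i \subseteq E_j$. For the first, I would argue that if every vertex of $E_i$ also lies in $E_j$, then the column of $E_j$ has a~$1$ in every row where the column of $E_i$ has a~$1$ (and possibly more). Comparing the two columns entry by entry from top to bottom, the first row in which they differ must have $E_i$ equal to~$0$ and $E_j$ equal to~$1$, so the column of $E_i$ is lexicographically no larger than that of $E_j$; hence $E_i \preceq E_j$.

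\textbf{The hard direction.} The main obstacle is the converse: assuming $E_i \preceq E_j$ and $v \in E_j$, I must show $E_i \subseteq E_j$. I would argue by contradiction. Suppose some vertex $w \in E_i$ is not in $E_j$. Recall $v$ is the \emph{earliest} (highest) vertex of $E_i$ in the doubly lexical order, so the row of $v$ is at least as high as the row of $w$; since $w \neq v$ (as $v \in E_j$ but $w \notin E_j$), the row of $v$ is strictly above the row of $w$. Now consider the $2 \times 2$ submatrix on rows $v, w$ and columns $E_j, E_i$ (in that column order, matching how the $\Gamma$ pattern is written). The entries are: $M_{v, E_j} = 1$ (since $v \in E_j$), $M_{v, E_i} = 1$ (since $v \in E_i$), $M_{w, E_j} = 0$ (since $w \notin E_j$), and $M_{w, E_i} = 1$ (since $w \in E_i$). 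This is exactly a $\Gamma$, contradicting $\Gamma$-freeness of~$M$.

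The one subtlety I would need to verify carefully is the column order in the $\Gamma$ pattern. The matrix $\Gamma = \bigl[\begin{smallmatrix} 1 & 1 \\ 1 & 0 \end{smallmatrix}\bigr]$ has its zero in the lower-right position, meaning a $\Gamma$ occurs on rows $r_1$ (higher) and $r_2$ (lower) and columns $c_1$ (left) and $c_2$ (right) precisely when $M_{r_1,c_1}=M_{r_1,c_2}=M_{r_2,c_1}=1$ and $M_{r_2,c_2}=0$. Matching this to my rows $v$ (higher), $w$ (lower) forces the columns to be ordered so that the all-ones column is on the left and the column missing $w$ is on the right; that is, left column $E_i$, right column $E_j$. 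With this correct orientation the four entries indeed form a $\Gamma$, so I would present the submatrix with columns in the order $E_i, E_j$. I expect this bookkeeping to be the only place requiring genuine care; everything else follows directly from the definitions of the lexical order and of containment.
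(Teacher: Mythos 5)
Your proof is correct and follows essentially the same route as the paper: the backward direction is exactly the paper's argument (a vertex of $E_i\setminus E_j$ lies strictly below $v$, and together with columns $E_i$, $E_j$ it forms a $\Gamma$, contradicting Lemma~\ref{lem:betaGammaFree}), while your forward direction is just the direct form of the paper's contrapositive case analysis. One harmless slip: under the paper's convention columns are compared lexicographically with priority given to the \emph{bottom} entries, so the comparison runs bottom-to-top rather than top-to-bottom as you wrote; since $E_i \subseteq E_j$ gives entrywise domination of the columns, your conclusion $E_i \preceq E_j$ holds under either convention.
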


\begin{proof}
We first show that $E_i \preceq E_j$ and $v \in E_j$ implies $E_i \subseteq E_j$.
Clearly, $E_i \subseteq E_j$ if $E_i = E_j$.
Assume now that $E_i \nsubseteq E_j$, \ie, $E_i$ contains a vertex~$u \notin E_j$.
By definition of~$v$, $u$ is lower in~$M$ than~$v$.
$E_i \preceq E_j$ and $v \in E_j$ then imply that $E_i$, $E_j$, $u$, and $v$ form a~$\Gamma$ in~$M$.
That contradicts with $M$ being $\Gamma$-free (see Lemma~\ref{lem:betaGammaFree}).
Therefore, $E_i \succ E_j$ or $v \notin E_j$.

Clearly, $v \notin E_j$ implies $E_i \nsubseteq E_j$.
Now assume that $E_i \succ E_j$.
Since $E_i \neq E_j$, there is a lowest vertex~$u$ in~$M$ which is in one of these hyperedges but not in both.
Recall that $M$ is ordered lexicographically.
Therefore, $E_i \succ E_j$ implies that $u \in E_i$ ($1$~in~$M$) and $u \notin E_j$ ($0$~in~$M$), \ie, $E_i \nsubseteq E_j$.
\qed
\end{proof}

Lemma~\ref{lem:betaSubsetCondition} allows to compute the subset graph~$G$ of a $\beta$-acyclic hypergraph as follows.
First, find doubly lexicographical ordering of vertices and hyperedges.
For each hyperedge~$E$, determine all hyperedges~$E'$ with $E \preceq E'$ which contain~$v$ as defined in Lemma~\ref{lem:betaSubsetCondition}.
Then, add the edge~$(E, E')$ to~$G$ for each such pair $E$ and~$E'$.
Algorithm~\ref{algo:betaSubset} implements that approach.

\begin{algorithm}[!htb]
\caption
{%
    Computes the subset graph of a given $\beta$-acyclic hypergraph.
}
\label{algo:betaSubset}

\KwIn
{%
    A $\beta$-acyclic hypergraph~$H = (V, \calE)$.
}

\KwOut
{%
    The subset graph~$G$ of~$H$.
}

Find doubly lexicographical ordering~$\sigma$ of vertices and hyperedges (see~\cite{PaigeTarjan1987}) and order the adjacency list of~$\calI(H)$ according to~$\sigma$.
\label{line:betaSubset_findOrdering}

Create a new directed graph~$G = (\calE, E_G)$ with $E_G = \emptyset$.
\label{line:betaSubset_initG}

\ForEach
{%
    \( E \in \calE \)%
    \label{line:betaSubset_outerLoop}
}
{%
    Let $v$ be the vertex in~$E$ which is first in~$\sigma$.
    \label{line:betaSubset_pickV}

    \ForEach
    (%
        {\hfill (\( \preceq \) with respect to~\( \sigma \))}%
    )
    {%
        hyperedge~\( E' \) containing~\( v \) with \( E \preceq E' \)%
        \label{line:betaSubset_innerLoop}
    }
    {%
        Add $(E, E')$ to $E_G$.
        \label{line:betaSubset_addEdge}
    }
}

\end{algorithm}

\begin{theorem}
Algorithm~\ref{algo:betaSubset} computes the subset graph~\( G \) of a given \( \beta \)-acyclic hypergraph in \( \calO \bigl( N \log (n + m) + |G| \bigr) \) time.
\end{theorem}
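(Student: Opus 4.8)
My plan is to split the argument into correctness and running time, with the running time carrying the real work. Correctness follows directly from Lemma~\ref{lem:betaSubsetCondition}. The algorithm adds $(E, E')$ to $E_G$ (line~\ref{line:betaSubset_addEdge}) exactly when $E'$ contains the vertex~$v$ that is first in~$\sigma$ among the vertices of~$E$ (line~\ref{line:betaSubset_pickV}) and $E \preceq E'$ (line~\ref{line:betaSubset_innerLoop}). By Lemma~\ref{lem:betaSubsetCondition}, these two conditions together are equivalent to $E \subseteq E'$. Hence $E_G$ contains precisely the pairs $(E, E')$ with $E \subseteq E'$, so $G$ is the subset graph of~$H$.

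The harder part is the running time, where the goal is to charge almost all work either to the $\calO(N \log(n + m))$ preprocessing or to the $\calO(|G|)$ edges produced. For line~\ref{line:betaSubset_findOrdering}, the doubly lexical ordering~$\sigma$ is computed in $\calO(N \log(n + m))$ time by~\cite{PaigeTarjan1987}. Given~$\sigma$, I would assign every vertex and every hyperedge its rank in~$\sigma$ and then sort each adjacency list of~$\calI(H)$ by these ranks; this costs $\calO(N \log N) \subseteq \calO(N \log(n + m))$ with a comparison sort (or $\calO(N + n + m)$ with bucket sort). After this, the first vertex~$v$ of a hyperedge~$E$ (line~\ref{line:betaSubset_pickV}) is simply the head of $E$'s sorted list, obtainable in $\calO(1)$, and the hyperedges containing a fixed~$v$ appear in $v$'s list in non-decreasing $\preceq$ order. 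Line~\ref{line:betaSubset_initG} clearly takes $\calO(m)$ time.

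The main obstacle is implementing the inner loop (line~\ref{line:betaSubset_innerLoop}) so that it touches only hyperedges~$E'$ that actually yield an edge, i.e.\ without scanning the prefix of $v$'s list consisting of hyperedges $E' \prec E$. Since $v$'s list is sorted by~$\preceq$ and $\preceq$ is a total preorder, the hyperedges~$E'$ with $E \preceq E'$ form a suffix of that list (note $E$ itself lies in the list because $v \in E$). I would locate the first element of this suffix by a binary search for the leftmost entry that is $\succeq E$, using the precomputed ranks so that each comparison takes $\calO(1)$. This costs $\calO(\log \deg(v))$ for the hyperedge~$E$, and thereafter every hyperedge scanned up to the end of the list produces exactly one edge in line~\ref{line:betaSubset_addEdge}. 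The tied copies of~$E$ itself, including hyperedges with the same vertex set, lie in this suffix and are therefore handled correctly.

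It then remains to sum the costs over the outer loop (line~\ref{line:betaSubset_outerLoop}). The binary searches contribute $\sum_{E \in \calE} \calO(\log \deg(v)) \subseteq \calO(m \log m) \subseteq \calO(N \log(n + m))$, since each $\deg(v) \le m \le N$; the constant-time head lookups contribute $\calO(m)$; and the edge-producing scans contribute $\calO(|E_G|) \subseteq \calO(|G|)$ in total, because distinct iterations of the inner loop create distinct edges, so the number of such iterations equals $|E_G|$. Adding the preprocessing bound yields the claimed $\calO\bigl(N \log(n + m) + |G|\bigr)$ running time.
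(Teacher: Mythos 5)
Your proof is correct and follows essentially the same approach as the paper: correctness is read off directly from Lemma~\ref{lem:betaSubsetCondition}, the adjacency lists of~$\calI(H)$ are sorted according to~$\sigma$, and the inner-loop work is charged to the edges produced. The only difference is an implementation detail: where you binary-search for the start of the suffix of~$v$'s sorted list (absorbing the resulting $\calO(m \log m)$ term into the preprocessing bound), the paper simply iterates \emph{backwards} over the hyperedges containing~$v$ and stops upon reaching one with $E' \prec E$, which avoids the binary search and the need for random access; both yield the claimed $\calO \bigl( N \log (n + m) + |G| \bigr)$ bound.
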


\begin{proof}[Correctness]
We show the correctness of Algorithm~\ref{algo:betaSubset} by showing that $G$ contains an edge~$(E_i, E_j)$ if and only if $E_i \subseteq E_j$.
First assume that $G$ contains an edge~$(E_i, E_j)$.
Note that Algorithm~\ref{algo:betaSubset} only adds $(E_i, E_j)$ to~$G$ (line~\ref{line:betaSubset_addEdge}) if $E_i \preceq E_j$ and $v \in E_j$ (line~\ref{line:betaSubset_innerLoop}).
Therefore, due to Lemma~\ref{lem:betaSubsetCondition}, $G$ containing an edge~$(E_i, E_j)$ implies $E_i \subseteq E_j$.

Next, assume that $H$ contains two hyperedges $E_i$ and~$E_j$ with $E_i \subseteq E_j$.
It then follows from Lemma~\ref{lem:betaSubsetCondition} that $E_i \preceq E_j$ and $v \in E_j$.
Since the algorithm checks all pairs of hyperedges satisfying this condition (line~\ref{line:betaSubset_outerLoop} and line~\ref{line:betaSubset_innerLoop}), it eventually finds $E_i$ and~$E_j$ and adds $(E_i, E_j)$ as edge to~$G$ (line~\ref{line:betaSubset_addEdge}).
\qed
\end{proof}

\begin{proof}[Complexity]
One can compute a doubly lexicographical ordering~$\sigma$ (line~\ref{line:betaSubset_findOrdering}) in $\calO \bigl( N \log (n + m) \bigr)$ time~\cite{PaigeTarjan1987}.
Creating the graph~$G$ (line~\ref{line:betaSubset_initG}) can easily be done in $\calO(m)$ time.

There are various ways to then order the adjacency list of~$\calI(H)$ (line~\ref{line:betaSubset_findOrdering}) in $\calO(N)$ time.
One option is to reconstruct $\calI(H)$ as follows.
Iterate over all vertices~$v$ of~$H$ as ordered in~$\sigma$.
For each hyperedge~$E$ containing~$v$, add~$v$ to the new list of~$E$.
Afterwards, the list of vertices in~$E$ is ordered with respect to~$\sigma$.
The same approach (with hyperedges and vertices swapped) allows to sort, for each vertex~$v$, the list of hyperedges containing it.

We now show that the loop starting in line~\ref{line:betaSubset_outerLoop} runs is in $\calO \bigl( |G| \bigr)$ total time.
Note that the hyperedges~$\calE$ form the vertices of~$G$.
Hence, there is exactly one iteration of the loop starting in line~\ref{line:betaSubset_outerLoop} for each vertex of~$G$.
Since the adjacency list of~$\calI(H)$ is ordered according to~$\sigma$ (line~\ref{line:betaSubset_findOrdering}), we can determine $v$ (line~\ref{line:betaSubset_pickV}) in constant time for each hyperedge~$E$.
For the same reason, we can determine the set~$\bbE_v = \{ \, E' \in \calE \mid v \in E', E \preceq E' \, \}$ in $\calO \bigl( |\bbE_v| \bigr)$ time by iterating backwards over the hyperedges containing~$v$.
Since we add exactly one edge~$(E, E')$ to~$G$ for each~$E'$ in such~$\bbE_v$, line~\ref{line:betaSubset_innerLoop} and line~\ref{line:betaSubset_addEdge} run in $\calO \bigl( |G| \bigr)$ total time.
\qed
\end{proof}

\subsection{Constructing the Union Join Graph}

We now address how to compute the union join graph for $\beta$-acyclic hypergraphs.
For that, we do not present a new algorithm.
Instead, we show that one can use Algorithm~\ref{algo:alphaUnionJoin} together with Algorithm~\ref{algo:betaSubset}.
This is possible due to Lemma~\ref{lem:betaSeparator} below.

\begin{lemma}
\label{lem:betaSeparator}
If a hypergraph is \( \beta \)-acyclic, then its separator hypergraph is \( \beta \)-acyclic, too.
\end{lemma}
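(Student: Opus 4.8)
The plan is to use the characterisation from Lemma~\ref{lem:betaGammaFree}: a hypergraph is $\beta$-acyclic if and only if its doubly lexically ordered incidence matrix is $\Gamma$-free. So I want to show that the incidence matrix of the separator hypergraph~$\calS(H)$ can be made $\Gamma$-free, given that the incidence matrix of~$H$ is $\Gamma$-free. Equivalently, I will show directly that every subset of the separators of~$H$ forms an acyclic hypergraph. The cleanest route, I expect, is the latter: recall that the separators of~$H$ are the sets $S = E_i \cap E_j$ arising from the edges of a join tree~$T$, and each separator is by construction a subset of a hyperedge of~$H$. Since $\beta$-acyclicity is a hereditary, ``subfamily-closed'' property, and since each separator sits inside some hyperedge, the idea is that $\calS(H)$ inherits enough structure from~$H$ to remain $\beta$-acyclic.

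First I would fix a doubly lexically ordered (hence $\Gamma$-free) incidence matrix~$M$ of~$H$, using the vertex ordering it induces. I would then consider the incidence matrix~$M'$ of~$\calS(H)$ built on the \emph{same} vertex ordering, with one column per separator. The key observation is that each column of~$M'$ (a separator~$S = E_i \cap E_j$) is entrywise dominated by the columns of both~$E_i$ and~$E_j$ in~$M$; indeed the support of~$S$ is the intersection of the supports of~$E_i$ and~$E_j$. The plan is to argue that any $\Gamma$ submatrix appearing in a doubly lexical reordering of~$M'$ would force a corresponding $\Gamma$ in~$M$, contradicting Lemma~\ref{lem:betaGammaFree}. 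Concretely, a $\Gamma$ is witnessed by two rows (vertices $u$, $v$ with $v$ above $u$) and two columns (separators $S$, $S'$) with pattern $\bigl[\begin{smallmatrix}1&1\\1&0\end{smallmatrix}\bigr]$: that is $v,u \in S$, $v \in S'$, $u \notin S'$. Writing $S = E_i \cap E_j$ and picking the hyperedge among $\{E_i,E_j\}$ that still contains the offending vertices, I would recover a pair of hyperedges of~$H$ exhibiting the same forbidden pattern on $u$ and $v$.

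The step I expect to be the main obstacle is controlling the \emph{ordering}: $\Gamma$-freeness is only guaranteed after a doubly lexical reordering, and the reordering that witnesses $\beta$-acyclicity of~$\calS(H)$ need not be the one inherited from~$M$. So the delicate part is showing that a bad pattern in \emph{some} doubly lexical ordering of~$\calS(H)$ can be transported back to a bad pattern in~$M$, rather than merely in an arbitrary ordering of~$H$. I would handle this either by appealing to the subfamily-closure definition of $\beta$-acyclicity directly (showing every subfamily of separators is acyclic, using the fact that each separator lies inside a hyperedge and that $H$ restricted to any subfamily of hyperedges admits a join tree whose separators refine those of the chosen subfamily), or by invoking a known fact that $\Gamma$-freeness is preserved under taking intersections of columns together with column deletion. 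The former, more self-contained argument via the ``each subset forms an acyclic hypergraph'' definition seems the safer path, since it avoids committing to a specific doubly lexical ordering and instead reduces the claim to the acyclicity of~$H$ on appropriately chosen subfamilies.
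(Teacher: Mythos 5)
There is a genuine gap, and it sits exactly where you flag ``the delicate part.'' Neither of your two proposed resolutions works as stated. The ``safer path'' rests on the principle that, because each separator is a subset of some hyperedge and $\beta$-acyclicity is subfamily-closed, $\calS(H)$ inherits $\beta$-acyclicity. That principle is false: subsets of hyperedges of a $\beta$-acyclic hypergraph need not form an acyclic family at all. For instance, the single hyperedge $\{1,2,3\}$ is trivially $\beta$-acyclic, yet the three subsets $\{1,2\}$, $\{2,3\}$, $\{1,3\}$ form a non-conformal (hence non-acyclic) hypergraph. What saves the separator hypergraph is precisely that its members are not arbitrary intersections but intersections along \emph{edges of a join tree}, and your argument never uses that structure beyond writing $S = E_i \cap E_j$. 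The matrix route has the same problem in disguise: Lemma~\ref{lem:betaGammaFree} speaks only about the doubly lexically ordered matrix, the ordering inherited from~$M$ need not be doubly lexical for~$M'$, and even under the inherited ordering the transport step fails --- replacing the column of $S = E_i \cap E_j$ by the column of $E_i$ or~$E_j$ can move that column to the right past the second column $S'$ of the putative~$\Gamma$ (the columns of $E_i$ and~$E_j$ are lexicographically at least as large as that of~$S$), in which case the pattern $\bigl[\begin{smallmatrix}1&1\\1&0\end{smallmatrix}\bigr]$ is destroyed rather than reproduced in~$M$. Asserting ``a known fact that $\Gamma$-freeness is preserved under taking intersections of columns'' begs the question; for arbitrary pairwise intersections it is not even true, as the example above shows.

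The paper's proof takes a different and essentially unavoidable route through the join tree. It shows that every subfamily $\calS' \subseteq \calS$ is acyclic by verifying the two-part characterisation that a hypergraph is acyclic iff it is conformal and its 2-section graph is chordal. For chordality, a chordless cycle in $\2Sec(\calS')$ is traced to a set of separators that must sit on pairwise non-separating edges of~$T$, hence on leaves $E_1, \ldots, E_k$ of a minimal subtree with $E_i \cap E_j = S_i \cap S_j$, contradicting $\beta$-acyclicity of~$H$. For conformality it invokes Gilmore's Theorem on triples of separators, splitting into the case where the three tree edges lie on a path and the case where they hang off a common hyperedge. If you want to repair your proof, you need to import this kind of join-tree reasoning; the containment of separators in hyperedges alone cannot carry the argument.
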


Before proving Lemma~\ref{lem:betaSeparator}, we need a few auxiliary definitions.
Assume we are given a graph~$G = (V, E)$.
A \emph{clique} is a set of vertices of~$G$ such that all these vertices are pairwise adjacent.
Such a clique~$K$ is \emph{maximal} if no vertex in~$G$ is adjacent to all vertices in~$K$.
For a cycle $\langle v_1, v_2, \ldots, v_k, v_1 \rangle$, a \emph{chord} is an edge between two non-consecutive vertices.
A graph is called \emph{chordal} if each cycle with four or more vertices has a chord.
A hypergraph~$H$ is \emph{conformal} if, for each maximal clique~$K$ of~$\2Sec(H)$, $H$ contains a hyperedge~$E$ with $K \subseteq E$.

\begin{proof}[Lemma~\ref{lem:betaSeparator}]
Let $H = (V, \calE)$ be a $\beta$-acyclic hypergraph with a join tree~$T$ and let $\calS$ be the hyperedges of~$\calS(H)$.
To prove that $\calS(H)$ is $\beta$-acyclic, we show that each $\calS' \subseteq \calS$ forms an acyclic hypergraph.
It is known that a hypergraph~$H$ is acyclic if and only if $H$ is conformal and $\2Sec(H)$ is chordal~\cite{BeeFagMaiYan1983}.
It is therefore sufficient for us to show that each $\calS'$ is conformal and its 2-section graph is chordal.

We start by showing that $G_2 = \2Sec \bigl( \calS' \bigr)$ is chordal.
Let us assume that $G_2$ is not chordal.
It then contains a chordless cycle~$C = \{ v_1, v_2, \ldots, v_k \}$ with $k \geq 4$.
Since each edge~$v_iv_{i + 1}$ of~$G_2$ implies that its vertices are in a common hyperedge, there is a sequence of separators $\sigma = \langle S_1, S_2, \ldots S_k \rangle$ that form~$C$ in~$G_2$.
In particular, we have that $v_i \in S_i \cap S_{i + 1}$ (with index arithmetic modulo~$k$).
Recall that each separator~$S_i$ corresponds to an edge of~$T$.
Let $T_\sigma$ be the smallest subtree of~$T$ that contains all separators of~$\sigma$.
Note that $v_i \notin S_j$ for all $j \notin \{ i, i + 1 \}$; otherwise $C$ would have a chord.
Therefore, by properties of join trees, there is no $i$ and~$j$ such that $S_j$ separates $S_i$ and~$S_{i + 1}$ in~$T$.
Hence, each $S_i$ in~$\sigma$ corresponds to a leaf~$E_i$ of~$T_\sigma$.
By properties of join trees, $E_i \cap E_j = S_i \cap S_j$.
Hence, $\2Sec \bigl( \{ E_1, E_2, \ldots, E_k \} \bigr)$ is not chordal implying that $\{ E_1, E_2, \ldots, E_k \}$ does not form an acyclic hypergraph.
This contradicts with $H$ being $\beta$-acyclic.
Therefore, $\2Sec \bigl( \calS' \bigr)$ is chordal.

We now show that each $\calS'$ forms a conformal hypergraph.
\emph{Gilmore's Theorem}~\cite{Berge1989} states that a hypergraph~$H$ is conformal if and only if, for all its hyperedges $E_1$, $E_2$, and~$E_3$, $H$ contains a hyperedge~$E$ such that $(E_1 \cap E_2) \cup (E_2 \cap E_3) \cup (E_1 \cap E_3) \subseteq E$.
$\calS'$ is therefore clearly conformal if $|\calS'| \leq 2$.
Now let $|\calS'| \geq 3$ and let $S_1$, $S_2$, and~$S_3$ be three hyperedges in~$\calS'$.
We distinguish between two cases.
Case~1:~$S_1$, $S_2$, and~$S_3$ are on a path in $T$.
Without loss of generality, let $S_2$ be between $S_1$ and~$S_3$.
Then, by properties of join trees, $S_1 \cap S_3 \subseteq S_2$.
Case~2:~There is a hyperedge~$E$ in~$H$ such that $S_1$, $S_2$, and~$S_3$ are in different subtrees when removing~$E$ from~$T$.
For all $i \in \{ 1, 2, 3 \}$, let $S_i$ represent the edge~$E_iE'_i$ of~$T$ and let $E'_i$ be closer to~$E$ in~$T$ than~$E_i$.
Since $H$ is $\beta$-acyclic, the set $\{ E_1, E_2, E_3 \}$ also forms an acyclic and, hence, conformal hypergraph.
Without loss of generality, let $E_3$ be the hyperedge that satisfies \emph{Gilmore's Theorem} for $\{ E_1, E_2, E_3 \}$, \ie, let $E_1 \cap E_2 \subseteq E_3$.
Note that, by properties of join trees, $E_i \cap E_j = S_i \cap S_j$ for all $i, j \in \{ 1, 2, 3 \}$, and $v \in S_3$ if $v \in (E_1 \cup E_2) \cap E_3$.
Therefore, $S_1 \cap S_2 \subseteq S_3$.
Overall, it then follows each $\calS'$ forms a conformal hypergraph.
\qed
\end{proof}

Due to Lemma~\ref{lem:betaSeparator}, we can conclude this section as follows.

\begin{theorem}
\label{theo:betaUnionJoin}
There is an algorithm that computes the union join graph~\( G \) of a given \( \beta \)-acyclic hypergraph in \( \calO \bigl( N \log (n + m) + |G| \bigr) \) time.
\end{theorem}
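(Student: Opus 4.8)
The plan is to instantiate Algorithm~\ref{algo:alphaUnionJoin} with Algorithm~\ref{algo:betaSubset} playing the role of the subroutine~$\calA$, and then to verify that this instantiation is both correct and fast enough. Correctness of Algorithm~\ref{algo:alphaUnionJoin} holds for every acyclic hypergraph, so the only thing to check is that Algorithm~\ref{algo:betaSubset} may legitimately be applied to the separator hypergraph~$\calS(H)$. This is exactly what Lemma~\ref{lem:betaSeparator} provides: since $H$ is $\beta$-acyclic, so is $\calS(H)$, whence Algorithm~\ref{algo:betaSubset} correctly computes the subset graph~$G_\calS$ of~$\calS(H)$. Consequently Algorithm~\ref{algo:alphaUnionJoin} returns the union join graph~$G$ of~$H$.

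It then remains to analyse the runtime, which by the complexity bound for Algorithm~\ref{algo:alphaUnionJoin} is $\calO \bigl( T_\calA(H) + N + |G| \bigr)$, where $T_\calA(H)$ is the time Algorithm~\ref{algo:betaSubset} spends on~$\calS(H)$. First I would bound the parameters of~$\calS(H)$ against those of~$H$. Each separator $S\uparr(E_i)$ is a subset of its hyperedge~$E_i$ and there is one separator per non-root node of the join tree; hence $\calS(H)$ has hyperedge count $m_\calS < m$, vertex count $n_\calS \le n$, and total size $N_\calS = \sum_i |S\uparr(E_i)| \le N$. Instantiating the bound for Algorithm~\ref{algo:betaSubset} on these parameters gives $T_\calA(H) = \calO \bigl( N_\calS \log(n_\calS + m_\calS) + |G_\calS| \bigr) = \calO \bigl( N \log(n+m) + |G_\calS| \bigr)$, so everything reduces to controlling the size~$|G_\calS|$ of the intermediate subset graph.

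The main obstacle is precisely this bound on~$|G_\calS|$: a priori a subset graph can have $\Theta \bigl( N^2 / \log^2 N \bigr)$ edges, which would dwarf the target runtime. The key observation is that the edges of~$G_\calS$ are exactly the subset pairs of separators that Algorithm~\ref{algo:alphaUnionJoin} consumes, and that the complexity proof of that algorithm already charges this consumption to~$|G|$. Concretely, for each separator~$S$ the set~$\bbS$ of separators containing~$S$ (including~$S$ itself) has size one more than the in-degree of~$S$ in~$G_\calS$, so $|G_\calS| = \calO \bigl( \sum_S |\bbS| \bigr)$. Because $S \in \bbS$ always contributes the two endpoints of its own tree edge, one to each side of~$S$, every iteration of the loop has $|\bbE_1|, |\bbE_2| \ge 1$, and therefore $|\bbE_1| \cdot |\bbE_2| \ge \tfrac{1}{2} \bigl( |\bbE_1| + |\bbE_2| \bigr) \ge \tfrac{1}{2} |\bbS|$ (recall $|\bbE_1| + |\bbE_2| = |\bbS| + 1$ from line~\ref{line:alphaUJ_detEdges}). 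Since the per-iteration work $|\bbE_1| \cdot |\bbE_2|$ equals the number of edges written into~$G$ during that iteration and, by Lemma~\ref{lem:atMost2Sep}, each edge of~$G$ is written at most twice, summing over all~$S$ yields $\sum_S |\bbS| = \calO \bigl( |G| \bigr)$ and hence $|G_\calS| = \calO \bigl( N + |G| \bigr)$.

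Combining the two bounds gives total runtime $\calO \bigl( N \log(n+m) + |G_\calS| + N + |G| \bigr) = \calO \bigl( N \log(n+m) + |G| \bigr)$, which is the claim. I would also note in passing that $\calS(H)$, together with its incidence graph of size $\Theta(N_\calS) = \calO(N)$, is produced in linear time by Algorithm~\ref{algo:modifyT} (Lemma~\ref{lem:modifyTRuntime}), so furnishing the input expected by Algorithm~\ref{algo:betaSubset} costs nothing beyond the stated budget.
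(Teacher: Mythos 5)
Your proposal is correct and follows essentially the same route as the paper: invoke Lemma~\ref{lem:betaSeparator} to justify running Algorithm~\ref{algo:betaSubset} on~$\calS(H)$ inside Algorithm~\ref{algo:alphaUnionJoin}, then absorb the size of the intermediate subset graph~$G_\calS$ into~$|G|$. Your counting argument for $|G_\calS| = \calO\bigl(|G|\bigr)$ (via $|\bbE_1| \cdot |\bbE_2| \geq \tfrac{1}{2}|\bbS|$ and Lemma~\ref{lem:atMost2Sep}) is a more explicit version of the paper's one-line observation that each edge of~$G_\calS$ forces at least one edge of~$G$; both are sound.
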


\begin{proof}
Let $H$ be the given hypergraph.
Because the separator hypergraph~$\calS(H)$ is $\beta$-acyclic (Lemma~\ref{lem:betaSeparator}), we can use Algorithm~\ref{algo:betaSubset} to compute its subset graph~$G'$ in $\calO \bigl( N \log (n + m) + |G'| \bigr)$ time.
Thus, when using $H$ and Algorithm~\ref{algo:betaSubset} as input, Algorithm~\ref{algo:alphaUnionJoin} computes the union join graph~$G$ of~$H$ in $\calO \bigl( N \log (n + m) + |G'| + |G| \bigr)$ time.
Consider again line~\ref{line:alphaUJ_sepLoop} to line~\ref{line:alphaUJ_addEdges} of Algorithm~\ref{algo:alphaUnionJoin}.
Note that for each edge~$(S, S')$ of~$G'$, there is at least one edge added to~$G$.
It follows that $|G'| \leq |G|$ and, therefore, one can compute~$G$ in $\calO \bigl( N \log (n + m) + |G| \bigr)$ total time.
\qed
\end{proof}

\section{$\gamma$-Acyclic Hypergraphs}
\label{sec:gammaAcyclic}

In~\cite{Fagin1983}, \textsc{Fagin} gives various definitions of $\gamma$-acyclic hypergraphs and presents a polynomial-time recognition algorithm for them.
The definition for $\gamma$-acyclic hypergraphs we give below uses a strong relation between these hypergraphs and distance-hereditary graphs.
Before that, we give a few auxiliary definitions and an interesting property of distance-hereditary graphs.

Let $G = (V, E)$ be a connected, undirected, and simple graph without loops or multiple edges.
The \emph{open} and \emph{closed neighbourhood} of a vertex~$v \in V$ are respectively defined as $N(v) = \{ \, u \mid uv \in E \, \}$ and $N[v] = N(v) \cup \{ v \}$.
A vertex~$v$ is \emph{pendant} if $\bigl| N(v) \bigr| = 1$.
Two vertices $u$ and~$v$ are \emph{false twins} if $N(u) = N(v)$, and are \emph{true twins} if $N[u] = N[v]$.
A graph $G$ is \emph{distance-hereditary} if every induced subgraph is distance preserving, \ie, the distance between two vertices $u$ and~$v$ remains the same in every connected induced subgraph of~$G$ that contains $u$ and~$v$.

An ordering~$\sigma = \langle v_1, v_2, \ldots, v_n \rangle$ for a graph~$G$ is called a \emph{pruning sequence} if each~$v_i$ with $i > 1$ satisfies one of the following conditions in the subraph of~$G$ induced by~$\{ v_1, v_2, \ldots, v_i \}$:
(i)~$v_i$ is pendant,
(ii)~$v_i$ is a true twin of some vertex~$v_j$, or
(iii)~$v_i$ is a false twin of some vertex~$v_j$.
A graph~$G$ is distance-hereditary if and only if $G$ admits a pruning sequence~\cite{BandelMulder1986}.

The recognition algorithm in~\cite{Fagin1983} decides whether or not a given hypergraph is $\gamma$-acyclic by determining if the corresponding incidence graph admits a pruning sequence.
Additionally, \textsc{Ausiello}~et~al.\,\cite{AusiDAtrMosc1986} show that the incidence graphs of $\gamma$-acyclic hypergraphs are so-called \emph{(6, 2)-chordal bipartite} graph which are known to be equivalent to bipartite distance-hereditary graphs~\cite{DAtriMoscar1988a}.
Therefore, we can define $\gamma$-acyclic hypergraphs as follows.

\begin{corollary}
\label{cor:gammaDef}
\cite{AusiDAtrMosc1986,DAtriMoscar1988a}\,%
\cite{Fagin1983}
A hypergraph is \emph{\( \gamma \)-acyclic} if and only if its incidence graph is distance-hereditary.
\end{corollary}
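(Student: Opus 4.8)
The statement is a corollary, so the plan is to assemble it from the facts already recalled just above rather than to argue from first principles. The backbone will be a short chain of equivalences linking $\gamma$-acyclicity of $H$ to a structural property of its incidence graph $\calI(H)$, using crucially that $\calI(H)$ is bipartite by construction.

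First I would invoke Fagin's recognition result~\cite{Fagin1983}: a hypergraph $H$ is $\gamma$-acyclic if and only if $\calI(H)$ admits a pruning sequence, where \emph{pruning sequence} is exactly the pendant / true-twin / false-twin ordering defined in this section. The one point that genuinely needs checking here is that the notion of pruning sequence used by that recognition algorithm coincides with the definition stated in the excerpt; this is the only place where a mismatch could hide, so I would verify that the three admissible operations (appending a pendant vertex, a true twin, or a false twin) are precisely those exploited by the algorithm of~\cite{Fagin1983}.

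Next I would apply the Bandelt--Mulder characterisation~\cite{BandelMulder1986}, already recalled above: a graph is distance-hereditary if and only if it admits a pruning sequence. Composing this with the previous step yields that $H$ is $\gamma$-acyclic if and only if $\calI(H)$ is distance-hereditary, which is exactly the claim. As a cross-check I would also trace the alternative route through \cite{AusiDAtrMosc1986,DAtriMoscar1988a}, which identifies the incidence graphs of $\gamma$-acyclic hypergraphs as the $(6,2)$-chordal bipartite graphs and these in turn as the bipartite distance-hereditary graphs.

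To reconcile the two routes I would observe that no generality is lost by the qualifier \emph{bipartite}: since $\calI(H)$ is bipartite for every hypergraph, ``distance-hereditary'' and ``bipartite distance-hereditary'' are the same condition when the graph in question is an incidence graph. I do not expect any genuine obstacle here, as the result is a synthesis of established characterisations; the only care required is bookkeeping to confirm that the several cited statements all refer to the same pruning/twin operations and that the bipartite restriction is harmless, as just argued.
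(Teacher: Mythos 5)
Your argument matches the paper's own justification almost exactly: the paper likewise combines Fagin's recognition algorithm (which tests whether $\calI(H)$ admits a pruning sequence) with the Bandelt--Mulder pruning-sequence characterisation of distance-hereditary graphs, and also notes the alternative route via $(6,2)$-chordal bipartite graphs from \cite{AusiDAtrMosc1986,DAtriMoscar1988a}. Your extra remark that the qualifier \emph{bipartite} is harmless because incidence graphs are always bipartite is a correct and welcome piece of bookkeeping that the paper leaves implicit.
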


\subsection{Constructing the Union Join Graph}

\begin{lemma}
\label{lem:gammaUnionJoinLine}
An acyclic hypergraph is \( \gamma \)-acyclic if and only if its line graph is isomorphic to its union join graph.
\end{lemma}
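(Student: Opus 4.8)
The plan is to prove both implications through the bridge provided by Lemma~\ref{lem:unionJoinProperties}(\ref{item:edgeSeparates}) and Corollary~\ref{cor:gammaDef}. First I would record the easy containment $G \subseteq L(H)$: in any join tree of a (connected) acyclic hypergraph two adjacent hyperedges share a vertex, since an empty separator would disconnect the hypergraph; hence every edge of the union join graph is an edge of the line graph. Consequently $L(H) = G$ is equivalent to the assertion that \emph{every} intersecting pair $E_i, E_j$ is already adjacent in $G$, which by Lemma~\ref{lem:unionJoinProperties}(\ref{item:edgeSeparates}) means that $E_i \cap E_j$ separates $E_i \setminus E_j$ from $E_j \setminus E_i$ whenever $E_i \cap E_j \neq \emptyset$. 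I would also translate the target class: since incidence graphs are bipartite, Corollary~\ref{cor:gammaDef} lets me use that $H$ is $\gamma$-acyclic if and only if $\calI(H)$ contains no induced cycle of length at least six (bipartite distance-hereditary graphs are exactly the bipartite graphs without such long induced cycles).

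For the direction ``not $\gamma$-acyclic $\Rightarrow L(H) \neq G$'' I would argue directly. Assume $\calI(H)$ has an induced cycle $\langle v_1, F_1, v_2, F_2, \ldots, v_\ell, F_\ell \rangle$ of length $2\ell \ge 6$ alternating between vertices $v_s$ and hyperedges $F_t$ of~$H$; being induced means $v_s \in F_t$ exactly when $t \in \{ s - 1, s \}$ (indices modulo~$\ell$). I would then single out the consecutive hyperedges $F_1, F_2$, which intersect in~$v_2$, and check that the separation condition fails: $v_1 \in F_1 \setminus F_2$, $v_3 \in F_2 \setminus F_1$, and $\langle v_1, v_\ell, v_{\ell - 1}, \ldots, v_3 \rangle$ is a path in~$H$ (consecutive vertices share one of $F_\ell, F_{\ell - 1}, \ldots, F_3$) which avoids $F_1 \cap F_2$, because no cycle vertex other than~$v_2$ lies in both $F_1$ and~$F_2$. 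Hence $F_1 \cap F_2$ does not separate $F_1 \setminus F_2$ from $F_2 \setminus F_1$, so $F_1 F_2 \notin G$ by Lemma~\ref{lem:unionJoinProperties}(\ref{item:edgeSeparates}), while $F_1 F_2 \in L(H)$; thus $L(H) \neq G$.

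For the converse ``$L(H) \neq G \Rightarrow$ not $\gamma$-acyclic'' I would start from an intersecting pair $E_i, E_j$ that is not an edge of~$G$ and use Lemma~\ref{lem:unionJoinProperties}(\ref{item:edgeSeparates}) to obtain a witnessing path $\langle v_1, \ldots, v_k \rangle$ in~$H$ with $v_1 \in E_i \setminus E_j$, $v_k \in E_j \setminus E_i$, avoiding $E_i \cap E_j$, together with hyperedges $F_t \ni v_t, v_{t + 1}$. Choosing a common vertex $c \in E_i \cap E_j$, these data close up to a cycle $\langle c, E_i, v_1, F_1, \ldots, v_k, E_j \rangle$ in~$\calI(H)$ of length $2k + 2 \ge 6$ (note $v_1 \neq v_k$, so $k \ge 2$). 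The goal is to extract from it an \emph{induced} cycle of length at least six, contradicting $\gamma$-acyclicity via the characterisation above. I would obtain this by choosing, over all non-$G$-adjacent intersecting pairs, all witnessing paths, and all common vertices~$c$, a configuration whose associated cycle is shortest, and then ruling out chords: bipartiteness already excludes every chord between two vertices or between two hyperedges, so only vertex-to-hyperedge chords remain, and each such chord would yield a strictly shorter admissible configuration, contradicting minimality.

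I expect this chord-elimination step to be the main obstacle, and in particular the interaction between the omnipresence of the vertices of $E_i \cap E_j$ in the path hyperedges and the choice of the closing common neighbour~$c$ (a naive choice of~$c$ is adjacent to every path hyperedge and produces many chords); this is the delicate point that the forward direction avoids entirely. Once chordlessness is established, the cycle is an induced cycle of length at least six, so $\calI(H)$ is not distance-hereditary and, by Corollary~\ref{cor:gammaDef}, $H$ is not $\gamma$-acyclic, which completes the contrapositive and hence the lemma.
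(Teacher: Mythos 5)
Your reduction of the lemma to the statement ``$H$ is $\gamma$-acyclic if and only if, for all distinct hyperedges, $E_i \cap E_j \neq \emptyset$ implies that $E_i \cap E_j$ separates $E_i \setminus E_j$ from $E_j \setminus E_i$'' is exactly the reduction the paper makes, including the observation that every edge of~$G$ is an edge of~$L(H)$ and the use of property~\ref{item:edgeSeparates} of Lemma~\ref{lem:unionJoinProperties}. The paper, however, obtains that statement by citing it as a known characterisation of $\gamma$-acyclicity from \cite{Fagin1983}, whereas you attempt to derive it from Corollary~\ref{cor:gammaDef}, and this is where the proposal breaks down. The characterisation you rely on --- that a bipartite graph is distance-hereditary if and only if it has no induced cycle of length at least six --- is false. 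The correct class is the $(6,2)$-chordal bipartite graphs (every cycle of length at least six has at least \emph{two} chords), as noted in the paper via \cite{AusiDAtrMosc1986,DAtriMoscar1988a}; the domino (a six-cycle with exactly one chord) is bipartite, has no induced cycle of length at least six, and is not distance-hereditary.

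This is not a cosmetic issue; it defeats both of your directions. Take $H$ with hyperedges $E_1 = \{a, c, e\}$, $E_2 = \{c, e\}$, $E_3 = \{a, e\}$: its incidence graph is the domino, and $H$ is acyclic (join tree $E_2 - E_1 - E_3$) but not $\gamma$-acyclic. For your forward direction, there is no induced cycle of length at least six to start from, so your argument produces no non-edge of~$G$, even though $E_2E_3$ is one ($\{e\}$ does not separate $\{c\}$ from $\{a\}$, since $E_1$ contains both $a$ and~$c$). For the converse, the same example shows that the chord-elimination step you flag as the main obstacle cannot succeed in general: the only closing cycle for the failing pair $E_2, E_3$ is the six-cycle $e, E_2, c, E_1, a, E_3$, which necessarily carries the chord $eE_1$ because $e \in E_1$, and $\calI(H)$ contains no induced cycle of length at least six at all. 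Repairing the argument would require working with the $(6,2)$-chordal characterisation, or with the full forbidden-subgraph list for distance-hereditary graphs (which includes the domino), and handling cycles with one chord alongside chordless ones --- at which point you would essentially be re-proving the theorem of \cite{Fagin1983} that the paper simply invokes.
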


\begin{proof}
Let $H$ be an acyclic hypergraph with two distinct hyperedges $E_i$ and~$E_j$, and let $G$ be the union join graph of~$H$.
Consider the following statements:
(i)~$E_iE_j$ is an edge of~$L(H)$,
(ii)~$E_i \cap E_j \neq \emptyset$,
(iii)~$E_i \cap E_j$ separates $E_i \setminus E_j$ from $E_j \setminus E_i$, and
(iv)~$E_iE_j$ is an edge of~$G$.
Due to definitions and due to Lemma~\ref{lem:unionJoinProperties}, it follows that (i) is equivalent to~(ii), that (iii) implies~(ii), and that (iii) is equivalent to~(iv).

To prove Lemma~\ref{lem:gammaUnionJoinLine}, we first assume that $H$ is $\gamma$-acyclic.
It is know~\cite{Fagin1983} that a hypergraph is $\gamma$-acyclic if and only if (ii) implies~(iii) for all distinct hyperedges $E_i$ and~$E_j$.
Therefore, if $H$ is $\gamma$-acyclic, the statements (i), (ii), (iii), and~(iv) are equivalent and, subsequently, $L(H) = G$.

Now assume that $L(H) = G$, \ie, that (i) and~(iv) are equivalent for all distinct hyperedges $E_i$ and~$E_j$.
It then follows that (ii) and~(iii) are equivalent and, as a result, that (ii) implies~(iii).
The same observation from~\cite{Fagin1983} then implies that $H$ is $\gamma$-acyclic if $L(H) = G$.
\qed
\end{proof}

\begin{theorem}
There is an algorithm that computes the union join graph~\( G \) of a given \( \gamma \)-acyclic hypergraph in \( \calO \bigl( N + |G| \bigr) \) time.
\end{theorem}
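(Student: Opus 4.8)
The plan is to exploit Lemma~\ref{lem:gammaUnionJoinLine}, which tells us that for a $\gamma$-acyclic hypergraph~$H$ the union join graph~$G$ coincides (up to isomorphism) with the line graph~$L(H)$. Hence it suffices to compute $L(H)$ in $\calO(N + |G|)$ time. Recall that $L(H) = \2Sec(H^*)$, and that $H$ is $\gamma$-acyclic exactly when its incidence graph~$\calI(H)$ is distance-hereditary (Corollary~\ref{cor:gammaDef}). Since $\calI(H^*)$ is just $\calI(H)$ with the two sides relabelled, the dual~$H^*$ is $\gamma$-acyclic as well, and $L(H)$ is the 2-section graph of a $\gamma$-acyclic hypergraph. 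So the whole task reduces to: given a distance-hereditary bipartite graph (the incidence graph of~$H$), compute the intersection graph of the hyperedges efficiently.

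\textbf{The main obstacle} is that naively computing $L(H)$ by comparing all pairs of hyperedges, or by iterating over each vertex~$v$ and marking all pairs of hyperedges containing~$v$, costs $\sum_v \deg(v)^2$ time, which can be $\Theta(N^2)$ even when $|G|$ is small (think of a single vertex in many hyperedges). The key will be to avoid this blow-up by using the pruning-sequence structure of~$\calI(H)$ guaranteed by Corollary~\ref{cor:gammaDef}. First I would compute a pruning sequence of~$\calI(H)$ in linear time (this is standard for distance-hereditary graphs). Then I would process the vertices and hyperedges in the reverse pruning order, building up the edge set of~$L(H)$ incrementally. The crucial observation is that in a pruning sequence every vertex is added as a pendant, a true twin, or a false twin of an already-present vertex; for hyperedge-nodes of~$\calI(H)$ this translates directly into inclusion and intersection relations among the hyperedges, so the adjacencies of a newly added hyperedge in~$L(H)$ can be read off from those of the vertex it twins or attaches to, rather than recomputed from scratch.

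\textbf{Concretely}, I would maintain, for each hyperedge-node already inserted, a representation of its neighbourhood in the partially built line graph. When a hyperedge~$E$ enters as a \emph{true twin} of an existing hyperedge~$E'$ in~$\calI(H)$, the two share the same vertex set, so $E$ intersects exactly the same hyperedges as~$E'$ (plus~$E'$ itself); its neighbours in $L(H)$ are those of $E'$ together with $E'$, which can be handled by sharing/copying a pointer or by charging the copied edges to~$|G|$. When $E$ enters as a \emph{false twin}, the analysis is analogous using the structure of the twin. When $E$ enters as a \emph{pendant}, it contains essentially one ``new'' vertex and its intersections are controlled by that single attachment, again giving $L(H)$-neighbours that are cheap to enumerate. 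In every case the number of line-graph edges touched while processing a node is bounded by a constant plus the number of genuinely new edges of~$L(H)$ created, so the total work is $\calO(N + |L(H)|) = \calO(N + |G|)$.

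\textbf{Finally}, once $L(H)$ is built I simply output it as the union join graph, the equality $G = L(H)$ being exactly Lemma~\ref{lem:gammaUnionJoinLine}. Summing the linear-time pruning-sequence computation, the linear-time setup of~$\calI(H)$, and the charged edge-enumeration gives the claimed $\calO(N + |G|)$ bound. The delicate point to get right in the full write-up is the twin/pendant case analysis: I must verify that reading off a new hyperedge's line-graph neighbours from its twin never forces me to rescan a large neighbourhood a second time, so that each unit of work is charged either to an input incidence (contributing to~$N$) or to a distinct output edge (contributing to~$|G|$).
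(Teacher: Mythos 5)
Your reduction is the paper's: invoke Lemma~\ref{lem:gammaUnionJoinLine} to replace the union join graph by $L(H) = \2Sec(H^*)$ and note via Corollary~\ref{cor:gammaDef} that $H^*$ is again $\gamma$-acyclic. Where you diverge is in how the line graph is actually computed. The paper does not touch the pruning sequence at all here: it takes a join tree $T$ of $H^*$, processes its nodes in pre-order, and for each hyperedge of $H^*$ repeatedly picks a not-yet-flagged vertex, joins it to all already-flagged vertices of that hyperedge, and flags it. Every edge so created is new (an unflagged vertex has no incident edges yet), and a short join-tree argument shows every required edge is created, so the cost is $\calO(N + |G|)$ with essentially no case analysis. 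Your route--rebuilding $\calI(H)$ along its pruning sequence and charging each line-graph adjacency to the twin or pendant step that reveals it--is a genuinely different and, I believe, workable algorithm, but it is strictly more delicate, and you yourself flag the delicate part as unverified.

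Concretely, three things are missing before your charging argument closes. First, true twins cannot occur between two nodes on the same side of a bipartite graph (the paper notes this explicitly), so that branch of your case analysis is vacuous; what you actually need is the false-twin-of-a-hyperedge case, where equal current vertex sets let you copy the twin's current $L(H)$-neighbourhood at a cost equal to the number of new output edges. Second, the pendant-hyperedge case needs a real lemma: if $E''$ enters as a pendant attached to the vertex-node $z$, you must show that \emph{every} earlier-inserted hyperedge that intersects $E''$ in the final hypergraph already contains $z$; otherwise enumerating the hyperedges containing $z$ misses adjacencies, and those missed intersections would have to surface at some later insertion. (This does hold--any other shared vertex is inserted later and, chasing false-twin witnesses backwards, forces a shared vertex present at $E''$'s insertion, which must be $z$--but it is the crux and is nowhere argued.) Third, you need the complementary claim that vertex-node insertions never create new line-graph edges: a pendant vertex lies in one hyperedge, and a false-twin vertex duplicates the memberships of an existing vertex, so the hyperedge pairs it witnesses are already adjacent. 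With those three points established your accounting goes through and gives the same $\calO(N + |G|)$ bound; without them the proposal asserts its main invariant rather than proving it. Given that the join-tree flagging scheme achieves the result with far less machinery, I would either supply these lemmas in full or switch to the simpler approach.
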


\begin{proof}
Due to Lemma~\ref{lem:gammaUnionJoinLine}, we can compute the union join graph~$G$ of a $\gamma$-acyclic hypergraph~$H$ by computing its line graph.
Note that, by definition, $L(H) = \2Sec(H^*)$.
It follows from Corollary~\ref{cor:gammaDef} that the dual hypergraph~$H^*$ is $\gamma$-acyclic too.
Therefore, we can compute $G = \2Sec(H^*)$ as follows.

Let $T$ be a join tree of~$H^*$ rooted in an arbitrary node.
Process each hyperedge of~$H^*$ according to a pre-order on~$T$.
When processing a hyperedge~$E$ of~$H^*$, pick a vertex~$v \in E$ that has not been flagged, make $v$ adjacent (in~$G$) to all flagged vertices in~$E$, and then flag~$v$.
Repeat that until all vertices in~$E$ are flagged and afterwards continue with the next hyperedge.

By flagging vertices, we ensure that an edge~$uv$ is added to~$G$ at most once even if both vertices are together in multiple hyperedges of~$H^*$.
Therefore, since we can construct~$T$ in $\calO(N)$ time~\cite{TarjanYannak1984}, we can construct~$G$ from~$H$ in $\calO \bigl( N + |G| \bigr)$ time.
\qed
\end{proof}

\subsection{Constructing the Subset Graph}

\subsubsection{Bachman Diagrams.}

Consider a hypergraph~$H = (V, \calE)$, let $\calE'$ be a subset of~$\calE$, and let $\frkX$ be the intersection of all hyperedges in~$\calE'$.
We then define $\calX$ as the set of all such $\frkX$ which are non-empty, \ie,
\[
    \calX = \bigcup_{\calE' \subseteq \calE} \Bigl \{ \, \frkX \Bigm| {\textstyle \frkX = \bigcap_{E \in \calE'} E}, \frkX \neq \emptyset \, \Bigr \}.
\]
The \emph{Bachman diagram \( \calB(H) \)} of~$H$ is a directed graph with the node set~$\calX$ such that there is an edge from $\frkX$ to~$\frkY$ if $\frkX \supset \frkY$ and there is no $\frkZ$ with $\frkX \supset \frkZ \supset \frkY$.
Note that, if $H$ contains two distinct hyperedges $E_i$ and~$E_j$ with the same vertices, they are represented by the same node in~$\calB(H)$.

\begin{lemma}
\label{lem:bachmanTree}
\cite{Fagin1983}
A hypergraph is \( \gamma \)-acyclic if and only if its Bachman diagram forms a tree.
\end{lemma}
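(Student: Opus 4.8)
The plan is to prove the equivalence in its contrapositive form. Since the incidence graph~$\calI(H)$ is assumed connected, I would first observe that $\calB(H)$ is connected, so that ``$\calB(H)$ forms a tree'' becomes equivalent to ``$\calB(H)$ is acyclic.'' Connectivity of $\calB(H)$ holds because every node $\frkX \in \calX$ lies below some hyperedge in the intersection order (it is, by definition, an intersection of hyperedges), while any two hyperedges are joined by a walk in $\calB(H)$: a path between them in the connected graph~$\calI(H)$ passes through a sequence of hyperedges in which consecutive ones share a vertex, and each shared vertex yields a common lower node in $\calB(H)$. It then remains to show that $\calB(H)$ contains a cycle if and only if $H$ is not $\gamma$-acyclic.

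For the $\gamma$-acyclicity side I would not invoke Corollary~\ref{cor:gammaDef} directly but rather the separation characterisation already used in the proof of Lemma~\ref{lem:gammaUnionJoinLine}: $H$ is $\gamma$-acyclic if and only if, for all distinct hyperedges $E_i,E_j$ with $E_i \cap E_j \neq \emptyset$, the set $E_i \cap E_j$ separates $E_i \setminus E_j$ from $E_j \setminus E_i$ in~$H$. The point of this choice is that both a cycle in $\calB(H)$ and a failure of this condition are naturally encoded by a closed chain of pairwise intersecting hyperedges, which is the object I would use to pass between the two pictures. Concretely, for the direction ``violation implies cycle,'' suppose distinct $E_i, E_j$ satisfy $W := E_i \cap E_j \neq \emptyset$ while some path~$P$ from $x \in E_i \setminus E_j$ to $y \in E_j \setminus E_i$ avoids~$W$; take $P$ shortest. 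The hyperedges witnessing the consecutive steps of~$P$, together with $E_i$ and~$E_j$, form a closed sequence in which consecutive members share a vertex. For each consecutive pair I pick the node of $\calB(H)$ equal to the intersection of all hyperedges containing that shared vertex, and join it upward (along a descending chain of covers) to the two flanking hyperedges; going around the sequence produces a closed walk in the undirected Hasse diagram. Minimality of~$P$ forces the shared vertices along~$P$, all lying outside~$W$, to give pairwise distinct lower nodes, each distinct from the node through which $E_i$ and~$E_j$ meet at~$W$, so the closed walk is a genuine cycle.

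For the converse, given a shortest cycle~$C$ in $\calB(H)$, I would take an inclusion-minimal node~$m$ on~$C$ with neighbours $c,d$ on~$C$; since edges of $\calB(H)$ are covering pairs, $c,d \supsetneq m$. I then choose representative hyperedges $E_i \supseteq c$ and $E_j \supseteq d$ together with vertices distinguishing them, and use the arc of~$C$ joining $c$ and~$d$ without passing through~$m$ to read off, covering step by covering step, a vertex path in~$H$ between these two hyperedges that avoids $E_i \cap E_j$; this exhibits a failure of the separation condition, so $H$ is not $\gamma$-acyclic. \textbf{The main obstacle} is precisely this faithful translation between covering cycles of the intersection poset and separation-violating vertex paths of~$H$. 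The delicate part is the covering condition defining the edges of $\calB(H)$: when moving up or down between an intersection node and a hyperedge one must trace actual Hasse edges rather than long comparabilities, choose the representatives $E_i,E_j$ and the distinguishing vertices so that the extracted path genuinely avoids $E_i \cap E_j$, and guarantee that shortest paths (respectively shortest cycles) yield non-degenerate cycles instead of backtracking walks. I expect that arguing throughout with a shortest witness and carefully tracking which vertices survive in each intersection node is what makes non-degeneracy go through; as a cross-check, an alternative is induction along a pruning sequence of the distance-hereditary graph~$\calI(H)$ (Corollary~\ref{cor:gammaDef}), verifying that deleting a pendant vertex, or a false-twin vertex or hyperedge, changes $\calB(H)$ only by removing a leaf or merging identically-labelled nodes, and hence preserves the tree property in both directions.
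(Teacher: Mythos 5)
The paper does not actually prove this lemma: it is imported verbatim from \cite{Fagin1983}, so there is no in-paper argument to compare yours against, and your attempt has to stand on its own. Your reduction --- observe that $\calB(H)$ is connected, then show it has a cycle if and only if the separation characterisation of $\gamma$-acyclicity fails --- is a legitimate strategy, and the connectivity argument is fine. Both main directions, however, have real gaps, essentially at the spot you yourself flag as ``the main obstacle.''

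In the direction ``separation fails $\Rightarrow$ cycle,'' the inference from ``the low milestones are pairwise distinct'' to ``the closed walk is a genuine cycle'' is invalid: a closed walk in a tree can visit arbitrarily many pairwise distinct vertices (walk around all the leaves of a star and back through its centre), and the ascending/descending cover chains you splice in between milestones are not canonical and may share edges, so no amount of care with the shortest witness rescues this particular inference. The workable repair argues the contrapositive via uniqueness of tree paths instead of exhibiting a cycle: if $\calB(H)$ is a tree, the two saturated chains from $E_i$ and from $E_j$ down to the node $E_i \cap E_j$ meet only in that node (any common node $\frkZ$ satisfies $E_i \cap E_j \subseteq \frkZ \subseteq E_i \cap E_j$), so the unique tree path from $E_i$ to $E_j$ passes through $E_i \cap E_j$; on the other hand, every node of your walk $E_i \rightarrow N_{v_0} \rightarrow \dots \rightarrow N_{v_k} \rightarrow E_j$, where $N_{v_t} = \bigcap_{E \ni v_t} E$, contains some vertex of~$P$, and the unique tree path is contained in that walk, forcing $E_i \cap E_j$ to contain a vertex of~$P$. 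In the converse direction the choice of representatives is the unresolved problem: for $E_i \supseteq c$ and $E_j \supseteq d$ you only know $E_i \cap E_j \supseteq c \cap d = m$, and the containment can be strict --- indeed $m$ need not be realisable as any pairwise intersection of hyperedges (for $\{0,1,2\}, \{0,1,3\}, \{0,2,3\}$ the node $\{0\}$ arises only as the triple intersection) --- so a vertex path read off the arc avoiding $m$ need not avoid $E_i \cap E_j$, and no choice of $E_i$, $E_j$ obviously fixes this. Your fallback of inducting along a pruning sequence is closer in spirit to what the paper does elsewhere (Lemma~\ref{lem:bachmanAlgo} proves by exactly such an induction that the diagram built by Algorithm~\ref{algo:bachman} is a tree), but note it can only deliver the forward implication: a non-$\gamma$-acyclic hypergraph admits no pruning sequence to induct along, so the direction ``tree $\Rightarrow$ $\gamma$-acyclic'' still needs a separate argument.
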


In a Bachman diagram~$\calB(H)$ as defined above, a vertex~$v$ of~$H$ is often contained in multiple nodes.
A technique from~\cite{UeharaUno2009} allows us to construct a more compact representation of~$\calB(H)$.
Let $N(\frkX)$ be the set of nodes~$\frkY$ such that $(\frkX, \frkY)$ is an edge of~$\calB(H)$.
We then define the \emph{label} of~$\frkX$ as $\ell(\frkX) := \frkX \setminus \bigcup_{\frkY \in N(\frkX)} \frkY$.
As a result, a vertex~$v$ of~$H$ is only in the label of the ``smallest'' node~$\frkY$ containing it.
Consider now a Bachman diagram~$\calB(H)$ with the node set~$\calX$ where we replace each node~$\frkX \in \calX$ with $X = \ell(\frkX)$.
We call the resulting graph~$B$ a \emph{simplified} Bachman diagram of~$H$.
Figure~\ref{fig:BachmanExample} gives an example.

\begin{figure}
    \centering

    \subfloat[][]
    {%
        \centering
        \begin{tikzpicture}
[
    xscale=2,
    yscale=1.5,
    every node/.style=
    {
        draw,
        inner sep = 5pt,
        minimum size = 1.5em,
    }
]

\node [rectangle] (nOr) at (0,  0) {$a, b, c$};
\node [rectangle] (nBl) at (0, -2) {$a, d$};
\node [rectangle] (nGr) at (1, -2) {$c, e, f$};
\node [rectangle] (nRe) at (1,  0) {$b, c$};

\node [circle] (nOB) at (0, -1) {$a$};
\node [circle] (nRG) at (1, -1) {$c$};

\begin{scope}
[
    -latex,
    shorten > = 1pt
]

\draw (nOr) -- (nRe);
\draw (nOr) -- (nOB);
\draw (nBl) -- (nOB);
\draw (nRe) -- (nRG);
\draw (nGr) -- (nRG);

\end{scope}
\end{tikzpicture}
        \label{fig:Bachman}
    }%
    \hfil%
    \subfloat[][]
    {%
        \centering
        \begin{tikzpicture}
[
    xscale=2,
    yscale=1.5,
    every node/.style=
    {
        draw,
        inner sep = 5pt,
        minimum size = 1.5em,
    }
]

\node [rectangle] (nOr) at (0,  0) {};
\node [rectangle] (nBl) at (0, -2) {$d$};
\node [rectangle] (nGr) at (1, -2) {$e, f$};
\node [rectangle] (nRe) at (1,  0) {$b$};

\node [circle] (nOB) at (0, -1) {$a$};
\node [circle] (nRG) at (1, -1) {$c$};

\begin{scope}
[
    -latex,
    shorten > = 1pt
]

\draw (nOr) -- (nRe);
\draw (nOr) -- (nOB);
\draw (nBl) -- (nOB);
\draw (nRe) -- (nRG);
\draw (nGr) -- (nRG);

\end{scope}
\end{tikzpicture}
        \label{fig:BachmanSimple}
    }%

    \caption
    {%
        The Bachman diagram~\protect\subref{fig:Bachman} and its simplified version~\protect\subref{fig:BachmanSimple} for a $\gamma$-acyclic hypergraph~$H$ with the hyperedges~$\{ a, b, c \}$, $\{ a, d \}$, $\{ b, c \}$, and $\{ c, e, f \}$.
        Nodes which represent hyperedges of~$H$ are drawn as rectangles; other nodes are drawn as circles.
    }
    \label{fig:BachmanExample}
\end{figure}
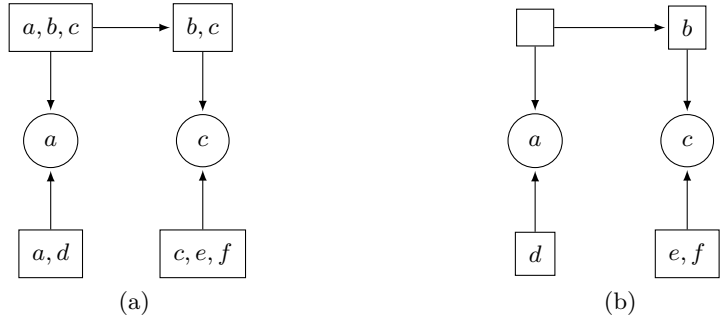

Let $B$ be a simplified Bachman diagram for a hypergraph~$H = (V, \calE)$.
We use the following functions and notations when working with $B$ and~$H$.
The function~$\phi$ maps~$\calE$ onto the nodes of~$B$ such that $\phi(E)$ returns the node which represents~$E$.
Accordingly, we define $\Phi(X) := \bigl \{ \, E \in \calE \bigm| \phi(E) = X \, \}$ for all nodes~$X$ of~$B$.
Similar to~$\phi$, we define $\psi$ as a function that maps $V$ onto the nodes of~$B$ such that $\psi(v)$ returns the node which contains~$v$.
For two nodes $X$ and~$Y$, we write $X \leadsto Y$ to state that there is a path form $X$ to~$Y$ in~$B$.
Note that we assume that $X \leadsto X$.
Lastly, we define $\bbV(X) = \{ \, v \in Y \mid X \leadsto Y \, \}$.
Note that $\bbV$ is effectively the inverse of the label function~$\ell$ we used above.

\subsubsection{Subset Graph via Simplified Bachman Diagrams.}

We can make the following observation:
For two hyperedges $E_i$ and~$E_j$ of~$H$, $E_i \subseteq E_j$ if and only if $\phi(E_j) \leadsto \phi(E_i)$ in the (simplified) Bachman diagram of~$H$.
In the remainder of this subsection, we present algorithms which first constructs a simplified Bachman diagram for a given $\gamma$-acyclic hypergraph~$H$ and then uses the previous observation to compute the subset graph~$G$ of~$H$ in $\calO \bigl( N + |G| \bigr)$ time.

To the best of our knowledge, there exist only two published algorithms which compute (simplified) Bachman diagrams.
\textsc{Kumar}~et~al.\,\cite{KumaShirGhos2009} present an $\calO \bigl( nm^2 \bigr)$-time algorithm to compute a Bachman diagram for a $\gamma$-acyclic database schema.
\textsc{Uehara} and \textsc{Uno}~\cite{UeharaUno2009} present a linear-time algorithm that computes a simplified Bachman diagram for the maximal cliques of a ptolemaic graph; these cliques form a $\gamma$-acyclic hypergraph~\cite{DAtriMoscar1988b}.
Using that algorithm would require us to first compute the 2-section graph of~$H$.
That may result in overall quadratic runtime for some hypergraphs.
We therefore use neither of these algorithms.
Instead, we present a new algorithm which computes a simplified Bachman diagram for a given $\gamma$-acyclic hypergraph in $\calO(N)$ time.

Recall that the incidence graph of a $\gamma$-acyclic hypergraph~$H$ is distance-hereditary.
It therefore admits a pruning sequence $\sigma = \langle x_1, x_2, \ldots, x_{n + m} \rangle$.
Note that each~$x_i$ in~$\sigma$ can represent either a vertex or a hyperedge of~$H$.
The idea for our algorithm is to iterate over~$\sigma$ and to step by step construct~$B$.
For that, let $\calI_i$ denote the subgraph of~$\calI(H)$ induced by~$\{ x_1, x_2, \ldots, x_i \}$.

We start the construction with $x_1$ and~$x_2$.
Note that one of them has to represent a vertex~$v$ and the other a hyperedge~$E$ of~$H$.
Therefore, we initialise~$B$ with a single node~$X = \{ v \}$ and set $\phi(E) := X$ and $\psi(v) := X$.

Next, we iterate over~$\sigma$, starting with~$x_3$.
Since incidence graphs are bipartite, it is never the case that $x_i$ is the true twin of some~$x_j$ (with the exception of~$i = 2$).
Hence, we have four possible cases for each $x_i$:
\begin{enumerate*}[(i)]
    \item
        \label{case:xTwinV}
        $x_i$ represents a vertex of~$H$ and is a false twin in~$\calI_i$,
    \item
        \label{case:xTwinE}
        $x_i$ represents a hyperedge and is a false twin,
    \item
        \label{case:xPendV}
        $x_i$ represents a vertex and is pendant, or
    \item
        \label{case:xPendE}
        $x_i$ represents a hyperedge and is pendant.
\end{enumerate*}

If $x_i$ is a twin (cases \ref{case:xTwinV} and~\ref{case:xTwinE}), the idea is to make the new vertex or hyperedge behave as its twin.
For a vertex~$v$, that means to add $v$ into the same node of~$B$.
In case of a hyperedge~$E$, it is represented by the same node of~$B$ as its twin.

If $x_i$ is pendant, adding it may affect the structure of~$B$.
For example, let $x_i$ represent a vertex~$v$ added to a hyperedge~$E$ (case~\ref{case:xPendV}).
If, with respect to $\calI_{i - 1}$, $E$ is not subset of another hyperedge (including not being a twin), then we can simply add $v$ into~$\phi(E)$.
However, if $E$ is subset of some hyperedge, it is no longer after adding~$v$.
We subsequently need to update the structure of~$B$.
To do so, we add a new node~$Y$, make it the representative of~$E$, and add an edge from~$Y$ to the node of~$B$ which previously represented~$E$.
We handle case~\ref{case:xPendE} in a similar way.

Algorithm~\ref{algo:bachman} implements the approach above and describes in detail how to handle each of the four cases for~$x_i$.

\begin{algorithm}[!htb]
\caption
{%
    Computes the Bachman diagram for a given $\gamma$-acyclic hypergraph.
}
\label{algo:bachman}

\KwIn
{
    A $\gamma$-acyclic hypergraph~$H = (V, \calE)$.
}

\KwOut
{
    A simplified Bachman diagram~$B$ for~$H$.
}

Compute a pruning sequence~$\sigma = \langle x_1, x_2, \ldots, x_{n + m} \rangle$ for~$\calI(H)$ (see~\cite{DamiHabiPaul2001}).
\label{line:bach_pruning}

Create a new empty graph~$B$.
\label{line:bach_emptyB}

Let $x_1$ and $x_2$ represent the vertex~$v$ and hyperedge~$E$ of~$H$.
Create a new set~$X = \{ v \}$, add it as node to~$B$, and set $\phi(E) := X$ and $\psi(v) := X$.
\label{line:bach_firstV}

\For
{%
    \( i := 3 \) \KwTo \( n + m \)%
    \label{line:bach_loop}
}
{%
    \If
    {%
        \( x_i \) represents a vertex~\( v \in V \) and is a false twin in~\( \calI_i \)
    }
    {%
        Let $u$ be the vertex represented by a twin of~$x_i$ in~$\calI_i$ and let $X = \psi(u)$.

        Add $v$ into~$X$, \ie, set $\psi(v) := X$ and $X := X \cup \{ v \}$.
        \label{line:bach_twinV_addV}
    }

    \smallskip

    \If
    {%
        \( x_i \) represents a hyperedge~\( E \in \calE \) and is a false twin in~\( \calI_i \)
    }
    {%
        Let $E'$ be the hyperedge represented by a twin of~$x_i$ in~$\calI_i$.

        Set $\phi(E) := \phi(E')$.
        \label{line:bach_twinE_setPhi}
    }

    \smallskip

    \If
    {%
        \( x_i \) represents a vertex~\( v \in V \) and is pendant in~\( \calI_i \)
    }
    {%
        Let $E$ be the hyperedge represented by the neighbour of~$x_i$ in~$\calI_i$ and let $X = \phi(E)$.

        \uIf
        {%
            \( \bigl| \Phi(X) \bigr| = 1 \) and \( X \) has no incoming edges in~\( B \)
        }
        {%
            Add $v$ into~$X$, \ie, set $\psi(v) := X$ and $X := X \cup \{ v \}$.
            \label{line:bach_pendV_addV}
        }
        \Else
        {%
            Create a new set~$Y = \{ v \}$, add it as node to~$B$, set $\psi(v) := Y$ and $\phi(E) := Y$, and add the edges~$(Y, X)$ to~$B$.
            \label{line:bach_pendV_newSet}
        }
    }

    \smallskip

    \If
    {%
        \( x_i \) represents a hyperedge~\( E \in \calE \) and is pendant in~\( \calI_i \)
    }
    {%
        Let $v$ be the vertex represented by the neighbour of~$x_i$ in~$\calI_i$ and let $X = \psi(v)$.

        \uIf
        {%
            \( |X| = 1 \) and \( X \) has no outgoing edges in~$B$
        }
        {%
            Set $\phi(E) := X$.
            \label{line:bach_pendE_setPhi}
        }
        \Else
        {%
            Create a new set~$Y = \{ v \}$, add it as node to~$B$, set $X := X \setminus \{ v \}$, set $\psi(v) := Y$ and $\phi(E) := Y$, and add the edge~$(X, Y)$ to~$B$.
            \label{line:bach_pendE_newSet}
        }
    }
}
\end{algorithm}

\begin{lemma}
\label{lem:bachmanAlgo}
Algorithm~\ref{algo:bachman} computes a simplified Bachman diagram for a given \( \gamma \)-acyclic hypergraph in linear time.
\end{lemma}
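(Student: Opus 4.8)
The plan is to prove Lemma~\ref{lem:bachmanAlgo} in two parts: correctness (the output is indeed a simplified Bachman diagram for $H$) and running time (linear). For correctness, I would argue by induction over the pruning sequence~$\sigma$ that after processing~$x_i$, the graph~$B$ together with the maps $\phi$ and~$\psi$ is a correct simplified Bachman diagram for the sub-hypergraph~$H_i$ induced by~$\calI_i$. The base case $i = 2$ is immediate from line~\ref{line:bach_firstV}. For the inductive step I would verify each of the four cases. The key invariant to maintain is the following: for every node~$X$ of~$B$, the set $\bbV(X)$ equals the intersection $\bigcap_{E \in \Phi(X)} E$ of the hyperedges it represents (so the nodes of~$B$ are exactly the non-empty intersections from~$\calX$), and the edges of~$B$ capture the immediate-superset relation on these intersection sets.

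\textbf{Correctness, case by case.} For the twin cases~\ref{case:xTwinV} and~\ref{case:xTwinE}, the new vertex or hyperedge has the same neighbourhood in~$\calI_i$ as its twin, so it participates in exactly the same intersections; placing~$v$ in the same node or letting~$E$ share the representative of~$E'$ preserves the invariant without changing the structure of~$B$. The pendant cases are where the structure of~$B$ can genuinely change. For case~\ref{case:xPendV}, a vertex~$v$ is attached to a single hyperedge~$E$, so $E$ grows by one element while all other hyperedges are unaffected; I would show that if $X = \phi(E)$ has no incoming edge and represents only~$E$, then $E$ was not a proper superset of any other hyperedge and adding~$v$ to~$X$ is correct (line~\ref{line:bach_pendV_addV}); otherwise a new top node~$Y$ must be spliced in above~$X$ (line~\ref{line:bach_pendV_newSet}) to record that $E$ is now strictly larger. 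Case~\ref{case:xPendE} is dual: a new hyperedge~$E$ contains exactly one old vertex~$v$, hence $E$ becomes a (possibly new) minimal intersection, handled by either reusing~$X = \psi(v)$ or splitting off a new bottom node~$Y$ and moving~$v$ into it (lines~\ref{line:bach_pendE_setPhi} and~\ref{line:bach_pendE_newSet}). The guards on $|\Phi(X)|$, $|X|$, and the presence of incident edges are precisely the conditions under which no structural change is needed, and I would check that in each branch the invariant on $\bbV$ and on the edge set is restored. Since Lemma~\ref{lem:bachmanTree} guarantees that $\calB(H)$ is a tree for $\gamma$-acyclic~$H$, these local insertions suffice to build the whole diagram.

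\textbf{Running time.} The pruning sequence is computed in linear time by the cited algorithm~\cite{DamiHabiPaul2001}. The loop runs $n + m - 2$ iterations, and I would argue each iteration does work proportional to the degree of~$x_i$ in~$\calI_i$ plus a constant, so that the total is $\calO(N)$. The delicate points are: recognising which of the four cases applies and, in the twin cases, locating the twin and its representative node; and, in the pendant cases, testing the guard conditions (whether $X$ has incoming/outgoing edges and the sizes of~$\Phi(X)$ and~$X$). Each of these can be maintained in $\calO(1)$ amortised time by storing, at every node, a counter for $|\Phi(X)|$, the label size~$|X|$, and in/out-degree, all updated incrementally as nodes are created or edges added. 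The twin and pendant information itself is a by-product of the pruning-sequence computation, so no extra cost is incurred there.

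\textbf{The main obstacle} I expect is the pendant cases, specifically proving that the local guard conditions exactly characterise when the node structure of~$B$ must change. I would need to show that when a pendant vertex is added to~$E$ and $X = \phi(E)$ has an incoming edge (or represents more than one hyperedge), the set~$E$ genuinely acquires a new strict-superset relationship that the old node cannot encode, forcing the new node~$Y$; and conversely that the ``reuse'' branch never merges two distinct intersection sets. Establishing that these constant-time tests are both sound and complete — that they never miss a required structural update and never perform a spurious one — is the crux, and it is what ties the correctness of the $\calO(N)$ implementation to the tree structure guaranteed by Lemma~\ref{lem:bachmanTree}.
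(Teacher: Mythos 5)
Your plan follows the paper's proof essentially step for step: an induction over the pruning sequence through the same four cases (vertex/hyperedge, twin/pendant), with the invariant that the nodes of~$B$ are in bijection with~$\calX$ via~$\bbV$ and that the guards on $|\Phi(X)|$, $|X|$, and the in-/out-degree of~$X$ exactly decide when a new node must be spliced in, plus constant-time bookkeeping per iteration for the linear bound. One small imprecision: for a node~$X$ with $\Phi(X) = \emptyset$ the invariant must read $\bbV(X) = \bigcap \{\, E \mid \phi(E) \leadsto X \,\}$ rather than $\bigcap_{E \in \Phi(X)} E$, but your parenthetical remark that the nodes of~$B$ are exactly the non-empty intersections in~$\calX$ shows you intend exactly the paper's formulation.
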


\begin{proof}[Correctness]
We start by showing that $B$ forms a tree.
Algorithm~\ref{algo:bachman} starts constructing~$B$ with a single node (line~\ref{line:bach_firstV}).
Whenever the algorithm adds a new node to~$B$ (line~\ref{line:bach_pendV_newSet} and line~\ref{line:bach_pendE_newSet}), it is incident to exactly one edge.
Additionally, no other edge is ever added to or removed from~$B$.
Therefore, $B$ forms a tree.

To show that $B$ is a simplified Bachman diagram for~$H$, we show that it satisfies the following two properties:
\begin{enumerate}[(1)]
    \item
        \label{prop:bachman_oneXforV}
        For each vertex~$v$ of~$H$, $B$ contains exactly one node~$X$ with $v \in X$; additionally, $\psi(v) = X$.
    \item
        \label{prop:bachman_bijection}
        There is a bijection~$f$ mapping $\calX$ onto the nodes of~$B$ such that
        \begin{enumerate*}[(a), ref=(\arabic{enumi}.\theenumii)]
            \item
                \label{prop:bachman_bij_f}
                $f(\frkX) = X$ if and only if $\frkX = \bbV(X)$, and
            \item
                \label{prop:bachman_bij_phi}
                $\frkX = E$ for some hyperedge~$E$ implies $f(\frkX) = \phi(E)$.
        \end{enumerate*}
\end{enumerate}
Property~\ref{prop:bachman_bijection} ensures that the nodes of~$B$ represent the nodes of a Bachman diagram for~$H$.
Property~\ref{prop:bachman_oneXforV} then enforces that the nodes of~$B$ are connected properly.
Without it, one could satisfy \ref{prop:bachman_bijection} by constructing a graph~$B = (\calX, \emptyset)$.
Additionally, since $B$ forms a tree, it does not contain transitive edges.

Observe that whenever a new vertex~$v$ is added (lines \ref{line:bach_firstV}, \ref{line:bach_twinV_addV}, \ref{line:bach_pendV_addV}, and~\ref{line:bach_pendV_newSet}), Algorithm~\ref{algo:bachman} adds it into a node~$X$ and sets~$\psi(v)$ accordingly.
In the case that an existing vertex~$v$ is added into a new node~$Y$ (line~\ref{line:bach_pendE_newSet}), the algorithms removes it from its previous node~$X$ and updates~$\psi(v)$ accordingly.
Therefore, the graph~$B$ constructed by Algorithm~\ref{algo:bachman} satisfies property~\ref{prop:bachman_oneXforV}.

In the remainder of this proof, we show that $B$ satisfies property~\ref{prop:bachman_bijection} via an induction over~$i$.
For that purpose, let $H_i$ denote the hypergraph formed by~$\calI_i$ and let $B_i$ be graph constructed after processing~$x_i$.
We also use subscript~$i$ to indicate that we refer to a version of a set, node, hyperedge, or function with respect to $B_i$ or~$H_i$; for larger expressions~$\varepsilon$, we may write $[ \varepsilon ]_i$.

Since $H_2$ has only one hyperedge and one vertex, $B_2$ (constructed in line~\ref{line:bach_firstV}) is clearly a simplified Bachman diagram for~$H_2$ and satisfies property~\ref{prop:bachman_bijection}.
In the following, we therefore assume that $i \geq 3$ and that $B_{i - 1}$ satisfies property~\ref{prop:bachman_bijection}.
We distinguish between four possible cases for~$x_i$.

\paragraph{Case~\ref{case:xTwinV}: \( x_i \) represents a vertex~\( v \in V \) and is a false twin in~\( \calI_i \).}
Let $u$ be the vertex represented by a twin of~$x_i$.
Since $u$ and~$v$ are twins, it follows that $v \in E_i$ if and only if $u \in E_{i - 1}$ for each hyperedge~$E$ of~$H_i$
Subsequently, the only change to~$\calX$ is that $v$ is added to the sets~$\frkX$ which contain~$u$.
That is, for each $\frkX \in \calX_i$, $\frkX_i = \frkX_{i - 1}$ if $u \notin \frkX_{i - 1}$ and $\frkX_i = \frkX_{i - 1} \cup \{ u \}$  if $u \in \frkX_{i - 1}$.
Observe that the algorithm neither adds any nodes nor any edges to the graph.
It only adds $v$ into~$\psi(u)_{i - 1}$.
Hence, for each node~$X$ of~$B_i$, $\bbV(X)_i = \bbV(X)_{i - 1}$ if $u \notin \bbV(X)_{i - 1}$ and $\bbV(X)_i = \bbV(X)_{i - 1} \cup \{ v \}$ if $u \in \bbV(X)_{i - 1}$.
Therefore, $B_i$ satisfies property~\ref{prop:bachman_bijection}.

\paragraph{Case~\ref{case:xTwinE}: \( x_i \) represents a hyperedge~\( E \in \calE \) and is a false twin in~\( \calI_i \).}
Recall that we defined Bachman diagrams and the family~$\calX$ in such a way that $\calX$ does not contain two equal sets, even if $H$ contains multiple equal hyperedges.
Hence, adding a hyperedge~$E$ which is equivalent to an existing hyperedge~$E'$ does neither change~$\calX$ nor any of the sets contained in it.
It follows that setting~$\phi(E)_i = \phi(E')_{i - 1}$ is the only change needed for~$B_i$ to satisfy property~\ref{prop:bachman_bijection} (otherwise $B_i$ would violate~\ref{prop:bachman_bij_phi}).
Algorithm~\ref{algo:bachman} does exactly that in line~\ref{line:bach_twinE_setPhi}.

\paragraph{Case~\ref{case:xPendV}: \( x_i \) represents a vertex~\( v \in V \) and is pendant in~\( \calI_i \).}
Let $E$ be the hyperedge represented by the neighbour of~$x_i$ in~$\calI_i$ (\ie,  $E_i = E_{i - 1}$), and let $X = \phi_{i - 1}(E)$.
Assume that, for each hyperedge~$E'$ of~$H_{i - 1}$ which is distinct from~$E$,  $E_{i - 1} \nsubseteq E'_{i - 1}$.
In that case, $\bigl| \Phi(X) \bigr|_{i - 1} = 1$ and $X$ has incoming edge in~$B_{i - 1}$.
(As result, Algorithm~\ref{algo:bachman} calls line~\ref{line:bach_pendV_addV}.)
Since $v$ is only added into~$E$, $\calX_i$ is almost identical to $\calX_{i - 1}$ except that the set~$\frkX$ which represents~$E$ now contains~$v$.
Because $X$ has incoming edge in~$B_{i - 1}$, adding $v$ into it (line~\ref{line:bach_pendV_addV}) does not affect other nodes.
In particular, $\bbV(Y)_i = \bbV(Y)_{i - 1}$ for all nodes~$Y$ of~$B_i$ which are distinct from~$X$, and $\bbV(X)_i = \bbV(X)_{i - 1} \cup \{ v \}$.
Therefore, $B_i$ satisfies property~\ref{prop:bachman_bijection}.

Assume now that $H_{i - 1}$ contains a hyperedge~$E'$ distinct from~$E$ with $E_{i - 1} \subseteq E'_{i - 1}$.
In that case, $\bigl[ \phi(E') \leadsto X \bigr]_{i - 1}$ and, thus,  $\bigl| \Phi(X) \bigr|_{i - 1} > 1$ (if $\phi(E')_{i - 1} = X$) or $X$ has incoming edge in~$B_{i - 1}$.
(As result, Algorithm~\ref{algo:bachman} calls line~\ref{line:bach_pendV_newSet}.)
Since $v$ is only added into~$E$ but not~$E'$, $E_i \nsubseteq E'_i$.
However, for all $\calE' \subseteq \calE_i$ with $E \in \calE'$ and $|\calE'| > 1$, $\Bigl[ \bigcap_{E \in \calE'} E \Bigr]_i = \Bigl[ \bigcap_{E \in \calE'} E \Bigr]_{i - 1}$.
Therefore, $\calX_i = \calX_{i - 1} \cup \{ \frkY \}$ with $\frkY_i = E_i$.
For each $\frkX \in \calX_{i - 1}$, let $f_i(\frkX) = f_{i - 1}(\frkX)$.
Additionally, let $f_i(\frkY) = Y$ where $Y = \{ v \}$ is the node added to~$B$ in line~\ref{line:bach_pendV_newSet}.
Thus, $f_i$ is a bijection mapping $\calX_i$ onto the nodes of~$B_i$.
Since the added edge~$(Y, X)$ points towards~$X$, $\bbV(Z)_i = \bbV(Z)_{i - 1}$ for all nodes~$Z$ of~$B_{i - 1}$ and $\bbV(Y)_i = \frkY$.
Hence, $B_i$ satisfies property~\ref{prop:bachman_bij_f}.
Additionally, since the algorithm also sets $\phi(E)_i = Y$, $B_i$ also satisfies property~\ref{prop:bachman_bij_phi}.

\paragraph{Case~\ref{case:xPendE}: \( x_i \) represents a hyperedge~\( E \in \calE \) and is pendant in~\( \calI_i \).}
Let $v$ be the vertex represented by the neighbour of~$x_i$ in~$\calI_i$ (\ie, $E_i = \{ v \}$), and let $X = \psi(v)_{i - 1}$.
Assume that $\calX_{i - 1}$ contains a set~$\frkX$ with $\frkX_{i - 1} = \{ v \}$.
In that case, adding~$E$ does neither change~$\calX$ nor any of the sets contained in it.
Additionally,  $|X_i| = 1$ and $X$ has no outgoing edges in~$B_{i - 1}$.
It follows that setting~$\phi(E)_i = X$ is the only change needed for~$B_i$ to satisfy property~\ref{prop:bachman_bijection} (similar to case~\ref{case:xTwinE}).
Algorithm~\ref{algo:bachman} does exactly that in line~\ref{line:bach_pendE_setPhi}.

Assume now that, for each set~$\frkX \in \calX_{i - 1}$, $\frkX_{i - 1} \neq \{ v \}$.
In that case, $\calX_i = \calX_{i - 1} \cup \{ \frkY \}$ with $\frkY_i = E_i = \{ v \}$.
Additionally,  $|X_i| > 1$ or $X$ has an outgoing edge in~$B_{i - 1}$.
Let $f_i(\frkX) = f_{i - 1}(\frkX)$ for each $\frkX \in \calX_{i - 1}$, and let $f_i(\frkY) = Y$ where $Y = \{ v \}$ is the node added to~$B$ in line~\ref{line:bach_pendE_newSet}.
Thus, $f_i$ is a bijection mapping $\calX_i$ onto the nodes of~$B_i$.
Note that Algorithm~\ref{algo:bachman} (in line~\ref{line:bach_pendE_newSet}) moves~$v$ from node~$X$ into the new node~$Y$.
However, since the added edge~$(X, Y)$ points towards~$Y$, $\bbV(Z)_i = \bbV(Z)_{i - 1}$ for all nodes~$Z$ of~$B_{i - 1}$ and $\bbV(Y)_i = \frkY$.
Therefore, due to the algorithm setting $\phi(E)_i = Y$, $B_i$ satisfies property~\ref{prop:bachman_bijection}.
\qed
\end{proof}

\begin{proof}[Complexity]
One can compute a pruning sequence for a given distance-hereditary graph in linear time~\cite{DamiHabiPaul2001} and, thus, for~$\calI(H)$ (line~\ref{line:bach_pruning}) in $\calO(N)$ time.
Creating $B$ and adding the first node (lines~\ref{line:bach_emptyB} and~\ref{line:bach_firstV}) can then be done in constant time.
For each node~$X$ of~$B$, we create two lists.
One stores the vertices in~$X$ and one the hyperedges in~$\Phi(X)$.
For the functions $\phi$ and~$\psi$, we store the node~$X$ they map on and a reference to where the hyperedge or vertex is stored in the corresponding list of~$X$.
That way, we can perform each of the following operations in constant time:
adding new nodes and edges into~$B$ (lines~\ref{line:bach_pendV_newSet} and~\ref{line:bach_pendE_newSet}),
assigning a hyperedge to a node and setting~$\phi$ (lines~\ref{line:bach_twinE_setPhi}, \ref{line:bach_pendE_setPhi}, and~\ref{line:bach_pendE_newSet}),
changing the assignment of a hyperedge to a different node and updating~$\phi$ (line~\ref{line:bach_pendV_newSet}),
adding a vertex into a node and setting~$\psi$ (lines \ref{line:bach_twinV_addV}, \ref{line:bach_pendV_addV}, and~\ref{line:bach_pendV_newSet}), and
moving a vertex from one node into another and updating~$\psi$ (line~\ref{line:bach_pendE_newSet}).
Therefore, each iteration of the loop starting in line~\ref{line:bach_loop} run in constant time and, subsequently, Algorithm~\ref{algo:bachman} run in overall linear time.
\qed
\end{proof}

\begin{lemma}
\label{lem:BachmanDegree}
Each node~\( X \) of~\( B \) with \( \Phi(X) = \emptyset \) has an in-degree of at least~\( 2 \).
\end{lemma}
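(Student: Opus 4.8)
Let me first parse what needs proving. We have a simplified Bachman diagram $B$ computed by Algorithm~\ref{algo:bachman}. The nodes of $B$ correspond (via the bijection $f$ from Lemma~\ref{lem:bachmanAlgo}) to the elements $\frkX \in \calX$, where each $\frkX$ is the intersection of some subfamily of hyperedges. The claim is: if a node $X$ has $\Phi(X) = \emptyset$ — meaning no hyperedge of $H$ maps to $X$, i.e. the corresponding intersection $\frkX$ is not itself equal to any single hyperedge $E$ — then $X$ has in-degree at least $2$ in $B$.

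Let me think about what this means structurally. The Bachman diagram is the Hasse diagram (transitive reduction) of the partial order of distinct intersections $\calX$ ordered by $\supseteq$, and by Lemma~\ref{lem:bachmanTree} it's a tree. An edge goes from $\frkX$ to $\frkY$ when $\frkX \supset \frkY$ with nothing strictly between. In-degree of $X = f(\frkX)$ counts nodes $\frkW$ with $\frkW \supset \frkX$ and nothing between — i.e., the elements covering $\frkX$ from above.

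Now why would $\Phi(X) = \emptyset$ force in-degree $\geq 2$? Because $\frkX \in \calX$ means $\frkX = \bigcap_{E \in \calE'} E$ for some subfamily $\calE'$. If $\frkX$ is not itself a hyperedge, then $\frkX$ arises as a *proper* intersection: it's obtained by intersecting at least two hyperedges, and $\frkX$ is strictly contained in each. Intuitively, $\frkX$ must be a "meet" of at least two incomparable larger sets in the diagram, which should give two incoming edges.

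The plan is to argue in the underlying partial order rather than through the algorithm. By Lemma~\ref{lem:bachmanAlgo} the graph~$B$ is a genuine simplified Bachman diagram, so via the bijection~$f$ of property~\ref{prop:bachman_bijection} each node~$X$ corresponds to the element $\frkX = \bbV(X) \in \calX$, and the edges of~$B$ are exactly the cover relations of the poset~$(\calX, \supseteq)$. The in-degree of~$X$ in~$B$ therefore equals the number of \emph{upper covers} of~$\frkX$, that is, the minimal elements of~$\calX$ strictly containing~$\frkX$. The first step is to observe that $\Phi(X) = \emptyset$ forces $\frkX$ to not be a hyperedge: if $\frkX = E$ for some $E \in \calE$, then property~\ref{prop:bachman_bij_phi} would give $X = f(\frkX) = \phi(E)$ and hence $E \in \Phi(X)$, a contradiction.

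Next I would exploit that $\frkX$ is consequently a \emph{proper} intersection. Since $\frkX \in \calX$, taking $\calE'$ to be the family of all hyperedges containing~$\frkX$ yields $\frkX = \bigcap_{E \in \calE'} E$ with $\calE'$ nonempty, and because $\frkX$ is not a hyperedge we have $\frkX \subsetneq E$ for every $E \in \calE'$. Each such~$E$ is itself a node of~$B$ (every hyperedge lies in~$\calX$) lying strictly above~$\frkX$; in particular $\frkX$ is not maximal in~$(\calX, \supseteq)$, so its in-degree is at least~$1$.

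The crux is ruling out in-degree exactly~$1$, and this is the step I expect to be the main obstacle to state cleanly. Suppose $\frkX$ had a unique upper cover~$\frkW \supsetneq \frkX$. Using a saturated chain from~$\frkX$ up to any node $\frkZ \supsetneq \frkX$ in the finite poset~$(\calX, \supseteq)$, the first element of that chain above~$\frkX$ is an upper cover of~$\frkX$ and hence must equal~$\frkW$; therefore $\frkZ \supseteq \frkW$. Applying this to each $E \in \calE'$ gives $E \supseteq \frkW$ for all of them, whence $\frkW \subseteq \bigcap_{E \in \calE'} E = \frkX$, contradicting $\frkW \supsetneq \frkX$. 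Thus the in-degree cannot be~$1$, and combined with the previous paragraph it must be at least~$2$. (One could alternatively induct on the construction in Algorithm~\ref{algo:bachman}, tracking how a node loses its hyperedge when $\phi$ is reassigned in line~\ref{line:bach_pendV_newSet}, but the order-theoretic argument avoids that bookkeeping.)
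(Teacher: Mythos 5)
Your proof is correct and follows essentially the same route as the paper's: both arguments work order-theoretically in the Bachman diagram, first excluding in-degree~$0$ because $\frkX$ must lie strictly below some hyperedge (as $\Phi(X) = \emptyset$ rules out $\frkX$ being a hyperedge itself), and then excluding in-degree~$1$ by exhibiting a hyperedge above $\frkX$ that does not lie above the unique predecessor, using that $\frkX$ equals the intersection of all hyperedges containing it. The paper phrases this second step as finding $E \in \calE_X \setminus \calE_Y$ with a path to $X$ avoiding $Y$, which is exactly the contrapositive of your observation that a unique upper cover $\frkW$ would force $\frkW \subseteq \bigcap_{E \in \calE'} E = \frkX$.
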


\begin{proof}
We first assume that $X$ has in-degree~$0$.
Then, there is no hyperedge~$E$ with $\phi(E) \leadsto X$ and, subsequently, no such~$E$ with $\bbV(X) \subseteq E$.
That contradicts with the definition of Bachman diagrams.

Now assume that $X$ has at least one incoming edge $(Y, X)$.
Let $\calE_X = \bigl \{ \, E \bigm| \phi(E) \leadsto X \, \bigr \}$ and $\calE_Y = \bigl \{ \, E \bigm| \phi(E) \leadsto Y \, \bigr \}$.
Since $\bigcap_{E \in \calE_X} E = \bbV(X) \subset \bbV(Y) = \bigcap_{E \in \calE_Y}$, there is a hyperedge~$E \in \calE_X \setminus \calE_Y$ with $\phi(E) \leadsto X$ and $\phi(E) \not\leadsto Y$.
Hence, since $\phi(E) \neq X$, there is a path from $\phi(E)$ to~$X$ in~$B$ that does not contain~$Y$ and, therefore, $X$ has an in-degree of at least~$2$.
\qed
\end{proof}

\begin{algorithm}[!htb]
\caption
{%
    Computes a subset graph for a given $\gamma$-acyclic hypergraph.
}
\label{algo:gammaSubset}

\KwIn
{
    A $\gamma$-acyclic hypergraph~$H = (V, \calE)$.
}

\KwOut
{
    A subset graph~$G$ for~$H$.
}

Compute a simplified Bachman diagram~$B$ for $H$ with the corresponding functions $\phi$ and~$\Phi$ (see Algorithm~\ref{algo:bachman}).
\label{line:gammaSubset_bachman}

Create a new directed graph~$G = (\calE, \emptyset)$.
\label{line:gammaSubset_createG}

\ForEach
{%
    \( E \in \calE \)
}
{%
    Let $X = \phi(E)$.
    Compute $\bbE_X =  \bigcup_{Y \leadsto X} \Phi(Y)$.
    \label{line:gammaSubset_findEdges}

    For each $E' \in \bbE_X$ distinct from~$E$, add the edge~$(E, E')$ to~$G$.
    \label{line:gammaSubset_addEdge}
}
\end{algorithm}

\begin{theorem}
Algorithm~\ref{algo:gammaSubset} computes the subset graph~\( G \) of a given \( \gamma \)-acyclic hypergraph in \( \calO \bigl( N + |G| \bigr) \) time.
\end{theorem}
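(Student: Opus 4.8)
The plan is to establish correctness and then account for the running time, relying on the observation stated just before Algorithm~\ref{algo:gammaSubset}: for two hyperedges $E$ and~$E'$ of~$H$, we have $E' \subseteq E$ if and only if $\phi(E) \leadsto \phi(E')$ in the simplified Bachman diagram~$B$. For correctness, I would argue that the algorithm adds the edge~$(E, E')$ to~$G$ exactly when $E \subseteq E'$. By line~\ref{line:gammaSubset_findEdges}, the set~$\bbE_X$ collects every hyperedge~$E'$ with $\phi(E') = Y$ for some~$Y$ with $Y \leadsto X = \phi(E)$. Reading the observation in the direction $\phi(E') \leadsto \phi(E)$, membership $E' \in \bbE_X$ is equivalent to $E \subseteq E'$, so line~\ref{line:gammaSubset_addEdge} adds precisely the correct edges. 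Lemma~\ref{lem:bachmanAlgo} guarantees that the $B$ computed in line~\ref{line:gammaSubset_bachman} is a genuine simplified Bachman diagram, so the observation applies.

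For the running time, the preprocessing is cheap: computing~$B$ with its functions $\phi$ and~$\Phi$ takes $\calO(N)$ time by Lemma~\ref{lem:bachmanAlgo}, and initialising $G$ in line~\ref{line:gammaSubset_createG} takes $\calO(m)$ time. The core of the analysis is the loop. First I would note that the edges added across all iterations are exactly the edges of~$G$, and by correctness each is added once, so the total cost of line~\ref{line:gammaSubset_addEdge} is $\calO(|G|)$. The delicate point is line~\ref{line:gammaSubset_findEdges}: for each~$E$ we must enumerate $\bbE_X = \bigcup_{Y \leadsto X} \Phi(Y)$, and a naive ancestor-walk would re-traverse paths and could exceed the target bound when many intermediate nodes~$Y$ carry no hyperedges (that is, $\Phi(Y) = \emptyset$).

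This is the step I expect to be the main obstacle, and the key tool for overcoming it is Lemma~\ref{lem:BachmanDegree}: every node~$X$ with $\Phi(X) = \emptyset$ has in-degree at least~$2$. Since $B$ is a tree (from the correctness part of Lemma~\ref{lem:bachmanAlgo}), viewing edges as oriented from descendants toward ancestors, each node has a unique out-neighbour, and a node with $\Phi = \emptyset$ must have at least two in-edges, hence at least two children. The plan is to use this branching property to bound the number of ``empty'' nodes visited during the ancestor-enumeration: in a tree, the nodes with two or more children number at most one less than the number of leaves, and every leaf of~$B$ must carry a hyperedge. I would therefore charge each visited empty node against the hyperedge-bearing descendants that force its branching, so that the total traversal work in line~\ref{line:gammaSubset_findEdges}, summed over all~$E$, is $\calO(|G| + N)$ rather than proportional to the raw path lengths. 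Combining the $\calO(N)$ preprocessing with this $\calO(N + |G|)$ bound for the loop gives the claimed $\calO(N + |G|)$ total running time.
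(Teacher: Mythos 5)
Your proposal is correct and follows essentially the same route as the paper: correctness via the reachability characterisation in the simplified Bachman diagram, and the running-time bound for line~\ref{line:gammaSubset_findEdges} via Lemma~\ref{lem:BachmanDegree} together with the standard tree fact that nodes with at least two children are outnumbered by leaves, each of which carries a hyperedge. The paper merely packages the same counting per node~$X$ (showing $|T_X| \in \Theta\bigl(|\bbY_{\!X}|\bigr)$ and hence that each $\bbE_X$ is enumerated in $\calO\bigl(|\bbE_X|\bigr)$ time), whereas you phrase it as a global charging argument; the two are equivalent.
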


\begin{proof}[Correctness]
Let $E$ and~$E'$ be two distinct hyperedges of~$H$.
By definition of (simplified) Bachman diagrams, $B$ (computed in line~\ref{line:gammaSubset_bachman}) contains two nodes $X = \phi(E)$ and~$Y = \phi(E')$ such that $Y \leadsto X$ if and only if $E \subseteq E'$.
Additionally, Algorithm~\ref{algo:gammaSubset} adds the edge~$(E, E')$ to~$G$ (line~\ref{line:gammaSubset_addEdge}) if and only if $Y \leadsto X$.
Therefore, for any distinct hyperedges $E$ and~$E'$ of~$H$, $(E, E')$ is an edge of~$G$ if and only if $E \subseteq E'$.
\qed
\end{proof}

\begin{proof}[Complexity]
Computing the simplified Bachman diagram~$B$ (line~\ref{line:gammaSubset_bachman}) can be done in $\calO(N)$ time (Lemma~\ref{lem:bachmanAlgo}).
Creating the graph~$G$ (line~\ref{line:gammaSubset_createG}) can be done in $\calO(m)$ time.
Additionally, once the sets~$\bbE_X$ are known for all~$X$, we can add the edges of~$G$ (line~\ref{line:gammaSubset_addEdge}) in $\calO \bigl( |G| \bigr)$ total time.
It remains to show that we can compute the sets~$\bbE_X$ in the desired runtime.
To do that, we show that we can compute $\bbE_X$ for a given~$X$ in $\calO \bigl( |\bbE_X| \bigr)$ time.

Recall that $B$ is a directed graph which forms a tree (Lemma~\ref{lem:bachmanTree}).
Hence, the the nodes of~$B$ from which there is a path to~$X$ form a tree~$T_X$ rooted in~$X$ where each edge points from a child to its parent.
One can compute $T_X$ in $\calO \bigl( |T_X| \bigr)$ time by, for example, reversing the edges of~$B$ and then performing a BFS or DFS starting at~$X$.

Assume that we partition the nodes of~$T_X$ into two sets $\bbY_{\!X}$ and~$\bbZ_X$ where $\bbY_{\!X} = \bigl \{ \, Y \bigm| \Phi(Y) \neq \emptyset \, \}$ and~$\bbZ_X$ contains all remaining nodes.
It follows from Lemma~\ref{lem:BachmanDegree} that each node~$Y$ of~$T_X$ with at most one child (including leaves) is in~$\bbY_{\!X}$, and each node in~$\bbZ_X$ has at least two children.
Now assume that we, step by step, remove each node~$Y$ from~$T_X$ which has exactly one child~$Y'$, and make $Y'$ the child of $Y$'s parent.
Let $T'_X$ be the resulting tree.
Each node of~$T'_X$ then has at least two children.
Thus, at least half of the nodes of~$T'_X$ are leaves.
Since each leaf is in~$\bbY_{\!X}$ and $T'_X$ contains all nodes in~$\bbZ_X$, it follows that $|\bbZ_X| \leq |\bbY_{\!X}|$ and, subsequently, $|T_X| \in \Theta \bigl( |\bbY_{\!X}| \bigr)$.

Recall that $\Phi(Y) \neq \emptyset$ for all $Y \in \bbY_{\!X}$ and that each hyperedge of~$H$ is associated with at most one such~$Y$.
It follows that $|\bbY_{\!X}| \leq |\bbE_X|$.
Therefore, we can compute $\bbE_X$ for a given~$X$ in $\calO \bigl( |\bbE_X| \bigr)$ time, and line~\ref{line:gammaSubset_findEdges} runs in $\calO \bigl( |G| \bigr)$ total time.
\qed
\end{proof}

\section{Interval Hypergraphs}
\label{sec:interval}

An acyclic hypergraph~$H = (V, \calE)$ is an \emph{interval hypergraph} if it admits a join tree that forms a path.
That is, there is an order~$\sigma = \langle E_1, E_2, \ldots, E_m \rangle$ for the hyperedges of~$H$ such that, for each vertex~$v \in V$, $v \in E_i \cap E_j$ implies that $v \in E_k$ for all~$k$ with $i \leq k \leq j$.
Interval hypergraphs are closely related to interval graphs which are a subset of chordal graphs.
In particular, a graph is an interval graph if and only if its maximal cliques form an interval hypergraph, and an acyclic hypergraph is an interval hypergraph if and only if its 2-section graph is an interval graph.

Algorithm~9 in~\cite{HabMcCPauVie2000} allows to recognise interval hypergraphs in linear time.
It also produces an order~$\sigma$ as defined above.
Note that the first step of that algorithm is to compute a clique tree~$T$ and a vertex ordering~$\phi$ for a given graph.
We replace that step by first computing a join tree~$T$ of the given hypergraph and then perform Algorithm~10 from~\cite{HabMcCPauVie2000} to compute~$\phi$.

There are multiple ways to compute the subset graph and union join graph once $\sigma$ is known for a given hypergraph~$H$.
One may order the vertices of~$H$ based on the right-most hyperedge containing them (with respect to~$\sigma$).
Note that we can compute such an ordering in linear time from~$\sigma$.
Both orders together then form a doubly lexically order and allow to construct a $\Gamma$-free matrix for~$H$.
Note that Algorithm~\ref{algo:betaSubset} and the algorithm described in Theorem~\ref{theo:betaUnionJoin} only have a logarithmic overhead in runtime because they compute a doubly lexically order.
If such an order is given, both algorithm run in $\calO \bigl( N + |G| \bigr)$ time.

For an alternative approach, we first determine for each vertex~$v$ the index of the left-most hyperedge containing it (with respect to~$\sigma$).
Let $\phi(v)$ be that number, \ie, if $E_i$ is the left-most hyperedge containing~$v$, then $\phi(v) = i$.
Next, we compute the separators between consecutive hyperedges (see Algorithm~\ref{algo:modifyT}).
Let $S_i$ denote the separator between $E_{i - 1}$ and~$E_i$ and let $\phi(S_i) = \max_{v \in S_i} \phi(v)$.
Then, for each $E_j$ with $j < i$, it holds that (i)~$E_j \supseteq E_i$ if and only if $|E_i| = |S_i|$ and $j \geq \phi(S_i)$, and (ii)~$E_iE_j$ is an edge of the union join graph of~$H$ if and only if $j \geq \phi(S_i)$.
Running the same approach again using the reverse of~$\sigma$ therefore allows to compute the subset graph and union join graph in $\calO \bigl( N + |G| \bigr)$ time.

\begin{theorem}
There is an algorithm that computes the union join graph and subset graph of a given interval hypergraph in \( \calO \bigl( N + |G| \bigr) \) time, respectively, where \( |G| \) is the size of the computed graph.
\end{theorem}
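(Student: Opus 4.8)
The plan is to first compute, in $\calO(N)$ time, the path order $\sigma = \langle E_1, E_2, \ldots, E_m \rangle$ guaranteed by the definition of an interval hypergraph, using the modified recognition procedure described above. Everything else then reduces either to the $\beta$-acyclic machinery already built in Section~\ref{sec:betaAcyclic} or to a pair of linear sweeps over $\sigma$, and I would present both options.

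For the reduction to the $\beta$-acyclic case, I would argue that ordering the vertices of $H$ by the index of the right-most hyperedge containing them (with respect to $\sigma$) yields, together with $\sigma$, a doubly lexically ordered incidence matrix $M$. The contiguity property of interval hypergraphs makes the $1$-entries of each row occupy a consecutive block of columns, so sorting the columns by $\sigma$ and the rows by their right endpoint leaves $M$ both row- and column-monotone and hence $\Gamma$-free; in particular $H$ is $\beta$-acyclic by Lemma~\ref{lem:betaGammaFree}. Crucially, this ordering is read off from $\sigma$ in $\calO(N)$ time, bypassing the $\calO(N \log(n+m))$-time doubly-lexical sorting routine. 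Feeding it to Algorithm~\ref{algo:betaSubset} and to the union-join algorithm of Theorem~\ref{theo:betaUnionJoin} then costs only $\calO(N + |G|)$ time, since the logarithmic overhead in those results stems solely from computing the order.

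For the self-contained alternative I would verify the two index tests directly. With $S_i = E_{i-1} \cap E_i$, with $\phi(v)$ the index of the left-most hyperedge containing $v$, and with $\phi(S_i) = \max_{v \in S_i} \phi(v)$, one checks that $\phi(S_i)$ is exactly the smallest index $j$ for which $S_i \subseteq E_j$. For subset testing with $j < i$: by contiguity any common vertex of $E_i$ and $E_j$ lies in every $E_k$ with $j \le k \le i$, so $E_i \subseteq E_j$ forces $E_i \subseteq E_{i-1}$, i.e. $E_i = S_i$ and $|E_i| = |S_i|$; conversely, once $E_i = S_i$, the set is contained in $E_j$ precisely when $j \ge \phi(S_i)$. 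This is characterisation~(i). For the union join graph I would invoke property~\ref{item:edgeSeparates} of Lemma~\ref{lem:unionJoinProperties}: for $j < i$ any common vertex of $E_i$ and $E_j$ again lies in $E_{i-1}$, so $E_i \cap E_j = S_i$, and $E_i \cap E_j$ separates $E_i \setminus E_j$ from $E_j \setminus E_i$ exactly when $S_i \subseteq E_j$, i.e. when $j \ge \phi(S_i)$. This is characterisation~(ii). Sweeping once along $\sigma$ and once along its reverse then reports, in constant time per reported pair, every subset relation and every union-join edge, for $\calO(N + |G|)$ total time; the labels $\phi(v)$, the separators $S_i$, and the thresholds $\phi(S_i)$ are all computable in $\calO(N)$ time via Algorithm~\ref{algo:modifyT}.

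The main obstacle I anticipate is pinning down characterisation~(ii): translating the abstract separation statement of property~\ref{item:edgeSeparates} into the single threshold $j \ge \phi(S_i)$ requires the observation that on a path order the separator immediately to the left of $E_i$ is $S_i$, that $E_i \cap E_j$ collapses to $S_i$ for every $j < i$, and that the resulting edge is realisable in some join tree precisely along the range identified by $\phi(S_i)$. Coordinating the two directional sweeps so that no edge is missed or double-counted at the boundary $j = i$ is the delicate bookkeeping step; by contrast, establishing the $\calO(N)$ preprocessing bounds is routine given $\sigma$.
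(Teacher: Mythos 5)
Your first route --- reading a $\Gamma$-free ordering off $\sigma$ in $\calO(N)$ time and handing it to Algorithm~\ref{algo:betaSubset} and the algorithm of Theorem~\ref{theo:betaUnionJoin} --- is exactly the paper's first suggestion and is sound; it already suffices to prove the theorem. The gap is in your self-contained alternative, precisely at the step you flag as the main obstacle. From ``any common vertex of $E_i$ and $E_j$ lies in $E_{i-1}$'' you may only conclude $E_i \cap E_j \subseteq S_i$; the equality $E_i \cap E_j = S_i$ requires $S_i \subseteq E_j$, which is the very condition you are trying to characterise. As a consequence, characterisation~(ii) is not an ``if and only if'': the backward implication holds, but the forward one fails, and the two sweeps miss union join edges. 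Concretely, take the interval hypergraph with path order $E_1 = \{a,b,c,f\}$, $E_2 = \{a,b,c\}$, $E_3 = \{a,c,d\}$, $E_4 = \{a,c,d,e\}$. Here $E_1 \cap E_4 = \{a,c\} = E_2 \cap E_3$, so by property~\ref{item:intersectionOnPath} of Lemma~\ref{lem:unionJoinProperties} the edge $E_1E_4$ belongs to the union join graph (indeed $E_2 - E_1 - E_4 - E_3$ is a valid join tree). Yet $S_4 = \{a,c,d\} \not\subseteq E_1$, so the forward sweep rejects the pair ($\phi(S_4) = 3 > 1$), and in the reversed order the separator next to $E_1$ is $\{a,b,c\} \not\subseteq E_4$, so the reverse sweep rejects it as well. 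The witnessing separator of property~\ref{item:edgesOnBothSideOfSep} may lie strictly between the two hyperedges, adjacent to neither endpoint; your threshold test only detects it when it coincides with the separator immediately next to the later (or, in the reverse sweep, the earlier) hyperedge.

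Characterisation~(i) for the subset graph is unaffected: $E_i \subseteq E_j$ with $j < i$ really does force $E_i \subseteq E_{i-1}$, hence $E_i = S_i$, and then the threshold $\phi(S_i)$ is correct. To repair the union join computation, either rely on your first route, or observe that interval hypergraphs are $\beta$-acyclic and run Algorithm~\ref{algo:alphaUnionJoin} on the separator hypergraph (Lemma~\ref{lem:betaSeparator} applies), which correctly handles separators realised in the middle of the path. A single per-pair threshold comparison against $\phi(S_i)$ does not.
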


\section*{Acknowledgements}
We would like to thank Feodor F.\ Dragan and Rachel Walker for stimulating discussions.

\end{document}